\documentclass[11pt]{article}

\usepackage{times}
\usepackage{fullpage}
\usepackage{microtype,pdfsync}
\usepackage{amsmath,amsfonts,amssymb,amsthm}
\usepackage{url}
\usepackage{enumitem}
\usepackage{xspace}
\usepackage{hyperref}
\usepackage{algorithm}
\usepackage[noend]{algorithmic}



\newcommand{\Q}{\ensuremath{\mathbb{Q}}}
\newcommand{\R}{\ensuremath{\mathbb{R}}}

\newcommand{\Z}{\ensuremath{\mathbb{Z}}}


\newcommand{\pr}[2]{\left\langle#1, #2\right\rangle}

\newcommand{\abs}[1]{\lvert{#1}\rvert}
\newcommand{\absfit}[1]{\left\lvert{#1}\right\rvert}
\newcommand{\set}[1]{\{{#1}\}}





\newcommand{\floor}[1]{\lfloor{#1}\rfloor}


\newcommand{\length}[1]{\lVert{#1}\rVert}


\newcommand{\veca}{\ensuremath{\mathbf{a}}}
\newcommand{\vecb}{\ensuremath{\mathbf{b}}}
\newcommand{\vecc}{\ensuremath{\mathbf{c}}}

\newcommand{\vecq}{\ensuremath{\mathbf{q}}}

\newcommand{\vect}{\ensuremath{\mathbf{t}}}

\newcommand{\vecv}{\ensuremath{\mathbf{v}}}
\newcommand{\vecw}{\ensuremath{\mathbf{w}}}
\newcommand{\vecx}{\ensuremath{\mathbf{x}}}
\newcommand{\vecy}{\ensuremath{\mathbf{y}}}
\newcommand{\vecz}{\ensuremath{\mathbf{z}}}
\newcommand{\veczero}{\ensuremath{\mathbf{0}}}

\theoremstyle{plain}            
\newtheorem{theorem}{Theorem}[section]
\newtheorem{lemma}[theorem]{Lemma}

\theoremstyle{definition}       
\newtheorem{definition}[theorem]{Definition}

\theoremstyle{remark}           

\numberwithin{equation}{section}


\DeclareMathOperator{\poly}{poly}

\DeclareMathOperator*{\E}{\mathbb{E}}








\newcommand{\lat}{\mathcal{L}}








\DeclareMathOperator{\vol}{vol}









































\newif\ifnotes\notesfalse

\ifnotes

\newcommand{\dnote}[1]{{\bf (Daniel:} {#1}{\bf ) }}
\newcommand{\gnote}[1]{{\bf (Gabor:} {#1}{\bf )}}

\else

\newcommand{\dnote}[1]{}
\newcommand{\gnote}[1]{}

\fi

\renewcommand{\epsilon}{\varepsilon}
\newcommand{\eps}{\epsilon}

\DeclareMathOperator{\SVP}{SVP}
\DeclareMathOperator{\CVP}{CVP}

\DeclareMathOperator{\argmin}{arg\,min}
\def\minuszero{\setminus \set{\veczero}}

\makeatletter
\def\imod#1{\allowbreak\mkern10mu({\operator@font mod}\,\,#1)}
\makeatother

\title{Lattice Sparsification and the Approximate Closest Vector Problem}

\author{
  Daniel Dadush\thanks{Georgia Tech. Atlanta, GA, USA. Email: \texttt{dndadush@gatech.edu}}
  \and
  G\'abor Kun\thanks{Courant Institute for Mathematical Sciences, NY, USA. Email: \texttt{kungabor@cs.elte.hu}}
}

\begin{document}

\maketitle
\thispagestyle{empty}

\begin{abstract}
  We give a deterministic algorithm for solving the $(1+\eps)$ approximate Closest Vector Problem (CVP) on any n dimensional lattice and any norm in
$2^{O(n)}(1+1/\eps)^n$ time and $2^n\poly(n)$ space. Our algorithm builds on the lattice point enumeration techniques of Micciancio and Voulgaris (STOC 2010)
and Dadush, Peikert and Vempala (FOCS 2011), and gives an elegant, deterministic alternative to the ``AKS Sieve'' based algorithms for $(1+\eps)$-CVP
(Ajtai, Kumar, and Sivakumar; STOC 2001 and CCC 2002).  Furthermore, assuming the existence of a $\poly(n)$-space and $2^{O(n)}$ time algorithm for
exact CVP in the $l_2$ norm, the space complexity of our algorithm can be reduced to polynomial.

Our main technical contribution is a method for ``sparsifying'' any input lattice while approximately maintaining its metric structure. To this end, we
employ the idea of random sublattice restrictions, which was first employed by Khot (FOCS 2003) for the purpose of proving hardness for Shortest
Vector Problem (SVP) under $l_p$ norms.


\end{abstract}

\newpage
\pagenumbering{arabic} 

\section{Introduction}
\label{sec:lat-introduction}

An $n$-dimensional lattice $\lat$ is $\set{\sum_{i=1}^n z_i \vecb_i: z_i \in \Z, i \in [n]}$ for some basis $\vecb_1,\dots,\vecb_n$ of $\R^n$. Given a
lattice $\lat$ and norm $\|\cdot\|$ in $\R^n$, the Shortest Vector Problem (SVP) is to find a shortest \emph{nonzero} $\vecv \in \lat$ under
$\|\cdot\|$. Given an additional target $t \in \R^n$, the Closest Vector Problem (CVP) -- the inhomogenous analog of SVP -- is to find a closest $\vecv
\in \lat$ to $\vect$. Here, one often works with the $\ell_2$ norm and other $\ell_{p}$ norms, or most generally, 
with norms (possibly asymmetric) induced by a convex body $K$ containing $0$ in its interior, defined by $\length{\vecx}_{K} = \inf \set{ s \geq 0 : \vecx \in sK }$.  

The SVP and CVP on lattices are central algorithmic problems in the geometry of numbers, with applications to Integer
Programming~\cite{lenstra83:_integ_progr_with_fixed_number_of_variab}, factoring polynomials over the rationals~\cite{lenstra82:_factor},
cryptoanalysis (e.g.,~\cite{odlyzko90:_rise_and_fall_of_knaps_crypt,DBLP:journals/joc/JouxS98,DBLP:conf/calc/NguyenS01}), and much more. For different
applications, one must often consider lattice problems expressed under a variety of norms. Decoding signals over a Gaussian channel is expressed as a
CVP under $\ell_2$~\cite{journal/TransIT/ViterboB99}, computing simultaneous diophantine approximations is generally expressed as an SVP under
$\ell_\infty$~\cite{journal/combinatorica/FrankT87}, Schnorr reduced factoring (under some unproven number theoretic assumptions) to an
SVP under the $\ell_1$ norm~\cite{conf/eurocrypt/Schnorr91}, the Frobenius problem can be expressed as a lattice problem under an asymmetric
simplicial norm~\cite{journal/combinatorica/Kannan92}, the Integer Programming problem reduces to lattice problems under general
norms~\cite{kannan87:_minkow_convex_body_theor_and_integ_progr,conf/focs/svp/DPV11}, etc. 

Much is known about the computational complexity of SVP and CVP, in both their exact and approximation versions. On the negative side, SVP is NP-hard
(in $\ell_{2}$, under randomized reductions) to solve exactly, or even to approximate to within any constant
factor~\cite{DBLP:conf/stoc/Ajtai98,DBLP:journals/jcss/CaiN99,DBLP:journals/siamcomp/Micciancio00,DBLP:journals/jacm/Khot05}. Many more hardness
results are known for other $\ell_{p}$ norms and under stronger complexity assumptions than P $\neq$ NP (see,
e.g.,~\cite{emde81:_anoth_np,DBLP:journals/tcs/Dinur02,DBLP:conf/stoc/RegevR06,DBLP:conf/stoc/HavivR07}). CVP is NP-hard to approximate to within
$n^{c/\log \log n}$ factors for some constant $c > 0$~\cite{DBLP:journals/jcss/AroraBSS97,DBLP:journals/combinatorica/DinurKRS03,DBLP:journals/tcs/Dinur02}, 
where $n$ is the dimension of the lattice. Therefore, we do not expect to solve (or even closely approximate) these problems efficiently in high dimensions.  
Still, algorithms providing weak approximations or having super-polynomial running times are the foundations for the many applications mentioned above.

Though the applications are often expressed using a variety of norms, the majority of the algorithmic work on SVP and CVP over the last quarter
century has focused on the important case of the $\ell_2$ norm. While there has been both tremendous practical and theoretical progress for $\ell_2$
based solvers, progress on more general norms has been much slower (we overview this history below). Illustrative of this, for most of the problems
mentioned above, the solution strategy has almost invariably been to approximate the problem via a reduction to $\ell_2$. In many cases, the desired
computational problem requires only a ``coarse'' approximate solution to the underlying lattice problem (e.g. where a $\poly(n)$ or even $2^{O(n)}$
factor approximation suffices), in which case approximation by $\ell_2$ is often sufficient. In some cases however, the errors induced by the
$\ell_2$ approximation can result in a substantial increase in worst case running time or yield unusable results. As an example, with respect to the
Integer Programming Problem (IP), in a sequence of works Dadush, Peikert and Vempala~\cite{conf/focs/svp/DPV11,thesis/D12} worked directly with norms induced by the
continuous relaxation -- avoiding direct ellipsoidal approximations -- to reduce the complexity of solving an $n$-variable IP from $2^{O(n)} n^{2n}$
(previous best using $\ell_2$ techniques~\cite{arxiv/HildebrandK10}) to $2^{O(n)} n^n$. From these considerations we see that the problem of
developing effective algorithms for solving the SVP and CVP under general norms is motivated. 

The algorithmic history of the SVP and CVP is long and rich. We relate the broad outlines here, highlighting the pertinent developments for general
norms, and refer the reader to the following references~\cite{MGBook, conf/cc/HPS11} for a more complete accounting. There are three main classes of
methods for solving lattice problems: basis reduction, randomized sieving, and Voronoi cell based search. 

{\bf Basis reduction} combines both local search on lattice bases and lattice point enumeration. The celebrated LLL basis reduction algorithm~\cite{lenstra82:_factor} and further extensions~\cite{DBLP:journals/combinatorica/Babai86,DBLP:journals/tcs/Schnorr87} give $2^{n /
\text{polylog}(n)}$ approximations to SVP and CVP under $\ell_{2}$ in $\poly(n)$ time.  General norm variants of basis reduction are explored
in~\cite{lovasz92,manuscript/KR95} and give similar approximation guarantees for SVP (though not CVP) as the $\ell_2$ versions. However, bounds on the time complexity were only proved for fixed dimension (when the running time is polynomial). For exact SVP and CVP in the $\ell_{2}$ norm, Kannan's algorithm and its subsequent improvements~\cite{kannan87:_minkow_convex_body_theor_and_integ_progr,journal/tcs/Helfrich85,conf/crypto/HanrotS07} use basis reduction techniques to deterministically compute solutions in $2^{O(n \log n)}$ time and $\poly(n)$ space. 

This performance remained essentially unchallenged until the breakthrough {\bf randomized ``sieving''} algorithm of Ajtai, Kumar, and
Sivakumar~\cite{DBLP:conf/stoc/AjtaiKS01}, which gave a $2^{O(n)}$-time and -space randomized algorithm for exact SVP under $\ell_2$.  The
randomized sieving approach consists of sampling an exponential number of ``perturbed'' lattice points, and then iteratively clustering and
combining them to give shorter and shorter lattice points.  Subsequently, the randomized sieve was greatly extended to yield solutions for
more general norms and for the more general problem of $(1+\eps)$-CVP. For exact SVP, the randomized sieve was extended (in the same time
complexity) to $\ell_p$ norms~\cite{DBLP:conf/icalp/BlomerN07}, arbitrary symmetric norms~\cite{DBLP:conf/fsttcs/ArvindJ08}, and to
``near-symmetric''~\footnote{An asymmetric norm with unit ball $K \subseteq \R^n$ is near-symmetric if $\vol_n(K) \leq 2^{O(n)} \vol_n(K
\cap - K)$.} norms\cite{conf/latin/cvp/D12}. For CVP, the randomized sieve was further used to give a $(\frac{1}{\eps})^n$-time and -space
algorithm for $(1+\eps)$-CVP under the $\ell_2$ norm~\cite{DBLP:conf/coco/AjtaiKS02,DBLP:conf/icalp/BlomerN07}, $\ell_p$
norms~\cite{DBLP:conf/icalp/BlomerN07}, and near-symmetric norms~\cite{conf/latin/cvp/D12}. We remark that near-symmetric norms appear
naturally in the context of Integer Programming: the problem of finding a lattice point near the ``center'' of the continuous relaxation
(which need not be symmetric) can be directly expressed as a CVP under a near-symmetric norm~\cite{conf/latin/cvp/D12}. Lastly, for the
specific case of $\ell_\infty$, Eisenbrand, H\"{a}hnle and Niemeier~\cite{conf/socg/EisenbrandHN11} show that $(1+\eps)$-CVP under
$\ell_\infty$ can be solved using $O(\ln \frac{1}{\eps})^n$ calls to any $2$-approximate solver via an elegant cube covering technique. It
is worth noting that AKS sieve based algorithms are \emph{Monte Carlo}: while they output correct solutions (i.e.~a shortest or closest
vectors) with high probability, the correctness is not guaranteed.

In a major breakthrough, Micciancio and Voulgaris~\cite{DBLP:conf/stoc/MicciancioV10} gave a \emph{deterministic} $2^{O(n)}$-time and -space algorithm
for exact SVP and CVP under the $\ell_{2}$ norm using the {\bf Voronoi cell} of a lattice. The Voronoi cell, the symmetric polytope consisting of all points in
space closer to the origin (under $\ell_2$) than any other lattice point, is represented algorithmically here by $O(2^n)$ lattice points
corresponding to the facets of the Voronoi cell (known as Voronoi relevant vectors). The relevant vectors form an ``extended basis'' for the lattice
which Micciancio and Voulgaris (MV) use to efficiently guide closest lattice point search. Though it is tempting to try and directly extend the MV
techniques to other norms this appears to be quite challenging. A major difficulty is that for general norms the Voronoi cell need not be convex, and
furthermore no good bounds are known for the number of relevant vectors. In a subsequent work however, Dadush, 
Peikert and
Vempala~\cite{conf/focs/svp/DPV11} showed that MV lattice point search techniques can, in a qualified sense, be extended to general norms (in fact, to
general convex bodies) via a direct reduction to $\ell_2$. Combining a technique for constructing ``efficient'' ellipsoid coverings -- using the
M-Ellipsoid concept from convex geometry -- together with Voronoi cell based search, they showed that the lattice points inside a convex body can be
computed in time proportional to the maximum number of lattice points the body can contain in any translation. With some further
improvements~\cite{arxiv/m-ell/DV12,thesis/D12}, the DPV lattice point enumeration technique was used to give the first deterministic $2^{O(n)}$-time
and -space algorithms for SVP and Bounded Distance Decoding (BDD)~\footnote{BDD is CVP when the distance to the target is guaranteed to be at most
some factor times the minimum distance of the lattice.} under near-symmetric norms. 

Despite all the recent progress, the only algorithms currently available for solving $(1+\eps)$-CVP under non-euclidean norms remain the AKS sieve
based approaches. In this light, a main open problem from~\cite{conf/focs/svp/DPV11} was to understand whether the DPV lattice point enumeration
approach could be extended to work for $(1+\eps)$-CVP under general norms.
 
%
%
%

\subsection{Results and Techniques}
\label{sec:lat-results}

Our main result is as follows:

\begin{theorem}[Approximate CVP in any norm, informal]
\label{thm:approx-cvp-deterministic-informal}
There is a deterministic algorithm that, given any near-symmetric norm $\|\cdot\|_K$, $n$ dimensional lattice $\lat$, target $\vecx \in
\R^n$, and $0 < \eps \leq 1$, computes $\vecy \in \lat$, a $(1+\eps)$-approximate minimizer to $\|\vecy-\vecx\|_K$, in $(1+\frac{1}{\eps})^n
\cdot 2^{O(n)}$ time and $\tilde{O}(2^{n})$ space.
\end{theorem}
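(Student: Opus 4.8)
The plan is to reduce the core geometric problem, for a single well-chosen guess of the target distance, to the special case of a \emph{sparse} lattice --- one whose minimum distance is comparable to the target distance --- where a direct lattice-point enumeration already succeeds, and to realize this reduction by a lattice sparsification procedure based on random sublattice restrictions. To set up the guess, using basis reduction one computes in polynomial time a value $\hat d$ with $\hat d \le \dist(\vecx,\lat)\le 2^{O(n)}\hat d$ (all distances in $\length{\cdot}_K$), so it suffices to solve, for each $d$ in the geometric progression $\hat d,(1+\eps')\hat d,(1+\eps')^2\hat d,\dots$ up to $2^{O(n)}\hat d$ with $\eps'=\eps/3$, the following \emph{core problem}: given the promise $\dist(\vecx,\lat)\le d$, output some $\vecy\in\lat$ with $\length{\vecy-\vecx}_K\le(1+\eps')d$. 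There are only $O(n/\eps)=(1+1/\eps)^{O(n)}$ such values, so this inflates the running time by a negligible factor, and returning the best vector produced over all of them gives a $(1+\eps')^2\le(1+\eps)$ approximation.

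For the core problem, write $\bar K:=K\cap(-K)$. A diameter-plus-packing argument shows that whenever $\lambda_1(\lat,\bar K)\ge\eps d$, the set $\lat\cap(\vecx+(1+\eps)d\,K)$ has at most $(1+1/\eps)^n 2^{O(n)}$ elements: any two of its points differ by a lattice vector of $\bar K$-length $O(d)$, so they all lie in a single translate of $O(d)\bar K$, which by a volume bound on disjoint $\tfrac12\lambda_1(\lat,\bar K)\bar K$-balls contains at most $(O(d)/\lambda_1(\lat,\bar K))^n 2^{O(n)}\le(1/\eps)^n 2^{O(n)}$ lattice points. Hence in this ``sparse'' regime one simply enumerates $\lat\cap(\vecx+(1+\eps)d\,K)$ and returns the closest point (correct since $\dist(\vecx,\lat)\le d$), using the lattice-point enumerator of Micciancio--Voulgaris and Dadush--Peikert--Vempala, which via M-ellipsoid coverings and $\ell_2$ Voronoi-cell search lists $\lat\cap(\vecc+C)$ for a convex body $C$ in time $2^{O(n)}$ times the largest number of lattice points in any translate of $C$, and in $\Otil(2^n)$ space (near-symmetry of $K$ is precisely what makes the underlying convex-geometric counting estimates hold). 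The algorithm always runs this enumeration with a budget of $(1+1/\eps)^n 2^{O(n)}$ output points and aborts on overflow, which can happen only when $\lambda_1(\lat,\bar K)<\eps d$.

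The heart of the argument is the sparsification step handling $\lambda_1(\lat,\bar K)<\eps d$. Let $\vecy_0\in\lat$ be a closest vector and put $M:=\absfit{\eps d\,\bar K\cap\lat}$; since $\eps d>\lambda_1(\lat,\bar K)$ we have $M\ge 2$, and translating by $\vecy_0$ shows each of these $M$ lattice points lies within $(1+\eps)d$ of $\vecx$. A standard Minkowski-type count (comparing the number of lattice points in $\bar K$-balls of radii $O(d)$ and $\eps d$ through the successive minima) gives $N':=\absfit{\lat\cap(\vecx+(1+\eps)d\,K)}\le(1/\eps)^n 2^{O(n)}\cdot M$. Now fix a prime $q$ of order $M/2^{\Theta(n)}$ --- not knowing $M$ exactly, one searches for it over a polynomial-size geometric grid --- and pass to a random index-$q$ sublattice $\lat_\vecw:=\setfit{\vecv\in\lat:\langle\vecv,\vecw\rangle\equiv 0\imod q}$, where $\vecw$ is uniformly random nonzero in the quotient $\lat/q\lat\cong\F_q^n$ (Khot's random restriction). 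The core technical lemma asserts that, with the constants set appropriately, a constant fraction of $\vecw$ are \emph{good}, meaning simultaneously: (i) $\absfit{\lat_\vecw\cap(\vecx+(1+\eps)d\,K)}\le(1+1/\eps)^n 2^{O(n)}$, since each candidate survives with probability $\approx 1/q$, so the expected count drops to $N'/q=(1/\eps)^n 2^{O(n)}$ and a Markov bound finishes; and (ii) $\lat_\vecw$ still contains a vector within $(1+\eps)d$ of $\vecx$, since the $M\gtrsim q$ valid answers near $\vecy_0$ supply enough candidates that a Bonferroni / second-moment estimate over $\vecw$ keeps at least one of them. Given a good $\vecw$, enumerating $\lat_\vecw\cap(\vecx+(1+\eps)d\,K)$ within budget and outputting the closest point --- which lies in $\lat_\vecw\subseteq\lat$ --- solves the core problem.

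I expect the genuine difficulty to lie in the sparsification lemma and its derandomization. For completeness (ii), a random restriction destroys every close vector exactly when all candidates are scalar multiples of one another modulo $q$, so that a single bad $\F_q$-hyperplane kills them all; ruling this out requires controlling how $\lat\cap(\vecx+(1+\eps)d\,K)$ distributes among the lines of $\F_q^n$, using the choice of $q$ relative to $\lambda_1(q\lat,\bar K)$ and the geometry of $\eps d\,\bar K$, plus a separate treatment of the ``wrap-around'' lattice vectors lying in $q\lat$ (fixed by every restriction, and long precisely because $q$ is not too small). For the derandomization, the set of bad $\vecw$ is shown to be a sparse union of $\F_q$-hyperplanes, so a good $\vecw$ can be found deterministically --- by the method of conditional expectations or by exhausting an explicit polynomial-size family --- and one finally checks that this search, together with the geometric grids for $d$ and $q$, keeps the total running time at $(1+1/\eps)^n 2^{O(n)}$ and the space at $\Otil(2^n)$.
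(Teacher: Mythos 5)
Your overall architecture --- reduce to a ``sparse regime'' where $\lambda_1$ is comparable to $\eps d$ and direct enumeration succeeds, and otherwise pass to a random index-$q$ sublattice following Khot --- matches the paper's plan, and your cost analysis via the packing bound $G(dK,\lat')\le 2^{O(n)}(1+d/\lambda_1)^n$ is the right tool. The genuine gap, which you partially flag yourself, lies in the ``completeness'' half of the sparsification step: the claim that some vector near the target survives the random restriction. You propose a Paley--Zygmund / second-moment estimate on $X$, the number of close vectors landing in $\lat_{\vecw}$. But as you observe, this breaks down when the close vectors concentrate on a few $\F_q$-lines: if they all lie on one line, survival is an all-or-nothing event, $\Var(X)\approx q\cdot\E[X]^2$, and the second moment only yields $\Pr[X>0]\gtrsim 1/q$. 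The repair you sketch --- controlling the distribution of close vectors among $\F_q$-lines via $\lambda_1(q\lat,\bar K)$ --- is neither established nor the route the paper takes, and it is not clear it can be made to work since nothing prevents $\Theta(M)$ close vectors from projecting onto $O(1)$ lines.

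The paper avoids this entirely by proving a strictly stronger property of the sparsifier: for a suitable $\veca\in\Z_p^n$, the sublattice $\lat(\vecw)$ is additively $3t$-dense in $\lat$, i.e.~$d_K(\lat(\vecw),\vecy)\le 3t$ for \emph{every} lattice point $\vecy$, independently of the target. Correctness then holds even if every close vector happens to be destroyed. The mechanism is additive-combinatorial rather than a survival argument: the modulus $p$ is calibrated to $N=|tK\cap\lat|$ so that $N<p<\frac{4N}{3}$ (a prime-gap estimate, Lemma~\ref{lem:primegap}, supplies such $p$); a first/second-moment argument (Lemma~\ref{lem:good-vec}) shows some $\veca$ has at most $6$ zeros on $S=(tK\cap\lat)\bmod p\lat$ while its image $C=\set{\pr{\veca}{\vecy}:\vecy\in S}$ has size $\ge(p+2)/3$; and Cauchy--Davenport (Theorem~\ref{thm:sumset}) then gives $C+C+C=\Z_p$, so any $\vecy\in\lat$ decomposes as $\vecy=\vecz+\vecy_1+\vecy_2+\vecy_3$ with $\vecz\in\lat(\vecw)$ and each $\vecy_i\in tK\cap\lat$. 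This universal density property in turn streamlines the derandomization: since $N$ may be astronomically large, the paper builds the sparsifier iteratively from scale $\lambda_1$ upward, tripling at each step so that only $2^{O(n)}$ points are ever enumerated, and finds $\veca$ deterministically by projecting $S$ from $\Z_p^n$ down to $\Z_p^2$ one line at a time and then exhausting $\Z_p^2$ (Algorithms~\ref{alg:good-vector} and~\ref{alg:lattice-sparsifier}).
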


In the above theorem we extend the DPV lattice point enumeration techniques and give the first deterministic alternative to the AKS randomized sieving
approach. Compared to AKS, our approach also achieves a better dependence on $\eps$, $2^{O(n)}(1+\frac{1}{\eps})^n$ instead of
$2^{O(n)}(1+\frac{1}{\eps})^{2n}$, and utilizes significantly less space, $\tilde{O}(2^n)$ compared to $2^{O(n)}(1+\frac{1}{\eps})^n$. Additionally, as we
will discuss below, continued progress on exact CVP under $\ell_2$ could further reduce the space usage of the algorithm. We note however that the
$2^{O(n)}$ factors in the running time are currently much larger than in AKS, though little effort has been spent in trying to compute or optimize them. 
To explain our approach, we first present the main DPV enumeration algorithm in its most recent formulation~\cite{thesis/D12}. 

\begin{theorem}[Enumeration in Convex Bodies, informal]
\label{thm:enumeration-informal}
There is a deterministic algorithm that, given an $n$-dimensional convex body $K$ and lattice $\lat$, enumerates the elements of $K \cap \lat$ in time
$2^{O(n)} G(K,\lat)$ using $\tilde{O}(2^n)$ space, where $G(K,\lat) = \max_{\vecx \in \R^n} |(K+\vecx) \cap \lat|$. Furthermore, given an algorithm that
solves exact CVP under $\ell_2$ in $T(n)$ time and $S(n)$ space, $K \cap \lat$ can be enumerated in $2^{O(n)} T(n) G(K,\lat)$ time using $S(n) +
\poly(n)$ space.   
\end{theorem}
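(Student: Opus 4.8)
The plan is to follow the M-ellipsoid covering paradigm of~\cite{conf/focs/svp/DPV11}, combined with the Micciancio--Voulgaris Voronoi-cell algorithm~\cite{DBLP:conf/stoc/MicciancioV10}. First I would deterministically compute an \emph{M-ellipsoid} $E$ of $K$: an ellipsoid with covering numbers $N(K,E)\le 2^{O(n)}$ and $N(E,K)\le 2^{O(n)}$, using the deterministic construction of~\cite{arxiv/m-ell/DV12,thesis/D12}. From $N(K,E)\le 2^{O(n)}$ one gets explicit translates $E+\vecc_1,\dots,E+\vecc_m$ with $m=2^{O(n)}$ whose union contains $K$, so $K\cap\lat=\bigcup_i\big((E+\vecc_i)\cap\lat\big)\cap K$; and from $N(E,K)\le 2^{O(n)}$, each $E+\vecc_i$ is covered by $2^{O(n)}$ translates of $K$, whence $\abs{(E+\vecc_i)\cap\lat}\le 2^{O(n)} G(K,\lat)$. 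Hence it suffices to enumerate $(E+\vecc)\cap\lat$ for one translate in $2^{O(n)} G(K,\lat)$ time and $\tilde{O}(2^n)$ space (resp.\ $S(n)+\poly(n)$ space), discard the points outside $K$, and pay a further $2^{O(n)}$ factor for the $m$ translates and the membership filtering.

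Next I would reduce enumeration in an ellipsoid to enumeration in a Euclidean ball. Applying the linear map $T$ that sends $E$ to a ball of some radius $r$, the set $(E+\vecc)\cap\lat$ becomes the $T^{-1}$-image of $B(\vect,r)\cap\lat'$ where $\vect=T\vecc$ and $\lat'=T\lat$; covering numbers are affine invariants, so the counting bounds above survive. For the ball, I would invoke the MV preprocessing to compute the $\le 2(2^n-1)$ Voronoi-relevant vectors $V$ of $\lat'$ in $2^{O(n)}$ time and $\tilde{O}(2^n)$ space --- and for the ``furthermore'' clause, perform this and every subsequent closest-vector query through the given exact $\ell_2$-CVP solver (one relevant vector per query), recomputing $V$ on demand rather than storing it, which costs a factor $T(n)$ in time but only $S(n)+\poly(n)$ space. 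The Voronoi cell $\calV$ of $\lat'$ tiles $\R^n$ by the translates $\set{\vecv+\calV:\vecv\in\lat'}$, with $\calV\subseteq B(\veczero,\mu(\lat'))$; a lattice point $\vecv$ can lie in $B(\vect,r)$ only if the cell $\vecv+\calV$ meets $B(\vect,r)$, and the set of such cells is contained in $B(\vect,r+\mu(\lat'))\cap\lat'$, which a routine covering-radius/packing estimate bounds by $2^{O(n)}\max\{1,\abs{B(\vect,r)\cap\lat'}\}\le 2^{O(n)}(1+G(K,\lat))$. Starting from the cell containing $\vect$ (one CVP query), I would traverse exactly these cells by repeatedly stepping from a cell $\vecv+\calV$ to a neighbour $\vecv+\vecu+\calV$ ($\vecu\in V$) that is strictly closer to the ball --- the canonical descent towards $\vect$ underlying MV's iterative slicer --- so the exploration is a DFS of $\poly(n)$ recursion depth that spends $2^{O(n)}$ time per relevant cell, keeps no record of visited cells, outputs each visited lattice point lying in $B(\vect,r)$, and feeds them through $T^{-1}$ and the $K$-membership test.

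The hard part is the space bound in this last step. Since $G(K,\lat)$ may be far larger than $2^n$, one cannot store the list of cells (or points) already produced, so the cell-adjacency graph --- which has exponential degree and can be enormous --- must carry a canonical, locally computable spanning structure rooted at $\vect$, and one must show that the ``step strictly closer to the ball'' relation is acyclic, of only polynomial depth, and reaches every cell that meets $B(\vect,r)$; this, rather than the counting, is the real content of adapting MV search to enumeration. The deterministic M-ellipsoid construction is the other genuinely substantial ingredient; the covering and counting estimates are standard convex geometry.
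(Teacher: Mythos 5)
The paper does not prove this theorem; it imports it (restated more formally as Theorem~\ref{thm:body-lat-enum-informal}) from~\cite{conf/focs/svp/DPV11,arxiv/m-ell/DV12,thesis/D12}, and the one-paragraph description the authors give in Section~\ref{sec:lat-results} --- cover $K$ by $2^{O(n)}$ translates of a deterministically constructed M-ellipsoid, enumerate each translate via MV Voronoi-cell search, and, following~\cite{thesis/D12}, decouple the space from $G(K,\lat)$ by reducing each traversal step to an $\ell_2$-CVP oracle call --- is exactly the route you reconstruct. You also correctly flag the genuinely delicate step hidden behind the citation, namely a memoryless, provably-exhaustive, bounded-depth traversal of the Voronoi cell graph (where in particular ``recompute the relevant vectors on demand'' must be replaced by streaming them one CVP query per coset of $2\lat$ to actually stay within $S(n)+\poly(n)$ space), so your sketch is faithful to everything the present paper itself says about this result.
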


The main idea for the above algorithm is to first compute a covering of $K$ by $2^{O(n)}$ translates of an $M$-ellipsoid $E$ of $K$~\footnote{An
M-Ellipsoid $E$ of $K$ satisfies that $2^{O(n)}$ translates of $E$ suffice to cover $K$ and vice versa.}, and to use the MV enumeration techniques to
compute the lattice points inside each translate of $E$. In its first incarnation~\cite{conf/focs/svp/DPV11}, the above algorithm was randomized --
here randomization was needed to construct the M-Ellipsoid -- and had space complexity dependent on $G(K,\lat)$. In~\cite{arxiv/m-ell/DV12}, a
deterministic M-Ellipsoid construction was presented yielding a completely deterministic enumerator. Lastly in~\cite{thesis/D12}, the space usage
was decoupled from $G(K,\lat)$ and a direct reduction from lattice point enumeration to exact CVP under $\ell_2$ was presented. 

The above lattice point enumerator will form the core of our $(1+\eps)$-CVP algorithm. As we will see from the algorithm's analysis, its space usage
will only be an additive polynomial factor larger than the space required for the enumeration. Therefore, if one could develop an exact CVP solver under
$\ell_2$ which runs in $2^{O(n)}$ time and $\poly(n)$ space, then the space usage of our $(1+\eps)$-CVP can be reduced to $\poly(n)$ in the same time
complexity. The possibility of such a solver is discussed in~\cite{DBLP:conf/stoc/MicciancioV10} and developing it remains an important open problem.
We remark that by plugging in Kannan's algorithm for CVP under $\ell_2$, we do indeed get a $\poly(n)$ space $(1+\eps)$-CVP solver, though at the cost
of an $n^{n/2}$ factor increase in running time.

Using the above enumerator as a blackbox, we now present the approach taken in~\cite{conf/focs/svp/DPV11} to solve CVP and explain the main problem
that arises. Given the target $\vect \in \R^n$, their algorithm first computes an initial coarse underestimate $d_0$ of the distance of $\vect$ to
$\lat$ under $\|\cdot\|_K$ (using LLL for example). For the next step, they use the lattice point enumerator to successively compute the sets $(\vect
+ 2^id_0K) \cap \lat$ (i.e.~all lattice points at distance at most $2^id_0$ from $\vect$), $i \geq 0$, until a lattice point is found. Finally, the
closest vector to $\vect$ in the final enumerated set is returned. 

From the description, it is relatively straightforward to show that the complexity of the algorithm is essentially $G(dK,\lat)$, where $d$ is the
distance of $\vect$ to $\lat$. The main problem with this approach is that, in general, one cannot apriori bound $G(dK,\lat)$; even
in $2$ dimension this quantity can be made arbitrarily large. The only generic setting where such a bound is indeed available is when the distance
$d$ of the target is bounded by $\alpha \lambda$, where $\lambda$ is the length of the shortest non-zero vector under $\|\cdot\|_K$. In this
situation, we can bound $G(dK,\lat)$ by $2^{O(n)}(1+\alpha)^n$. We remark that solving CVP with this type of guarantee corresponds to the Bounded
Distance Problem problem in the literature, and by a standard reduction can be used to solve SVP in general norms as well~\cite{DBLP:journals/ipl/GoldreichMSS99}. 

To circumvent the above problem, we propose the following simple solution.  Instead of solving the CVP on the original lattice $\lat$, we attempt to
solve it on a sparser sublattice $\lat' \subseteq \lat$, where the distance of $\vect$ to $\lat'$ is not much larger than its distance to $\lat$ (we
settle for an approximate solution here) and where the maximum number of lattice points at the new target distance is appropriately bounded. Our main
technical contribution is to show the existence of such ``lattice sparsifiers'' and give a deterministic algorithm to compute them:

\begin{theorem}[Lattice Sparsifier, informal]
\label{thm:lattice-sparsifier-informal}
There is a deterministic algorithm that, given any near-symmetric norm $\|\cdot\|_K$, $n$ dimensional lattice $\lat$, and distance $t \geq 0$,
computes a sublattice $\lat' \subseteq \lat$ in deterministic $2^{O(n)}$ time and $\tilde{O}(2^n)$ space satisfying: (1) the distance from $\lat'$ to any
point in $\R^n$ is at most its distance to $\lat$ plus an additive $t$, (2) the number of points in $\lat'$ at distance $t$ is at most $2^{O(n)}$. 
\end{theorem}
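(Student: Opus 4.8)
The plan is to build $\lat'$ as a short chain $\lat = \lat_0 \supseteq \lat_1 \supseteq \cdots \supseteq \lat_r = \lat'$, where each $\lat_j$ is a \emph{random sublattice restriction} of $\lat_{j-1}$: fixing a basis (so $\lat_{j-1}\cong\Z^n$), pick a prime $q_j$ and a uniformly random $\vecz_j\in\Z_{q_j}^n$, and set $\lat_j = \set{\vecv\in\lat_{j-1} : \pr{\vecz_j}{\vecv}\equiv 0 \imod{q_j}}$, which has index $q_j$ in $\lat_{j-1}$ whenever $\vecz_j\ne\veczero$ (this is the device Khot used to prove $\ell_p$ hardness of \svp). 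Before anything else I would replace the near-symmetric $K$ by a symmetric comparison body such as $\bar K = K\cap(-K)$ or $(K-K)/2$, using near-symmetry to argue $\vol$-comparability, so that a point-count bound for the symmetric body transfers, up to a $2^{O(n)}$ factor, to the bound on $|tK\cap\lat'|$ demanded by property (2); from here on I treat $K$ as symmetric.

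Two facts drive a single step against a body $sK$. \textbf{(a)} (First moment.) For any finite $S$, $\E_{\vecz}\bracks{|S\cap\lat_\vecz|} = |S\cap q\lat| + |S\setminus q\lat|/q$, since a fixed $\vecv\notin q\lat$ lands in $\lat_\vecz$ with probability exactly $1/q$; applied to $S = sK\cap\lat$ this bounds the number of surviving short vectors by roughly $|sK\cap\lat|/q$, once the ``divisible part'' $sK\cap q\lat = (s/q)K\cap\lat$ is under control. \textbf{(b)} (Second moment / covering.) Property (1) for the step is equivalent to $\lat\subseteq\lat_\vecz + sK$, i.e.\ to every coset of $\lat_\vecz$ in $\lat$ having a representative in $sK$, i.e.\ to $\pi_\vecz(sK\cap\lat) = \Z_q$ where $\pi_\vecz\colon\lat\to\lat/\lat_\vecz\cong\Z_q$ is the quotient map. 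For fixed $c\in\Z_q$, the number of $\vecv\in sK\cap\lat$ with $\pi_\vecz(\vecv)=c$ has mean $\approx |sK\cap\lat|/q$ and—since distinct $\vecv,\vecv'$ are ``independent mod $q$'' unless $\vecv\equiv\vecv'\imod q$—variance of the same order; Chebyshev gives $\Pr_\vecz[c\notin\pi_\vecz(sK\cap\lat)]\lesssim q/|sK\cap\lat|$, and a union bound over the $q$ cosets shows that one restriction preserves (approximate) distances whenever $q\lesssim\sqrt{|sK\cap\lat|}$.

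Because (b) caps the index of a single step at roughly the square root of the current short-vector count, whereas property (2) ultimately needs the total index $\prod_j q_j$ to be $\gtrsim |tK\cap\lat|/2^{O(n)}$, I would iterate $r = O(\log\log|tK\cap\lat|) = O(\log(\text{input size}))$ times, running step $j$ against a body $b_jK$ with $\sum_j b_j\le t$ chosen geometrically decreasing, so that $\lat_0\subseteq\lat_1 + b_1K\subseteq\cdots\subseteq\lat_r + \parens{\textstyle\sum_j b_j}K\subseteq\lat_r + tK$, which gives (1); and with $q_j$ set just below $\min\set{\sqrt{|b_jK\cap\lat_{j-1}|},\ (\text{threshold keeping }(b_j/q_j)K\cap\lat_{j-1}\text{ of size }2^{O(n)})}$. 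A careful accounting—early steps carry the large primes and the large budget, later steps only polish—should show the count $|tK\cap\lat_r|$ reaches $2^{O(n)}$ before the budget is spent. To make the construction deterministic I would (i) replace the uniform $\vecz_j$ by a pairwise-independent $\vecz_j$ over a $\poly(n)$-size sample space, since the moment arguments in (a),(b) only use pairwise independence, and (ii) for each candidate in that support, verify properties (1)–(2) using the enumerator of Theorem~\ref{thm:enumeration-informal} (note $G(b_jK,\lat_{j-1})$ and $G(tK,\lat_j)$ stay $2^{O(n)}$-bounded along the chain, so each test costs $2^{O(n)}$ time and $\tilde{O}(2^n)$ space), also using that enumerator to get the crude $\lambda_1$- and count-estimates needed to fix $b_j$ and $q_j$.

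The main obstacle is precisely this balancing act: meeting simultaneously (i) $\prod_j q_j\gtrsim|tK\cap\lat|/2^{O(n)}$ for property (2), (ii) the per-step ceiling $q_j\lesssim\sqrt{|b_jK\cap\lat_{j-1}|}$ forced by the covering argument in (b), and (iii) $\sum_j b_j\le t$ for property (1), all while $\lambda_1(\lat_{j-1})$ and $|b_jK\cap\lat_{j-1}|$ are known only approximately and the divisible-part term $(b_j/q_j)K\cap\lat_{j-1}$ must be kept small (this last point is why the extra threshold, and possibly a genericity or $\gcd$ condition on $\vecz_j$, enters). I also expect the near-symmetric-to-symmetric transfer and the bookkeeping for the derandomized, enumerator-based tests to require some care, though these should be routine given Theorem~\ref{thm:enumeration-informal}.
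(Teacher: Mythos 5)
You have correctly identified the main tool---modular sublattice restrictions in the style of Khot---and several of the right structural moves (symmetrization via $K\cap -K$, iteration, derandomization by a small sample space plus exact verification with the enumerator of Theorem~\ref{thm:enumeration-informal}). But there is a genuine gap at the core of the argument. Your step (b) insists that a single restriction preserve distances by requiring full coset coverage, $\pi_{\vecz}(sK\cap\lat)=\Z_q$; analyzed by Chebyshev and a union bound over the $q$ cosets, this caps the modulus at $q\lesssim\sqrt{|sK\cap\lat|}$, and that ceiling is fatal. To get $|tK\cap\lat_r|=2^{O(n)}$ you need $\prod_j q_j\gtrsim |tK\cap\lat|/2^{O(n)}$, but a square-root ceiling lets each step at best halve $\log$ of the count, forcing $r=\Omega\bigl(\log(\log|tK\cap\lat|/n)\bigr)$ steps. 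With $r$ steps and $\sum_j b_j\le t$, some $b_j\le t/r$, and comparing the working body $b_jK$ to the target $tK$ via Lemma~\ref{lem:gkl-bnds} incurs a $(2t/b_j+1)^n\ge(2r+1)^n=r^{\Theta(n)}$ blow-up, which for $r$ growing with the input never settles to $2^{O(n)}$. Separately, at the step carrying the largest budget, the set $b_jK\cap\lat_{j-1}$ you propose to enumerate has no $2^{O(n)}$ bound at all, so that enumerator call is not $2^{O(n)}$ time.

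The paper avoids both problems with two ideas your plan does not contain. First, it does \emph{not} demand full coset coverage: it only asks that the image of the short vectors hit at least $\frac{p+2}{3}$ residues, and then the Cauchy--Davenport sumset inequality (Theorem~\ref{thm:sumset}) gives $C+C+C=\Z_p$, so any lattice vector is moved into the sublattice by subtracting \emph{three} elements of $tK\cap\lat$, at an additive cost of $3t$. Because only a constant fraction of residues is needed, a first-moment argument suffices and the prime can be taken $p=\Theta(|tK\cap\lat|)$ rather than $\Theta(\sqrt{|tK\cap\lat|})$, so a \emph{single} restriction does the whole job at a given scale. Second, the paper's iteration is over \emph{scales}, not over a decomposition of the index: it starts at $\lambda\approx\lambda_1(K,\lat)$, multiplies the radius by $3$ each round, and maintains the invariant $G(3^i\lambda K,\lat_i)\le 2^{O(n)}$, which keeps every intermediate enumeration $2^{O(n)}$-bounded and hence computable; the per-round error $3\cdot 3^{i-1}\lambda$ telescopes to $\frac{3}{2}(3^k-1)\lambda\le t$. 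The sumset/constant-fraction covering argument and the scale (rather than index) iteration are the two missing ingredients that make the construction work.
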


To solve $(1+\eps)$-CVP using the above lattice sparsifier is straightforward. We simply compute a sparsifier $\lat'$ for $\lat$ under $\|\cdot\|_K$ with $t = \eps d_K(t, \lat)$
(the distance from $\vect$ to $\lat$)
, and then solve the exact CVP on $\lat'$ using the DPV algorithm. By the guarantees on the sparsifier, $\lat'$
contains a point at distance at most $d + \eps d = (1+\eps) d$, and using a simple packing argument (see Lemma \ref{lem:gkl-bnds}) we can show that
\[
G((1+\eps)d,\lat') = 2^{O(n)}(1+\frac{1}{\eps})^n G(\eps d, \lat') = 2^{O(n)}(1+\frac{1}{\eps})^n \text{.}
\]
Here we note that the correctness of the output follows from the distance preserving properties of $\lat'$, and the desired runtime follows from the
above bound on $G((1+\eps)d,\lat')$.

To prove the existence of lattice sparsifier's we make use of random sublattice restrictions, a tool first employed by
Khot~\cite{DBLP:journals/jcss/Khot06,DBLP:journals/jacm/Khot05} for the purpose of proving hardness of SVP. More precisely, we show that
with constant probability the restriction of $\lat$ by a random modular form (for an appropriately chosen modulus) yields the desired
sparsifier. We remark that our use of sublattice restrictions is somewhat more refined than
in~\cite{DBLP:journals/jcss/Khot06,DBLP:journals/jacm/Khot05}. In Khot's setting, the random sublattice is calibrated to remove all short
vectors on a NO instance, and to keep at least one short vector for a YES instance. In our setting, we somehow need both properties
simultaneously for the \emph{same} lattice, i.e.~we want to remove many short vectors to guarantee reasonable enumeration complexity, while
at the same time keeping enough vectors so that the original lattice lies ``close'' to the sublattice. As a final difference, we show that
our construction can be derandomized in $2^{O(n)}$ time, yielding a completely deterministic algorithm. 

%
\paragraph{Organization.} In section~\ref{sec:apx-cvp}, we provide the exact reduction from $(1+\eps)$-CVP to lattice sparsification, formalizing Theorem~\ref{thm:approx-cvp-deterministic-informal}. In section~\ref{sec:lat-spar-rand}, we prove the existence of lattice sparsifiers using the
probabilistic method.  In section~\ref{sec:lat-spar-det}, we give the derandomized lattice sparsifier construction, formalizing
Theorem~\ref{thm:lattice-sparsifier-informal}. Lastly, in section~\ref{sec:future-directions}, we discuss futher applications and future directions.

\section{Preliminaries}
\label{sec:preliminaries}

\paragraph{Convexity and Norms.} For sets $A,B \subseteq \R^n$, let $A+B = \set{a+b: a \in A, b \in B}$ denote their Minkowski sum. $B_2^n$ denotes the $n$-dimensional euclidean unit ball in $\R^n$. A convex body $K \subseteq \R^n$ is a full dimensional
compact, convex set. A convex body $K$ is $(\veca_0,r,R)$-centered if $\veca_0+rB_2^n \subseteq K \subseteq \veca_0+RB_2^n$. For a convex
body $K \subseteq \R^n$ containing $\veczero$ in its interior, we define the (possibly asymmetric) norm $\|\cdot\|_K$ induced by $K$ as
$\|\vecx\|_K = \inf \set{s \geq 0: \vecx \in sK}$. For a $(\veczero,r,R)$-centered convex body $K$, we note that $\frac{1}{R}
\|\vecx\|_2 \leq \|\vecx\|_K \leq \frac{1}{r} \|\vecx\|_2$. 

If $K$ is symmetric ($K=-K$), then $\|\cdot\|_K$ is also symmetric ($\|\vecx\|_K = \|-\vecx\|_K$), and hence defines a regular norm on
$\R^n$.  The convex body $K$ ($\|\cdot\|_K$) is $\gamma$-symmetric for $\gamma \in (0,1]$, if $\vol_n(K \cap -K) \geq \gamma^n \vol_n(K)$. $K$ is
near-symmetric if it is $\Omega(1)$-symmetric.

\paragraph{Computational Model.} The convex bodies and norms will be presented to our algorithms via weak membership and distance oracles. For $\eps
\geq 0$ and $K \subseteq \R^n$ a convex body, we define $K^{\eps} = K + \eps B_2^n$ and $K^{-\eps} = \set{\vecx \in K: \vecx + \eps B_2^n \subseteq
K}$.  A \emph{weak membership oracle} $O_K$ for $K$ is a function which takes as input a point $\vecx \in \Q^n$ and real $\eps > 0$, and returns
$O_K(\vecx,\eps) = 1$ if $\vecx \in K^{-\eps}$, $0$ if $\vecx \notin K^{\eps}$, and either $0$ or $1$ if $\vecx \in K^{\eps} \setminus K^{-\eps}$. A
\emph{weak distance oracle} $D_{K, \cdot}$ for $K$ is a function that takes as input a point $\vecx \in \Q^n$ and $\eps > 0$, and returns a rational
number satisfying $\absfit{D_{K, \eps}(\vecx)-\length{\vecx}_K} \leq \eps \min \set{1, \length{\vecx}_K}$. The runtimes of our algorithms will be
measured by the number of oracle calls and arithmetic operations. For simplicity, we use the notation $\poly(\cdot)$ to denote a polynomial factor in
all the relevant input parameters (dimension, encoding length of basis, etc.).   

\paragraph{Lattices.} An $n$-dimensional lattice $\lat \subset \R^{n}$ is a discrete subgroup of $\R^n$; $\lat$ can be expressed as $B\Z^n$,
where $B \in \R^{n \times n}$ is a non-singular matrix, which we refer to as a basis for $\lat$. The dual lattice of $\lat$ is $\lat^* = \set{\vecy \in \R^n: \forall \vecx \in \lat \text{   } \pr{\vecx}{\vecy} \in \Z}$, 
which can be generated by the basis $B^{-T}$ (inverse transpose).  


We define the length of the shortest non-zero vector of $\lat$ under $\|\cdot\|_K$ by $\lambda_1(K,\lat) = \min_{\vecy \in \lat \setminus
\set{\veczero}} \|\vecy\|_K$. We let $\SVP(K,\lat) = \argmin_{\vecz \in \lat \minuszero} \|\vecz\|_K$ denote the set of shortest non-zero
vectors of $\lat$ under $\|\cdot\|$.  For $\vecx \in \R^n$, define the distance of $\vecx$ to $\lat$ under $\|\cdot\|_K$ by $d_K(\lat,\vecx)
= \min_{\vecy \in \lat} \|\vecy-\vecx\|_K$. We let $\CVP(K,\lat,\vecx) = \argmin_{\vecy \in \lat} \|\vecy-\vecx\|_K$ denote the set of
closest vectors to $\vecx$ in $\lat$ under $\|\cdot\|_K$.

For a lattice $\lat$ and convex body $K$ in $\R^n$, let $G(K,\lat)$ be the largest number of lattice points contained in any translate of
$K$, that is $G(K,\lat) = \max_{\vecx \in \R^{n}} \abs{(K+\vecx) \cap \lat}$. We will need the following bounds on $G(K,\lat)$
from~\cite{thesis/D12} (we include a proof in the appendix for completeness).

\begin{lemma}
\label{lem:gkl-bnds}
Let $K \subseteq \R^n$ denote a $\gamma$-symmetric convex body and let $\lat$ denote an $n$-dimensional lattice. Then for $d > 0$ we have that
\[
   G(dK,\lat) \leq \gamma^{-n} \left(1 + \frac{2d}{\lambda_1(K \cap -K,\lat)}\right)^n \quad \text{ and } \quad G(dK,\lat) \leq \gamma^{-n}(2d+1)^n \cdot |(K \cap -K) \cap \lat| \text{.}
\]
\end{lemma}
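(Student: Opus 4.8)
The plan is the standard volumetric packing/covering argument, carried out with the symmetric body $L := K \cap -K$ as the auxiliary shape. Before anything else I would record three facts used throughout: since $\veczero \in \mathrm{int}(K)$ we have $\veczero \in \mathrm{int}(L)$, so $L$ is a full-dimensional symmetric convex body; $L \subseteq K$, so $sL \subseteq sK$ for $s>0$; and by convexity of $K$, $aK + bK = (a+b)K$ for all $a,b > 0$. The $\gamma$-symmetry hypothesis enters only through $\vol_n(L) \ge \gamma^n \vol_n(K)$, and $\lambda_1(K\cap -K,\lat) = \lambda_1(L,\lat) =: \lambda > 0$.

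For the first inequality, fix $\vecx \in \R^n$ and let $\vecv_1,\dots,\vecv_N$ enumerate the (finitely many) points of $(dK+\vecx)\cap\lat$. First I would check that the bodies $\vecv_i + \tfrac{\lambda}{2}\mathrm{int}(L)$ are pairwise disjoint: if two intersect, then the nonzero lattice vector $\vecv_i - \vecv_j$ lies in $\tfrac{\lambda}{2}\mathrm{int}(L) - \tfrac{\lambda}{2}\mathrm{int}(L) \subseteq \lambda\,\mathrm{int}(L)$ (using that $L$ is symmetric and convex), contradicting the definition of $\lambda$. Next, containment: since $\vecv_i \in dK+\vecx$ and $\tfrac{\lambda}{2}L \subseteq \tfrac{\lambda}{2}K$, each such body lies in $(dK+\vecx) + \tfrac{\lambda}{2}K = (d+\tfrac{\lambda}{2})K + \vecx$. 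Comparing volumes of $N$ disjoint translates against the container gives $N\,(\tfrac{\lambda}{2})^n\vol_n(L) \le (d+\tfrac{\lambda}{2})^n\vol_n(K)$, and dividing and applying $\vol_n(L) \ge \gamma^n\vol_n(K)$ yields $N \le \gamma^{-n}(1 + 2d/\lambda)^n$.

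For the second inequality, fix $\vecx$ and choose $W = \{\vecw_1,\dots,\vecw_M\} \subseteq (dK+\vecx)\cap\lat$ maximal subject to $\|\vecw_i - \vecw_j\|_L > 1$ for all $i\neq j$. By maximality, every $\vecv \in (dK+\vecx)\cap\lat$ lies in $\vecw_j + L$ for some $j$ (either $\vecv \in W$, using $\veczero \in L$, or adding $\vecv$ to $W$ breaks the separation property), so $(dK+\vecx)\cap\lat \subseteq \bigcup_{j=1}^{M}(\vecw_j + L)$. Since each $\vecw_j \in \lat$, we have $(\vecw_j + L)\cap\lat = \vecw_j + (L\cap\lat)$, a set of exactly $|L\cap\lat| = |(K\cap -K)\cap\lat|$ points, so $G(dK,\lat) \le M\,|(K\cap -K)\cap\lat|$. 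Finally I would bound $M$ by the same packing idea: the sets $\vecw_j + \tfrac12\mathrm{int}(L)$ are pairwise disjoint (an intersection would force $\vecw_i - \vecw_j \in \mathrm{int}(L)$, i.e. $\|\vecw_i-\vecw_j\|_L < 1$) and all sit inside $(d+\tfrac12)K + \vecx$, so $M\,(\tfrac12)^n\vol_n(L) \le (d+\tfrac12)^n\vol_n(K)$, hence $M \le \gamma^{-n}(2d+1)^n$, completing the bound.

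The argument is routine and I do not foresee a substantive obstacle; the only points needing a little care are the open/closed-set bookkeeping in the two disjointness claims (cleanest to pack the open bodies $\tfrac{c}{2}\mathrm{int}(L)$ and compare with a slightly enlarged copy of $K$), the identity $aK+bK=(a+b)K$ which uses convexity but not symmetry of $K$, and the translation identity $(\vecw+L)\cap\lat = \vecw+(L\cap\lat)$ valid because $\vecw \in \lat$.
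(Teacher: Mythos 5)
Your proof is correct and follows essentially the same volumetric packing/covering argument as the paper's: for the first bound, pack disjoint translates of $\tfrac{\lambda}{2}(K\cap-K)$ around the lattice points in $dK+\vecx$ inside $(d+\tfrac{\lambda}{2})K+\vecx$; for the second, extract a $\| \cdot \|_{K\cap-K}$-separated maximal net $W$, cover the lattice points by translates $\vecw_j + (K\cap-K)$ each containing exactly $|(K\cap-K)\cap\lat|$ lattice points, and bound $|W|$ by the same packing argument. The only differences from the paper are cosmetic (strict versus non-strict separation threshold for the maximal net, and how the open/closed bookkeeping is phrased), and the $\gamma$-symmetry enters exactly where it does in the paper, via $\vol_n(K\cap-K)\geq\gamma^n\vol_n(K)$.
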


\paragraph{Algorithms.} We will need the following lattice point enumeration algorithm from~\cite{conf/focs/svp/DPV11,thesis/D12}.

\begin{theorem}[Algorithm Lattice-Enum($K,\lat,\eps$)]
\label{thm:body-lat-enum-informal}
Let $K \subseteq \R^n$ be a $(\veca_0,r,R)$-centered convex body given by weak membership oracle $O_K$, let $\lat \subseteq \R^n$ be an
$n$-dimensional lattice with basis $B \in Q^{n \times n}$ and let $\eps > 0$. Then there is a deterministic algorithm that on inputs
$K,\lat,\eps$ outputs a set $S$ (one element at a time) satisfying 
\[
K \cap \lat \subseteq S \subseteq (K+\eps B_2^n) \cap \lat
\] 
in $G(K,\lat) \cdot 2^{O(n)} \cdot \poly(\cdot)$ time using $2^{n}\poly(\cdot)$ space.
\end{theorem}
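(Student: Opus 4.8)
The plan is to reduce enumeration inside the general body $K$ to enumeration inside Euclidean balls, where the Voronoi-cell machinery of Micciancio and Voulgaris~\cite{DBLP:conf/stoc/MicciancioV10} applies, with $M$-ellipsoid coverings as the bridge. First I would run the deterministic $M$-ellipsoid construction~\cite{arxiv/m-ell/DV12} on the weak membership oracle $O_K$ to obtain an ellipsoid $E$ (stored as an explicit positive-definite matrix) that is an $M$-ellipsoid of $K$: some $N = 2^{O(n)}$ translates of $E$ cover $K$, and conversely $2^{O(n)}$ translates of $K$ cover $E$. Next I would turn the first property into an \emph{explicit} covering. Choose a lattice $\Lambda_0$ whose fundamental region fits inside $E$ (so $\Lambda_0 + E = \R^n$); after the linear map taking $E$ to a ball this is just a rescaled integer lattice, so $\det \Lambda_0$ can be taken within a $2^{O(n)}$ factor of $\vol_n(K)$ (using $\vol_n(E) = 2^{O(n)}\vol_n(K)$ and $\vol_n(K) = 2^{O(n)}\vol_n(E)$). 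Now enumerate the cosets $\vecc \in \Lambda_0$ with $(E+\vecc) \cap K \neq \emptyset$, i.e.\ the set $\Lambda_0 \cap (K - E)$: since $E$ is covered by $2^{O(n)}$ translates of $K$ and $\vol_n(K-K) \le 4^n \vol_n(K)$, the number of such cosets, at most $G(K-E,\Lambda_0)$, is $2^{O(n)}$, and they can be listed in $2^{O(n)}$ time. This yields an explicit list $\vecc_1,\dots,\vecc_N$ with $K \subseteq \bigcup_i (E+\vecc_i)$.

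It then suffices, for each $i$, to enumerate $\lat \cap (E+\vecc_i)$ and to union the results after filtering down to $K$. For a fixed $i$, let $T$ be the linear map sending $E$ to $B_2^n$; then $\lat \cap (E+\vecc_i) = T^{-1}\big(T\lat \cap (B_2^n + T\vecc_i)\big)$. I would run the Micciancio--Voulgaris algorithm on $T\lat$: compute the Voronoi-relevant vectors (the at most $2^{n+1}$ facets of the Voronoi cell) in $2^{O(n)}$ time and $2^n\poly(\cdot)$ space, then invoke the bounded-distance enumeration subroutine with target $T\vecc_i$ and radius $1$, which streams out all of $T\lat \cap (B_2^n + T\vecc_i)$ in time $2^{O(n)}\big(1 + |T\lat \cap (B_2^n+T\vecc_i)|\big)$ using $2^n\poly(\cdot)$ space. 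Since $E$ is covered by $2^{O(n)}$ translates of $K$, a packing argument in the spirit of Lemma~\ref{lem:gkl-bnds} gives $|\lat \cap (E+\vecc_i)| \le G(E,\lat) \le 2^{O(n)} G(K,\lat)$ for every $i$. Processing the $N = 2^{O(n)}$ translates one at a time therefore gives total running time $2^{O(n)} G(K,\lat)\poly(\cdot)$ and space $2^n\poly(\cdot)$.

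To make the output a genuine set (no element twice) while keeping the space bound, and to realize the required $\eps$-slack, I would: (i) when enumerating inside $E+\vecc_i$, emit a candidate $\vecx = B\vecz$ only if $\vecc_i$ is the lexicographically smallest $\vecc \in \Lambda_0$ with $\vecx \in E+\vecc$ --- this is decided by enumerating the $2^{O(n)}$ cosets $(\vecx - E) \cap \Lambda_0$ on the fly, with exact ellipsoid tests, and needs no persistent storage of found points; and (ii) emit $\vecx$ only if the weak distance oracle satisfies $D_{K,\eps'}(\vecx) \le 1+\eps'$, where $\eps'$ is polynomially small in $\eps$ and $1/(\|\veca_0\|_2 + R)$. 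The distance-oracle guarantee makes (ii) accept every $\vecx \in K$, so $K \cap \lat \subseteq S$; and any accepted $\vecx$ has $\|\vecx\|_K \le 1+2\eps'$, hence $\vecx \in (1+2\eps')K \subseteq K + \eps B_2^n$ by $(\veca_0,r,R)$-centeredness, so $S \subseteq (K+\eps B_2^n) \cap \lat$. The finite-precision handling --- approximating $E$, $T$ and the resulting near-ball to enough bits, and bounding encoding lengths of all intermediates --- is routine and is absorbed into the same $\eps$-slack and the $\poly(\cdot)$ factors.

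The main obstacle is the first step: producing, deterministically and in $2^{O(n)}$ time from only weak oracle access to $K$, the $M$-ellipsoid $E$ \emph{together with} an explicit list of $2^{O(n)}$ translates covering $K$ --- the deterministic $M$-ellipsoid construction and the argument that the covering count (and the ancillary enumeration of $\Lambda_0 \cap (K-E)$) stays $2^{O(n)}$, using the $M$-ellipsoid property in both directions, is the delicate part of~\cite{conf/focs/svp/DPV11,arxiv/m-ell/DV12,thesis/D12}. A secondary technical point is confirming that the Micciancio--Voulgaris enumeration is genuinely output-sensitive --- its running time linear in the number of points actually found rather than in an a-priori radius-based bound --- and that chaining it across $2^{O(n)}$ translates with the deduplication rule in (i) simultaneously preserves both the time bound $2^{O(n)} G(K,\lat)$ and, in particular, the space bound $2^n\poly(\cdot)$ rather than merely $2^{O(n)}$. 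The remaining ingredients --- the ellipsoid-to-ball linear reduction, the packing bound on $|\lat \cap (E+\vecc_i)|$, and the $\eps$-slack bookkeeping --- are routine.
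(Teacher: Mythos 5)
The paper does not prove this theorem at all: it is imported as a black box from \cite{conf/focs/svp/DPV11,arxiv/m-ell/DV12,thesis/D12}, and the only in-paper content is the informal outline in Section~\ref{sec:lat-results} (cover $K$ by $2^{O(n)}$ translates of a deterministically computed M-ellipsoid, run Micciancio--Voulgaris enumeration in each translate, with the later works supplying the deterministic M-ellipsoid and the decoupling of space from $G(K,\lat)$). Your sketch follows essentially that same route, and the steps you yourself flag as delicate (explicit deterministic M-ellipsoid covering, output-sensitive MV enumeration, keeping the space at $2^n\poly(\cdot)$ under deduplication) are precisely the content of the cited works; the only small slip is that your $\eps$-slack filter invokes a weak distance oracle and scales $K$ about the origin, whereas the theorem only supplies a weak membership oracle and an $(\veca_0,r,R)$-centered body, so the test should be an approximate membership query in $K+\eps B_2^n$ (equivalently, scaling about $\veca_0$), which is a routine fix.
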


We will require the following SVP solver from~\cite{conf/focs/svp/DPV11,thesis/D12}.

\begin{theorem}[Algorithm Shortest-Vectors($K,\lat,\eps$)]
\label{thm:svp-deterministic-informal}
Let $K \subseteq \R^n$ be a $(\veca_0,r,R)$-centered symmetric convex body given by weak membership oracle $O_K$, and let $\lat \subseteq \R^n$ be an
$n$-dimensional lattice with basis $B \in Q^{n \times n}$, and let $\eps > 0$. Let $\lambda_1 = \lambda_1(K,\lat)$. Then there is an
algorithm that on inputs $K,\lat,\eps$ outputs a set $S \subseteq \lat$ satisfying
\begin{equation}
\label{eq:svp-guarantee}
   \SVP(K,\lat) \subseteq S \subseteq \set{\vecy \in \lat \setminus \set{\veczero}: \|\vecy\|_K \leq \lambda_1 + \eps \min \set{1, \lambda_1}}
\end{equation} 
in deterministic $2^{O(n)}\poly(\cdot)$ time and $2^{n}\poly(\cdot)$ space.
\end{theorem}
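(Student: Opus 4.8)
The plan is to build the solver from the lattice point enumerator \texttt{Lattice-Enum} of Theorem~\ref{thm:body-lat-enum-informal} together with a doubling search on the candidate radius. The only global quantity we must pin down is $\lambda_1 = \lambda_1(K,\lat)$, and \texttt{Lattice-Enum} lets us test, for any radius $d$, whether $dK$ contains a nonzero lattice point. The structural fact we lean on is that $K$ is symmetric, so $K \cap -K = K$ and $K$ is $1$-symmetric; hence Lemma~\ref{lem:gkl-bnds} gives $G(dK,\lat) \le (1 + 2d/\lambda_1)^n$, which is $2^{O(n)}$ provided we only ever enumerate at radii $d = O(\lambda_1)$. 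Keeping the radii at $\Theta(\lambda_1)$ is the crux: by Theorem~\ref{thm:body-lat-enum-informal} each such call then runs deterministically in $2^{O(n)}\poly(\cdot)$ time and $2^n\poly(\cdot)$ space.

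First I would compute, using LLL on $\lat$ and the inclusions $\tfrac1R\length{\cdot}_2 \le \length{\cdot}_K \le \tfrac1r\length{\cdot}_2$, a crude bracket $d_{\mathrm{lo}} \le \lambda_1 \le d_{\mathrm{hi}}$ of $\poly(\cdot)$ bit-length (e.g.\ $d_{\mathrm{hi}} = \length{\vecb_1}_2/r$ for the first LLL vector and $d_{\mathrm{lo}} = (\min_i \length{\gs{\vecb_i}}_2)/R$). Then, for $i=0,1,2,\dots$, I would run \texttt{Lattice-Enum}$(d_{\mathrm{lo}}2^i K,\lat,\eps'_0)$ with crude fuzz $\eps'_0 = r\,d_{\mathrm{lo}}$, stopping at the first index $i^*$ whose output contains a nonzero vector $\vecy_0$. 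Since \texttt{Lattice-Enum}'s output always contains $\lat \cap d_{\mathrm{lo}}2^i K$, a nonzero vector is surely returned once $d_{\mathrm{lo}}2^i \ge \lambda_1$, so $i^* \le \log_2(d_{\mathrm{hi}}/d_{\mathrm{lo}}) = \poly(\cdot)$; conversely $\vecy_0 \in d_{\mathrm{lo}}2^{i^*}K + \eps'_0 B_2^n \subseteq (d_{\mathrm{lo}}2^{i^*} + d_{\mathrm{lo}})K$, while the empty output at the previous scale (or $d_{\mathrm{lo}}\le\lambda_1$ if $i^*=0$) forces $d_{\mathrm{lo}}2^{i^*} \le 2\lambda_1$, so $\lambda_1 \le \length{\vecy_0}_K \le 3\lambda_1$. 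Thus every radius queried in this phase lies in $[d_{\mathrm{lo}}, 3\lambda_1]$, hence is $O(\lambda_1)$, and every call stays within the stated budget.

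Finally I would estimate $\length{\vecy_0}_K$ via the distance oracle with crude accuracy $\delta_0$, set $R_f := D_{K,\delta_0}(\vecy_0) + \delta_0$ — which satisfies $\lambda_1 \le R_f = \Theta(\lambda_1)$ — and fix the oracle accuracy $\delta$ and the fuzz $\eps'$ of the last enumeration to a small constant fraction of $\eps\min\set{1,R_f}$ (legitimate because $R_f = \Theta(\lambda_1)$). A first streaming pass of \texttt{Lattice-Enum}$(R_f K,\lat,\eps')$ records $\tilde\mu_{\min}$, the least value of $D_{K,\delta}(\vecy)$ over the returned nonzero $\vecy$; a second pass re-runs it and streams out exactly those nonzero $\vecy$ with $D_{K,\delta}(\vecy) \le \tilde\mu_{\min} + 2\delta$. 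Because the output of \texttt{Lattice-Enum} contains $\lat \cap R_f K \supseteq \SVP(K,\lat)$ and every shortest vector has true length $\lambda_1$, one checks $\lambda_1 - \delta \le \tilde\mu_{\min} \le \lambda_1 + \delta$; hence every shortest vector passes the filter, while any surviving $\vecy$ has $\length{\vecy}_K \le D_{K,\delta}(\vecy) + \delta \le \lambda_1 + 4\delta \le \lambda_1 + \eps\min\set{1,\lambda_1}$ by the choice of $\delta$. This gives exactly the sandwich~\eqref{eq:svp-guarantee}; the two streaming passes and the $2^{O(n)}$-bounded output size keep the space at $2^n\poly(\cdot)$, and the total time is $\poly(\cdot)$ enumeration calls at radii $\Theta(\lambda_1)$, i.e.\ $2^{O(n)}\poly(\cdot)$.

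The step I expect to be the main obstacle is not any single estimate but the simultaneous bookkeeping of the three sources of slack — the $\delta\min\set{1,\length{\cdot}_K}$ error of the weak distance oracle, the $\eps' B_2^n$ fuzz of the enumerator, and the factor-$2$ looseness of the doubling search — calibrated so that (i) \texttt{Lattice-Enum} is never invoked on a body $dK$ with $d \gg \lambda_1$, which would blow up $G(dK,\lat)$ and the running time, and (ii) the final filtered set is \emph{exactly} sandwiched as in~\eqref{eq:svp-guarantee}. The two-phase design — a crude search that localizes $\lambda_1$ to within a factor of $2$, followed by one high-precision enumeration whose parameters are tuned to that estimate — is precisely what breaks the apparent circularity of needing a good estimate of $\lambda_1$ to fix the precision with which $\lambda_1$ is computed.
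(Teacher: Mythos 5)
The paper does not prove this theorem; it is stated as a black-box import from the cited works (``We will require the following SVP solver from~[DPV11, D12]''). So there is no in-paper argument to compare your proposal against. That said, your reconstruction --- a doubling search using \texttt{Lattice-Enum} to localize $\lambda_1$ to within a constant factor, followed by one enumeration-and-filter pass at that scale, with $G(dK,\lat)$ kept at $2^{O(n)}$ throughout via the bound of Lemma~\ref{lem:gkl-bnds} --- is the standard derivation of an SVP solver from an enumerator of exactly this type, and it is what the cited works do. The argument is essentially sound. Two small points of precision worth tightening: (i) the theorem presents $K$ only via a weak \emph{membership} oracle, while your argument freely queries a weak \emph{distance} oracle $D_{K,\cdot}$; the reduction from one to the other is standard binary search over scalings exploiting the $(\veca_0,r,R)$-centering, but it should be stated or cited explicitly; and (ii) the claim $R_f = \Theta(\lambda_1)$ requires the ``crude accuracy'' $\delta_0$ to be chosen of order $\lambda_1$ (e.g.\ a constant fraction of $d_{\mathrm{lo}}$, which is already a valid lower bound on $\lambda_1$) --- a fixed constant $\delta_0$ would make the additive $+\delta_0$ swamp $\lambda_1$ when the latter is small. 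Both are routine; the core two-phase calibration you describe is correct and is the right way to break the circularity between estimating $\lambda_1$ and fixing the precision.
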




\section{CVP via Lattice Sparsification}
\label{sec:apx-cvp}

%

To start, we give a precise definition of the lattice sparsifier.

\begin{definition}[Lattice Sparsifier]
\label{def:sparsifier} Let $K \subseteq \R^n$ be a $\gamma$-symmetric convex body, $\lat$ be an $n$-dimensional lattice and $t \geq 0$. A $(K,t)$ sparsifier for $\lat$ is a sublattice $\lat' \subseteq \lat$ satisfying
\begin{enumerate}
\item $\forall \vecx \in \R^n$, $d_K(\lat',\vecx) \leq d_K(\lat, \vecx) + t$
\item $G(tK,\lat) = 2^{O(n)}\gamma^{-n}$
\end{enumerate}
\end{definition} 

%

%
%

The following theorem represents the formalization of our lattice sparsifier construction. 

\begin{theorem}[Algorithm Lattice-Sparsifier] 
\label{thm:lattice-sparsifier}
Let $K \subseteq \R^n$ be a $(\veczero,r,R)$-centered and $\gamma$-symmetric convex body specified by a weak membership
oracle $O_K$, and let $\lat$ denote an $n$ dimensional lattice with a basis $B \in \Q^{n \times n}$. For $t \geq 0$, a $(K,t)$
sparsifier can be constructed for $\lat$ using $2^{O(n)} \poly(\cdot)$ time and $2^n \poly(\cdot)$ space.
\end{theorem}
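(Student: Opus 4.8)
I would realize the sparsifier by \emph{random sublattice restriction}: fix a prime $p$, pick $\veca\in\Zp^n$, and set $\lat'=\lat'_\veca:=\set{\vecx\in\lat:\pr{\veca}{B^{-1}\vecx}\equiv 0\imod{p}}$, an index‑$p$ sublattice of $\lat$; argue that the two conditions of Definition~\ref{def:sparsifier} hold for ``most'' $\veca$; then derandomize the choice of $\veca$, iterating the construction when $t$ is large relative to $\lambda_1(K\cap -K,\lat)$. As a first step I would call Shortest-Vectors on the symmetric body $K\cap -K$ (whose weak membership oracle is immediate from $O_K$) to compute $\mu:=\lambda_1(K\cap -K,\lat)$ up to a constant; if $\mu=\Omega(t)$ then Lemma~\ref{lem:gkl-bnds} already gives $G(tK,\lat)=2^{O(n)}\gamma^{-n}$, so $\lat'=\lat$ is a $(K,t)$‑sparsifier (this also disposes of $t=0$), and we may assume $\mu\ll t$.

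\textbf{The probabilistic core.} I would re-express the two conditions. By the triangle inequality for $\length{\cdot}_K$, condition (1) is equivalent to: every coset of $\lat'$ in $\lat$ contains a vector lying in $tK$. By Lemma~\ref{lem:gkl-bnds} applied to the rescaled body $tK$, condition (2) is equivalent to $\abs{(tK\cap -tK)\cap\lat'}=2^{O(n)}$, which in particular holds once $\lambda_1(K\cap -K,\lat')=\Omega(t)$. So I want $\lat'_\veca$ to (i) still $t$‑cover every coset of $\lat$ while (ii) killing almost all vectors of $\lat$ of length $\ll t$. For (ii): a fixed nonzero $\vecx\in\lat$ survives in $\lat'_\veca$ with probability $\le 1/p$ unless $\vecx\in p\lat$ (where it always survives), so choosing $p$ large enough that $p\lat$ meets $2tK\cap -2tK$ only in $\veczero$, a union bound over the short vectors (whose number is bounded via Lemma~\ref{lem:gkl-bnds}) makes (ii) fail with small probability for an appropriate $p$. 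For (i): the vectors of $(tK\cap -tK)\cap\lat$ are geometrically spread out, so a second‑moment computation shows they do not concentrate in few cosets of a random $\lat'_\veca$, with the symmetry of $tK\cap -tK$ keeping the count over the $p$ cosets manageable. Balancing $p$ against $t/\mu$ makes both failure probabilities sum to a constant $<1$.

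\textbf{Iteration and derandomization.} When $t/\mu$ is superconstant the modulus forced by (ii) is too large to search over within $2^{O(n)}$ time, so instead I would take a chain $\lat=\lat_0\supseteq\cdots\supseteq\lat_N=\lat'$ of restrictions each of modulus $2^{O(n)}$, where step $i$ multiplies $\lambda_1(K\cap -K,\cdot)$ by a fixed factor $>1$ (i.e.\ removes all vectors below the new scale) and $t_i$‑covers every coset of $\lat_{i+1}$ in $\lat_i$ with $\sum_i t_i\le t$; composing the per‑step coverings by the triangle inequality yields condition (1) for $\lat'$, while after $N=O(\log(t/\mu))=\poly(\cdot)$ steps $\lambda_1(K\cap -K,\lat')=\Omega(t)$ yields condition (2) via Lemma~\ref{lem:gkl-bnds}. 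Crucially, since $\lambda_1(K\cap -K,\lat_i)$ is kept within a constant factor of $t_i$, the vectors ``to be killed'' and the candidate coset representatives at step $i$ number only $2^{O(n)}$ and can be produced by Lattice-Enum in $2^{O(n)}\poly(\cdot)$ time and $2^n\poly(\cdot)$ space. The choice of $\veca$ at each step I would derandomize by the method of conditional expectations on a potential $\Phi(\veca)$ that pointwise bounds (number of surviving short vectors) $+$ (number of uncovered cosets): fixing the coordinates $a_1,\dots,a_n$ one at a time and always picking the value minimizing the conditional expectation of $\Phi$, which starts below $1$ by the moment bounds above and never increases; each conditional expectation is a sum of elementary probabilities of linear events modulo $p$ over the $2^{O(n)}$ relevant objects, hence computable in $2^{O(n)}\poly(\cdot)$ time, and there are only $n\cdot 2^{O(n)}$ candidate coordinate values and $\poly(\cdot)$ steps.

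\textbf{Main obstacle.} The crux is the tension in (i)+(ii): one must delete enough short vectors to make $G(tK,\lat')$ only $2^{O(n)}\gamma^{-n}$, yet keep enough that $\lat$ stays within $t$ of $\lat'$. A single restriction trades these against $p$, but the deterministic $2^{O(n)}$‑time requirement forces $p=2^{O(n)}$ and hence the iteration, after which the real work is to choose the per‑step budgets $t_i$ and the growth factor so that $\sum_i t_i\le t$ while $\lambda_1(K\cap -K,\cdot)$ still climbs to $\Omega(t)$ in $\poly(\cdot)$ steps. Making the second‑moment bound for the covering property hold for the \emph{same} modulus that already forces the sparsity property, and executing the derandomized covering step without enumerating the (possibly huge) set $(tK)\cap\lat_i$ of all nearby lattice points, is the most technical part.
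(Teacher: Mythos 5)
Your overall architecture matches the paper's: random restriction by a linear form modulo a prime $p$, a probabilistic argument that one good $\veca$ exists, an iteration over geometrically growing scales so that each step only touches $2^{O(n)}$ lattice points, and a derandomization of the choice of $\veca$. The reformulation of condition (1) in terms of cosets is correct as a \emph{sufficient} condition, and your iteration-and-budget scheme ($\sum_i t_i\le t$, $\lambda_1$ grows geometrically, $p=2^{O(n)}$ per step) is essentially the algorithm in the paper.

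The gap is in the ``probabilistic core.'' You want, for a single random $\veca$, both (ii) almost all of $S=tK\cap\lat$ to be killed and (i) every coset of $\lat'_\veca$ to meet $tK$. With the modulus forced by (ii) --- i.e., $p$ on the order of $|S|$ so that on average $O(1)$ elements of $S\setminus\set{\veczero}$ survive --- a second-moment (Cauchy--Schwarz) bound on $\sum_{z\in\Zp}\abs{\set{\vecy\in S:\pr{\veca}{\vecy}\equiv z}}^2$ only gives $\abs{C}\ge c\,p$ for the image $C=\set{\pr{\veca}{\vecy}\imod p:\vecy\in S}$ with some constant $c<1$; it does \emph{not} give $C=\Zp$, so a constant fraction of the cosets are not directly hit by $tK$ and your claimed reformulation of (1) fails. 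The paper's resolution is the one thing your plan never names: the Cauchy--Davenport sumset inequality (Theorem~\ref{thm:sumset}). Once $\abs{C}\ge(p+2)/3$, one has $C+C+C=\Zp$, so every coset differs from $\lat'$ by a sum of three elements of $tK$, yielding $d_K(\lat',\vecx)\le d_K(\lat,\vecx)+3t$; the constant $3$ is then absorbed by running the construction at radius $t/3$ (or, in the iterated version, by tuning the geometric factor). Without this sumset step, neither the probabilistic existence proof nor a conditional-expectations derandomization can close the argument, because you are trying to drive the quantity ``number of uncovered cosets'' to zero, and it simply will not be zero. The correct potential to control is the second moment $\sum_z\abs{S_z}^2$ (a sum of pairwise linear collision events, hence amenable to conditional expectations), combined with the zero-count $\abs{\set{\vecy\in S\minuszero:\pr{\veca}{\vecy}\equiv 0}}$, and then Cauchy--Davenport converts the resulting lower bound on $\abs{C}$ into condition (1) with a constant blow-up in $t$.

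Two further, smaller divergences are worth noting. First, the paper derandomizes not by conditional expectations but by a dimension-reduction trick: it deterministically projects $S$ injectively down to $\Z_p^2$ (possible since the number of lines through differences of $S$ is $\binom{\abs{S}}{2}<\abs{{\rm Lines}(\Z_p^{n_0})}$ whenever $n_0\ge 3$) and then exhaustively tries the $p+1$ lines in $\Z_p^2$. Your conditional-expectations alternative is plausible once the potential is corrected as above and can be computed over the $2^{O(n)}$-sized enumerated set, so this part is a genuinely different but workable route; the paper's projection method has the advantage of using only the existence Lemma~\ref{lem:good-vec} as a black box. Second, the paper needs an explicit prime-gap lemma (Lemma~\ref{lem:primegap}) to guarantee a prime $p$ with $N<p<\tfrac{4}{3}N$; your plan does not address how to pick $p$ in the narrow window where both the Markov bound for (ii) and the Cauchy--Schwarz/Cauchy--Davenport bound for (i) go through, which is what that lemma is for.
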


The proof of the above theorem is the subject of Sections \ref{sec:lat-spar-rand} and \ref{sec:lat-spar-det} (randomized and deterministic
constructions, respectively). Using the above lattice sparsifier construction, we present the following simple algorithm for $(1+\eps)$-CVP.

\begin{algorithm}
\caption{Approx-Closest-Vectors($K,\lat,\vecx,\eps$)}
\label{alg:apx-cvp}
\begin{algorithmic}[1]
\REQUIRE $(\veczero,r,R)$-centered convex body $K \subseteq \R^n$ with weak distance oracle $D_K$ for $\length{\cdot}_K$, a basis $B \in \Q^{n \times n}$
for $\lat$, target $\vecx \in \Q^n$, $0 < \eps \leq 1$
\ENSURE Outputs a non-empty set $S \subseteq \set{\vecy \in \lat: \|\vecy-\vecx\|_K \leq (1+\eps) d_K(\lat, \vecx)}$
\STATE {\bf if} $\vecx \in \lat$ {\bf then return} \set{\vecx}
\STATE Compute $\vecz \in \CVP(B_2^n,\lat,\vecx)$ using the MV algorithm
\STATE $l \leftarrow \frac{\|\vecz-\vecx\|_2}{R}$; $\eps_0 \leftarrow \frac{\eps}{9} \min \set{1, l}$
\STATE $d \leftarrow \frac{l}{2}$; $\tilde{d}_x \leftarrow \infty$
\REPEAT 
\STATE $d \leftarrow 2d$
\STATE $\lat' \leftarrow $ Lattice-Sparsifier($K, \lat, \frac{\eps}{3}d$)
\FORALL{$\vecy \in$ Lattice-Enum$((1+\frac{\eps}{3})dK+\vecx,\lat',r\eps_0)$} 
	\STATE $\tilde{d}_x \leftarrow \min \set{\tilde{d}_x, D_{K,\eps_0} (\vecy-\vecx), (1+\frac{\eps}{3})d + \eps_0}$
\ENDFOR
\UNTIL{$\tilde{d}_x < \infty$}
\RETURN Lattice-Enum$((\tilde{d}_x+\eps_0)K+\vecx, \lat', r\eps_0)$
\end{algorithmic}
\end{algorithm}

\begin{theorem} \label{thm:approx-cvp} 
Algorithm \ref{alg:apx-cvp} (Approx-Closest-Vectors) is correct, and on inputs $K,\lat,\vecx,\eps$ (as above),
$K$ $\gamma$-symmetric, it runs in deterministic $2^{O(n)}\gamma^{-n}(1+\frac{1}{\eps})^n \poly(\cdot)$ time and $2^{n}\poly(\cdot)$ space.
\end{theorem}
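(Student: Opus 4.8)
The plan is to verify correctness and the resource bounds separately; the recurring difficulty is to keep three layers of approximation slack in check — the $r\eps_0$ Euclidean tolerance of Lattice-Enum, the additive $\eps_0$ tolerance of the distance oracle $D_{K,\eps_0}$, and the additive error $\tfrac{\eps}{3}d$ introduced by Lattice-Sparsifier — and to see that the constants $9$ and $3$ in the algorithm are chosen exactly so that these, together with a factor-$2$ overshoot of the doubling search, compose into a clean factor $1+\eps$. (Minor points first: Lattice-Enum and Lattice-Sparsifier want a weak membership oracle $O_K$, which is simulated from the weak distance oracle $D_K$ by querying $\|\vecx\|_K$ to tolerance $\sim\delta/R$ and comparing with $1$; the sublattice returned by Lattice-Sparsifier inherits a rational basis from $\lat$, so it can be fed to Lattice-Enum.) The $\vecx\in\lat$ branch is immediate. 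Otherwise $\vecz\neq\vecx$, so $\|\vecz-\vecx\|_2$ — the exact Euclidean distance from $\vecx$ to $\lat$, produced by the MV algorithm~\cite{DBLP:conf/stoc/MicciancioV10} in $2^{O(n)}\poly(\cdot)$ time and $2^n\poly(\cdot)$ space — is positive, hence $l,\eps_0>0$; and from $rB_2^n\subseteq K\subseteq RB_2^n$ we get $\tfrac1R\|\cdot\|_2\le\|\cdot\|_K\le\tfrac1r\|\cdot\|_2$, so $l\le d_K(\lat,\vecx)\le\tfrac{R}{r}l$ and in particular $3\eps_0=\tfrac{\eps}{3}\min\{1,l\}\le\tfrac{\eps}{3}d_K(\lat,\vecx)$, a bound used at the very end.

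Next I would bound the terminating value $d^\ast$ of $d$; this is the crux. Let $d_0$ be the least element of $\{l,2l,4l,\dots\}$ with $d_0\ge d_K(\lat,\vecx)$; it exists and $d_0\le 2d_K(\lat,\vecx)$ since $d_K(\lat,\vecx)\ge l$. In the iteration $d=d_0$, property~(1) of the $(K,\tfrac{\eps}{3}d_0)$-sparsifier $\lat'$ gives $d_K(\lat',\vecx)\le d_K(\lat,\vecx)+\tfrac{\eps}{3}d_0\le(1+\tfrac{\eps}{3})d_0$ (using $d_K(\lat,\vecx)\le d_0$), so a closest vector of $\lat'$ to $\vecx$ lies in $(1+\tfrac{\eps}{3})d_0K+\vecx$ and is enumerated by Lattice-Enum; hence $\tilde d_x$ becomes finite by $d=d_0$, the loop halts, and $d^\ast\le d_0\le 2d_K(\lat,\vecx)\le\tfrac{2R}{r}l$. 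Since $d$ doubles out of $l$, the loop runs $O(\log(R/r))=\poly(\cdot)$ times. The conceptual point that makes the whole reduction work is that the doubling search may overshoot $d_K(\lat,\vecx)$ by a factor up to $2$, but this factor multiplies only the \emph{additive} sparsification error $\tfrac{\eps}{3}d$, producing at most $\tfrac{2\eps}{3}d_K(\lat,\vecx)$, still $O(\eps)$ times the true distance.

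Then I would pin down $\tilde d_x$, writing $\lat^\ast$, $d^\ast$ for the sparsifier and the value of $d$ in the terminating iteration, and show $d_K(\lat^\ast,\vecx)-\eps_0\le\tilde d_x\le d_K(\lat^\ast,\vecx)+\eps_0$. Upper bound: if $d_K(\lat^\ast,\vecx)\le(1+\tfrac{\eps}{3})d^\ast$ then a closest vector of $\lat^\ast$ is enumerated, with oracle distance at most $d_K(\lat^\ast,\vecx)+\eps_0$; otherwise non-emptiness of the enumerated set (the exit condition), together with $r\eps_0B_2^n\subseteq\eps_0K$, forces $d_K(\lat^\ast,\vecx)\le(1+\tfrac{\eps}{3})d^\ast+\eps_0$, and the explicit cap $(1+\tfrac{\eps}{3})d+\eps_0$ in the $\min$ already yields $\tilde d_x\le(1+\tfrac{\eps}{3})d^\ast+\eps_0\le d_K(\lat^\ast,\vecx)+\eps_0$. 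Lower bound: every term of the $\min$ is $\ge d_K(\lat^\ast,\vecx)-\eps_0$. Consequently $d_K(\lat^\ast,\vecx)\le\tilde d_x+\eps_0$, so a closest vector of $\lat^\ast$ lies in $(\tilde d_x+\eps_0)K+\vecx$ and the final Lattice-Enum call returns a non-empty $S\subseteq\lat^\ast\subseteq\lat$; and every $\vecy\in S$ satisfies, using $r\eps_0B_2^n\subseteq\eps_0K$ once more,
\[
\|\vecy-\vecx\|_K\le\tilde d_x+2\eps_0\le d_K(\lat^\ast,\vecx)+3\eps_0\le d_K(\lat,\vecx)+\tfrac{\eps}{3}d^\ast+3\eps_0,
\]
which is $\le(1+\tfrac{2\eps}{3})d_K(\lat,\vecx)+3\eps_0\le(1+\eps)d_K(\lat,\vecx)$ by $d^\ast\le 2d_K(\lat,\vecx)$ and $3\eps_0\le\tfrac{\eps}{3}d_K(\lat,\vecx)$; this is the asserted output guarantee.

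Finally the complexity. There are $\poly(\cdot)$ iterations, each making one Lattice-Sparsifier call ($2^{O(n)}\poly(\cdot)$ time, $2^n\poly(\cdot)$ space, by Theorem~\ref{thm:lattice-sparsifier}) and Lattice-Enum calls on $\lat'$ at radius essentially $(1+\tfrac{\eps}{3})d$. To bound the enumeration I would combine property~(2) of the sparsifier, $G(\tfrac{\eps}{3}dK,\lat')=2^{O(n)}\gamma^{-n}$, with a covering estimate in the spirit of Lemma~\ref{lem:gkl-bnds}: $(1+\tfrac{\eps}{3})dK$ is covered by $2^{O(n)}(1+\tfrac1\eps)^n$ translates of $\tfrac{\eps}{3}d(K\cap -K)\subseteq\tfrac{\eps}{3}dK$ — a volume count using $\vol(K\cap -K)\ge\gamma^n\vol(K)$ — whence $G\big((1+\tfrac{\eps}{3})dK,\lat'\big)=2^{O(n)}\gamma^{-O(n)}(1+\tfrac1\eps)^n$. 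Since Lattice-Enum runs in $G\cdot 2^{O(n)}\poly(\cdot)$ time and $2^n\poly(\cdot)$ space (Theorem~\ref{thm:body-lat-enum-informal}) and the loop over its output merely maintains the scalar $\tilde d_x$, each iteration — and likewise the final Lattice-Enum call, at radius $\tilde d_x+\eps_0\le(1+\tfrac{\eps}{3})d^\ast+2\eps_0$ — costs $2^{O(n)}\gamma^{-n}(1+\tfrac1\eps)^n\poly(\cdot)$ time and $2^n\poly(\cdot)$ space; the iterations reuse space, giving the claimed $2^{O(n)}\gamma^{-n}(1+\tfrac1\eps)^n\poly(\cdot)$ time and $2^n\poly(\cdot)$ space overall. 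The main obstacle is not any single estimate but the combined bookkeeping of the three slack terms with the factor-$2$ overshoot — checking that the algorithm's constants genuinely collapse everything to $1+\eps$; the $G$ estimate is routine given Lemma~\ref{lem:gkl-bnds}.
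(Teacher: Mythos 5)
Your proof is correct and follows essentially the same route as the paper: bound $l \le d_K(\lat,\vecx) \le \tfrac{R}{r}l$, argue the terminating $d$ is at most $2d_K(\lat,\vecx)$, show $\tilde d_x$ is within $\eps_0$ of $d_K(\lat',\vecx)$, combine the slacks so the output error collapses to $(1+\eps)$, and bound the enumeration cost via Lemma~\ref{lem:gkl-bnds}. Your argument for the overshoot — taking $d_0$ to be the first power-of-two multiple of $l$ above $d_K(\lat,\vecx)$ and noting the loop must have halted by then — is slightly cleaner than the paper's two-case contradiction, but it is the same observation; everything else, including the final chain of inequalities and the $G(\cdot)$ covering bound, matches the paper's proof.
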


\begin{proof}\hspace{1em}

\paragraph{Correctness:} If $\vecx \in \lat$, we are clearly done. Next since $K$ is $(0,r,R)$-centered, we have that
$\frac{\length{\vecy}}{R} \leq \length{\vecy}_K \leq \frac{\length{\vecy}}{r}$ for all $\vecy \in \R^n$. Now take any $\vecz \in \CVP(K,
\lat, \vecx)$ and $\tilde{\vecz} \in \SVP(B_2^n, \lat)$. Here we note that $d_x = \length{\vecz-\vecx}_K$. As in the algorithm, let $l =
\frac{\length{\tilde{\vecz}-\vecx}}{R}$. Now we see that
\[
l = \frac{\length{\tilde{\vecz}-\vecx}}{R} \leq \frac{\length{\vecz-\vecx}}{R} \leq \length{\vecz-\vecx}_K \leq
\length{\tilde{\vecz}-\vecx}_K \leq \frac{\length{\tilde{\vecz}-\vecx}}{r} = l \frac{R}{r}
\]
Therefore $l \leq d_x \leq l \frac{R}{r}$.

Let $d_f$ denote the value of $d$ after the first while loop terminates. We claim that $\frac{1}{2}d_f \leq d_x \leq (1+\eps/3)d_f +
\eps_0$. When the while loop terminates, we are guaranteed that the call to Lattice-Enum($(1+\frac{\eps}{3})d_fK+\vecx,\lat',r\eps_0$),
outputs a lattice vector in $\lat'$ at distance at most $(1+\frac{\eps}{3})d_f+\eps_0$ from $\vecx$. Since $\lat' \subseteq \lat$, we clearly
have that $d_x \leq (1+\frac{\eps}{3})d_f+\eps_0$ as needed. 

If the while loop terminates after the first iteration, then $d_f = l \leq d_x$ and hence $\frac{1}{2}d_f < d_x$ as needed. If the
loop iterates more than once, then for the sake of contradiction, assume that $\frac{1}{2} d_f > d_x$. Then in the last iteration, the value of 
$d$ is greater than $d_x$. Now we are guaranteed that Lattice-Sparsifier($K, \lat, \frac{\eps}{3}d$) returns a
lattice $\lat'$ satisfying 
\[
d_K(\lat', \vecx) \leq d_K(\lat, \vecx) + \frac{\eps}{3}d \leq (1+\frac{\eps}{3})d
\]
But then the call to Lattice-Enum($(1+\frac{\eps}{3})dK+\vecx,\lat',r\eps_0$) is guaranteed to return a lattice point, and hence the while
loop terminates at this iteration, a clear contradiction. Hence $\frac{1}{2}d_f \leq d_x$ as needed.

Let $d_x' = d_K(\lat',\vecx)$, for $\lat'$ at the end of the while loop. We now claim that $\tilde{d}_x$ (as in the algorithm) satisfies
$d_x' - \eps_0 \leq \tilde{d}_x \leq d_x' + \eps_0$. We first note that $\tilde{d}_x = 
\min \set{d_f + \eps_0, D_{K,\eps_0}(\vecz-\vecx)}$
from some $\vecz \in \lat'$. By the guarantees on $D_{K,\cdot}$, we get that 
\[
\tilde{d}_x = \min \set{d_f + \eps_0, D_{K,\eps_0}(\vecz-\vecx)} \geq \min \set{d_x', \|\vecz-\vecx\|_K - \eps_0} \geq d_x' - \eps_0 \text{,} 
\]
as needed. For the second inequality, we examine two cases. First assume that Lattice-Enum($d_fK+\vecx, \lat', r\eps_0$) outputs $\vecz \in
\CVP(K,\lat',\vecx)$.  Then $\tilde{d}_x \leq D_{K,\eps_0}(\vecz-\vecx) \leq d_x' + \eps_0$ as needed. If Lattice-Enum does not output any
element of $\CVP(K,\lat,\vecx)$, we must have that $d_f < d_x'$ and hence $\tilde{d}_x \leq d_f + \eps_0 < d_x' + \eps_0$, as needed.  Finally
by the construction of $\lat'$, we also have that $d_x' \leq d_x + \eps/3 d_f \leq (1+ 2\eps/3)d_x$.

Since $d_x' \leq \tilde{d}_x+\eps_0$, we know that $((\tilde{d}_x+\eps_0)K + \vecx) \cap \lat \neq \emptyset$. Therefore we are guaranteed
that the final call to Lattice-Enum($(\tilde{d}_x+\eps_0)K+\vecx,\lat',r\eps_0$) outputs all the closest vectors of $\lat'$ to $\vecx$.
Finally, any vector $\vecy$ outputted during this call satisfies 
\[
\|\vecy-\vecx\|_K \leq \tilde{d}_x + 2\eps_0 \leq d_x' + 3\eps_0 \leq (1+2\eps/3)d_x + (\eps/3)l \leq (1+\eps)d_x
\]
as needed.

\paragraph{Running Time:} We first bound the running time of each call to Lattice-Enum. Within the while loop, the calls to
Lattice-Enum($(1+\eps/3)dK + \vecx,\lat',r\eps_0$) run in $2^{O(n)} G((1+\eps/3)dK,\lat') \poly(\cdot)$ time and $2^{n}\poly(\cdot)$ space.
By Lemma \ref{lem:gkl-bnds}, since $(1+\eps/3) = t(\eps/3)$ for $t = (3/\eps+1)$, we have that
\[
G((1+\eps/3)dK,\lat') \leq (4t+2)^n G((\eps/3)d,\lat') = 6^n(1+2/\eps)^n G((\eps/3)d,\lat') = 2^{O(n)}\gamma^{-n}(1+1/\eps)^n
\]
since by the guarantees on Lattice-Sparsifier, we have that $G((\eps/3)d,\lat') = \gamma^{-n} 2^{O(n)}$. Next the final call to
Lattice-Enum($(\tilde{d}_x+\eps_0)K+\vecx, \lat', r\eps_0$) runs $2^{O(n)} G((\tilde{d}_x+\eps_0)K,\lat') \poly(\cdot)$ time and
$2^{n}\poly(\cdot)$ space. Now note that $\eps_0 \leq \frac{1}{9} \eps d_x$, and hence $(1+\eps/3)d_f \geq d_x-\eps_0 \geq (1-\eps/9)d_x$. 
From here we get that
\[
d_f \geq \frac{1-\eps/9}{1+\eps/3} d_x \geq \frac{1-1/9}{1+1/3} d_x = 2/3 d_x
\] 
Finally, $\tilde{d}_x+\eps_0 \leq (1+\eps/3)d_f + 2\eps_0  \leq (1+\eps/3)d_f + 2/9 \eps d_x \leq (1+2\eps/3)d_f$. Therefore,
since $(1+2\eps/3) = t(\eps/3)$ for $t = (2+3/\eps)$, we get that
\begin{align*}
G((\tilde{d}_x+\eps_0)d_fK,\lat') &\leq G((1+2\eps/3)d_fK,\lat') \leq (4t+2)^n G((\eps/3)d_f,\lat') \\
                                  &= (10+12/\eps)^n G((\eps/3)d_f,\lat') = 2^{O(n)}\gamma^{-n}(1+1/\eps)^n
\end{align*}
by the guarantee on $\lat'$.

Lastly, note that each call to Lattice-Sparsifier takes at most $2^{O(n)}\poly(\cdot)$ time and $2^n \poly(\cdot)$ space. Since the while loop iterates
polynomially many times (i.e.~at most $\log_2(2R/r)$),the total runtime is $2^{O(n)}\gamma^{-n}(1+1/\eps)^n\poly(\cdot)$ and the total space
usage is $2^{n}\poly(\cdot)$ as needed.
\end{proof}

\section{A Simple Randomized Lattice Sparsifier Construction}
\label{sec:lat-spar-rand}

We begin with an existence proof for lattice sparsifiers using the probabilistic method. We will use the Cauchy-Davenport  sumset inequality and another lemma in number theory about primegaps, a consequence of a theorem of Rosser and Schoenfeld~\cite{journal/ijm/RS62,Nark00}.\footnote{The authors are indebted to J\'anos Pintz for finding these references.}

\begin{theorem}\label{thm:sumset} Let $p \geq 1$ be a prime. Then for $A_1,\dots,A_k \subseteq \Z_p$, we have that
\[
|A_1+\dots+A_k| \geq \min \set{p, \sum_{i=1}^k |A_i| - k + 1}
\]
\end{theorem}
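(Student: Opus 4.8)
The plan is to reduce to the classical two-set Cauchy--Davenport inequality and then run an easy induction on $k$; throughout I would assume, as usual, that every $A_i$ is nonempty (an empty factor makes the whole sumset empty). The base case $k=1$ is trivial. For $k \geq 2$ I would set $B = A_1 + \dots + A_{k-1}$, so the inductive hypothesis gives $\abs{B} \geq \min\set{p, \sum_{i=1}^{k-1}\abs{A_i} - k + 2}$, and applying the two-set bound to $B$ and $A_k$ yields $\abs{A_1 + \dots + A_k} = \abs{B + A_k} \geq \min\set{p, \abs{B} + \abs{A_k} - 1}$. If $\abs{B} = p$ this already forces $\abs{B + A_k} = p$ (since $\abs{A_k} \geq 1$) and we are done; otherwise $\abs{B} < p$, which forces the inductive minimum to be attained at its second argument, so $\abs{B} \geq \sum_{i=1}^{k-1}\abs{A_i} - k + 2$ and hence $\abs{B} + \abs{A_k} - 1 \geq \sum_{i=1}^{k}\abs{A_i} - k + 1$, closing the induction.

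So the real content is the case $k=2$: for nonempty $A, B \subseteq \Zp$, I must show $\abs{A+B} \geq \min\set{p, \abs{A} + \abs{B} - 1}$. First I would dispose of the range $\abs{A} + \abs{B} > p$ by pigeonhole: for every $c \in \Zp$ the sets $A$ and $c - B$ have sizes summing to more than $p$, so they meet, giving $c \in A+B$; thus $A+B = \Zp$ and the bound holds. In the remaining range $\abs{A} + \abs{B} \leq p$, I would argue by contradiction via the polynomial method: assuming $\abs{A+B} \leq \abs{A} + \abs{B} - 2$, pick $C$ with $A+B \subseteq C \subseteq \Zp$ and $\abs{C} = \abs{A} + \abs{B} - 2$ (possible since then $\abs{C} < p$), and set $f(x,y) = \prod_{c \in C}(x+y-c) \in \Zp[x,y]$. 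By construction $f$ vanishes on all of $A \times B$. However $f$ has total degree $\abs{C} = (\abs{A}-1) + (\abs{B}-1)$, and the coefficient of $x^{\abs{A}-1}y^{\abs{B}-1}$ in $f$ agrees with that in the top-degree part $(x+y)^{\abs{C}}$, namely $\binom{\abs{A}+\abs{B}-2}{\abs{A}-1}$. The crucial point, and the only place primality of $p$ is used, is that this binomial coefficient is a unit in $\Zp$ because $\abs{A}+\abs{B}-2 < p$ makes every factorial in it a product of units. The Combinatorial Nullstellensatz (Alon), applied with exponents $\abs{A}-1 < \abs{A}$ and $\abs{B}-1 < \abs{B}$, then produces $(a,b) \in A \times B$ with $f(a,b) \neq 0$ --- a contradiction.

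I expect the main obstacle to be precisely this $k=2$ case; the $k$-fold reduction is routine bookkeeping. The delicate ingredient is invoking the Combinatorial Nullstellensatz together with the unit-ness of the binomial coefficient over $\Zp$. If a fully self-contained proof were wanted, I would instead use Davenport's $e$-transform: replacing $(A,B)$ by $\bigl(A \cup (B+e), B \cap (A-e)\bigr)$ preserves $\abs{A} + \abs{B}$ and can only shrink $A+B$, so for a suitable choice of $e$ one can induct downward on $\min\set{\abs{A},\abs{B}}$ until reaching the trivial case $\min\set{\abs{A},\abs{B}} = 1$; the technical nuisance there is checking that a usable $e$ always exists unless $A+B$ is already all of $\Zp$, which again is where primality of $p$ enters.
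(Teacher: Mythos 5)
The paper states Theorem~\ref{thm:sumset} without proof, citing it simply as the classical Cauchy--Davenport sumset inequality; so there is no in-paper argument to compare against. Your proof is correct and is one of the two standard textbook proofs. The induction on $k$ is clean: the only subtlety, which you handle correctly, is that when $\abs{B} < p$ the inductive minimum must be attained at its second argument, so the chain $\abs{B+A_k} \geq \abs{B}+\abs{A_k}-1 \geq \sum_i \abs{A_i}-k+1$ closes. The $k=2$ case via the Combinatorial Nullstellensatz is also sound: the pigeonhole dispatch of $\abs{A}+\abs{B}>p$, the choice of $C \supseteq A+B$ with $\abs{C}=\abs{A}+\abs{B}-2$, the identification of the coefficient of $x^{\abs{A}-1}y^{\abs{B}-1}$ in $f$ with the binomial coefficient $\binom{\abs{A}+\abs{B}-2}{\abs{A}-1}$, and the observation that this coefficient is a unit in $\Zp$ precisely because $\abs{A}+\abs{B}-2 < p$ --- all correct, and this is exactly where primality is used. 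Your fallback via the Davenport $e$-transform would also work and is the historically original route; the Nullstellensatz route is shorter once Alon's lemma is available, while the $e$-transform is more self-contained but requires the fiddly existence argument for a usable $e$ that you flag. One tiny remark: the theorem as printed does not stipulate that the $A_i$ are nonempty, and the bound is false if some $A_i = \emptyset$; your explicit nonemptiness assumption is the right reading and worth keeping.
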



\begin{lemma}\label{lem:primegap} For $x>1000$ 
there exists a prime $p \in \Z$ satisfying $x < p < \frac{4x}{3}$. 
\end{lemma}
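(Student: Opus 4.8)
The plan is to deduce this from explicit estimates on the prime-counting function $\pi(y) = \abs{\set{q \leq y : q \text{ prime}}}$ due to Rosser and Schoenfeld~\cite{journal/ijm/RS62} (see also~\cite{Nark00}). Concretely, I would invoke the two bounds $\pi(y) > y/\ln y$, valid for $y \geq 17$, and $\pi(y) < \frac{y}{\ln y}\bigl(1 + \frac{3}{2\ln y}\bigr)$, valid for $y > 1$. It then suffices to show $\pi(4x/3) - \pi(x) \geq 1$ for $x > 1000$, since this produces a prime $p$ with $x < p \leq 4x/3$; moreover the estimate will force $\pi(4x/3) - \pi(x)$ to be considerably larger than $1$, so in fact there is a prime strictly between $x$ and $4x/3$ (and the borderline case $p = 4x/3$, i.e.\ $4x = 3p$, is in any event excluded because there is then a further prime in the nonempty interval $(x,p)$).

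For the main estimate, set $L = \ln x$, so that $\ln(4x/3) = L + \ln(4/3)$. Combining the lower bound on $\pi(4x/3)$ with the upper bound on $\pi(x)$ and dividing through by $x$, the inequality $\pi(4x/3) > \pi(x)$ reduces to
\[
\frac{4}{3\bigl(L + \ln\tfrac{4}{3}\bigr)} \;>\; \frac{1}{L} + \frac{3}{2L^2}.
\]
Clearing the (positive) denominators turns this into a quadratic inequality in $L$ of the shape $L^2 > c_1 L + c_2$ for explicit constants $c_1,c_2$ of moderate size; its largest root lies below $6$, so the inequality holds for every $L \geq \ln 1000 > 6.9$, with enough slack to spare that the difference $\pi(4x/3) - \pi(x)$ is at least $2$. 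That is the entire argument.

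The only genuine work is the bookkeeping in the last step: carrying the numerical constants through the clearing of denominators and verifying that the resulting threshold on $L$ sits comfortably below $\ln 1000$, together with the trivial remark needed to upgrade $p \leq 4x/3$ to $p < 4x/3$. There is no conceptual obstacle here — the cushion provided by the ratio $4/3$ (rather than a Bertrand-type factor of $2$) is large, so even the crude Rosser–Schoenfeld bounds comfortably suffice. As an alternative one could simply quote a sharper ``Bertrand with a smaller ratio'' statement (for instance, the existence of a prime in $(m, 6m/5]$ for every integer $m \geq 25$), but working directly through $\pi$ keeps the dependencies confined to the single reference already cited.
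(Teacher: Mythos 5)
Your proposal is correct and follows essentially the same route as the paper: both invoke the Rosser--Schoenfeld explicit bounds on $\pi$ and show directly that $\pi(4x/3) > \pi(x)$ for $x > 1000$. The only cosmetic difference is that you use the refined upper bound $\pi(y) < \frac{y}{\ln y}\bigl(1 + \frac{3}{2\ln y}\bigr)$ and track the resulting quadratic in $L = \ln x$, whereas the paper uses the coarser constant form $\pi(y) < 1.25506\,y/\ln y$, which lets it dispatch the comparison in a single displayed chain of inequalities.
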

\begin{proof}[Proof of Lemma \ref{lem:primegap} (Prime Gap)]
We will use the bounds $\pi(x)>x/\ln(x)$ if $x>17$, and $\pi(x)<1.25506x/\ln(x)$ if $x>1$
where $\pi(x)$ denotes the number of primes $<x$~\cite{journal/ijm/RS62,Nark00}. If $x>1000$ then $\pi(4x/3)> (4x/3)/\ln(4x/3)>1.25506x/\ln(x)>\pi(x)$, the lemma follows.
\end{proof}

%
%
%
%
%

We begin with the following crucial lemma. This forms the core of our lattice sparsifier construction.

\begin{lemma}
\label{lem:good-vec}
Let $p$ be a prime and $S \subseteq \Z_p^n$ satisfying $1000 < |S| < p < \frac{4|S|}{3}$ 
and $\veczero \in S$. Then there exists $\veca \in \Z_p^n$ satisfying
\begin{enumerate}
\item $|\set{\vecy \in S: \pr{\vecy}{\veca} \equiv 0 \imod{p}}| \leq 6$
\item $|\set{\pr{\vecy}{\veca} \imod{p}: \vecy \in S}| \geq \frac{p+2}{3}$
\end{enumerate}
\end{lemma}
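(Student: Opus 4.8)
The statement is a probabilistic-method claim: I want to find a single $\veca \in \Z_p^n$ so that the linear form $\vecy \mapsto \pr{\vecy}{\veca} \bmod p$ simultaneously (1) kills only $O(1)$ elements of $S$ and (2) has image of size $\geq (p+2)/3$. My plan is to pick $\veca$ uniformly at random from $\Z_p^n$, bound the failure probability of each of the two events separately, and conclude that both hold with positive (in fact constant) probability. So the structure is: analyze event (1) via a first-moment / Markov argument, analyze event (2) via Cauchy–Davenport plus a counting argument, then union-bound.
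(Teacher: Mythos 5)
Your overall architecture --- pick $\veca$ uniformly at random from $\Z_p^n$, handle condition (1) by a first-moment/Markov bound on zeros, handle condition (2) by counting, then union-bound --- matches the paper's proof exactly in shape. However, you've misnamed the key tool for condition (2), and the misnaming points to a real gap: Cauchy--Davenport plays \emph{no} role in proving this lemma. Cauchy--Davenport (Theorem~\ref{thm:sumset}) is a sumset inequality in $\Z_p$; the image $\set{\pr{\veca}{\vecy} : \vecy \in S}$ of $S$ under a random linear form has no sumset structure, so Cauchy--Davenport gives you nothing here and the plan would stall at this point.

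The tool you actually need is Cauchy--Schwarz, applied to a collision count. Concretely: compute the expected number of ordered pairs $\set{(\vecx,\vecy) \in S^2 : \vecx \neq \vecy,\ \pr{\veca}{\vecx} \equiv \pr{\veca}{\vecy} \imod p}$, which equals $(|S|^2 - |S|)/p$ since each off-diagonal pair collides with probability $1/p$. Markov then gives that with probability $> 1/6$ the collision count is at most $\frac{6|S|}{5}$. Now Cauchy--Schwarz on the fiber sizes $f(z) = |\set{\vecy \in S : \pr{\veca}{\vecy} \equiv z}|$ gives
\[
|S|^2 = \Bigl(\sum_{z \in \Z_p} f(z)\Bigr)^2 \;\le\; \bigl|\set{\pr{\veca}{\vecy} : \vecy \in S}\bigr| \cdot \sum_{z} f(z)^2,
\]
and $\sum_z f(z)^2$ is precisely the collision count \emph{including} the diagonal, i.e.\ at most $\frac{6|S|}{5} + |S| = \frac{11|S|}{5}$. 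Rearranging yields image size $\ge \frac{5|S|}{11} > \frac{15p}{44} > \frac{p+2}{3}$ using $|S| > \frac{3p}{4}$ and $p > 1000$. Combined with the Markov bound on the zero count (success probability $> 5/6$), a union bound gives a good $\veca$. Cauchy--Davenport only enters the picture downstream, in Theorem~\ref{thm:spar-exist}, where the conclusion $|C| \ge \frac{p+2}{3}$ of this lemma is fed into the sumset inequality to deduce $C+C+C = \Z_p$. So: same route as the paper, but swap Cauchy--Davenport for Cauchy--Schwarz in your step for event (2), otherwise the argument does not close.
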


\begin{proof}
Let $\veca$ denote a uniform random vector in $\Z_p^n$.  We will show that $\veca$ satisfies both conditions (1)
and (2) with non-zero probability. Let $E_i^{\vecy}$ denote the indicator of the event $\pr{\veca}{\vecy} \equiv i$ for $\vecy \in S$ and $i \in \Z_p$.

\paragraph{Claim 1: $\E[\sum_{\vecy \in S \minuszero} E_0^{\vecy}] = \frac{|S|-1}{p}$}
\begin{proof}
By linearity of expectation it suffices to prove $\E[E_0^{\vecy}] = \Pr[\pr{\veca}{\vecy}] = \frac{1}{p}$ for $\vecy \in S \minuszero$.
Since $\vecy \neq \veczero$, $p$ is a prime, and $\veca$ is uniform in $\Z_p^n$ we have that $\pr{\veca}{\vecy}$ is uniform in $\Z_p$.
Therefore $\Pr[\pr{\veca}{\vecy}] = \frac{1}{p}.$
\end{proof}

\paragraph{Claim 2: $\E[\sum_{\vecx, \vecy \in S, \vecx \neq \vecy} E_0^{\vecx-\vecy}] = \frac{|S|^2-|S|}{p}$}
\begin{proof}
If $\vecx \neq \vecy$ then $\E E_0^{\vecx-\vecy} = \frac{1}{p}$. The Claim follows by the linearity of expectation.
\end{proof}

Now we will choose the vector $\veca \in \Z_p^n$. By Markov's inequality

$\Pr[ |\set{ \vecy \in S \minuszero : \pr{\veca}{\vecy} \equiv 0 }| < 6 ] \geq 1-\frac{|S|-1}{6p} > \frac{5}{6}$, and
 
$\Pr[|\set{ (\vecx, \vecy):  \vecx, \vecy \in S, \vecx \neq \vecy, \pr{\veca}{\vecx} \equiv \pr{\veca}{\vecy}}| 
\leq \frac{6|S|}{5} ] \geq 1-\frac{5|S|^2-5|S|}{6|S|p} > \frac{1}{6}$. 

Hence there exists an $\veca$ such that both events hold. 
The first condition of the lemma is easy to check:

\noindent
$|\set{\vecy \in S: \pr{\vecy}{\veca} \equiv 0 }|=|\set{\vecy \in S \minuszero: \pr{\vecy}{\veca} \equiv 0 }|+1 \leq 5+1=6$. \\ Now we will prove the second condition using our assumption and the Cauchy-Schwartz inequality:

\noindent
$\frac{11|S|}{5} \geq |\set{ (\vecx, \vecy):  \vecx, \vecy \in S, \vecx \neq \vecy, \pr{\veca}{\vecx} \equiv \pr{\veca}{\vecy}}|+ |S|=   |\set{ (\vecx, \vecy):  \vecx, \vecy \in S, \pr{\veca}{\vecx} \equiv \pr{\veca}{\vecy}}| \\
=\sum_{z \in \Z_p} |\set{\vecy \in S: \pr{\veca}{\vecy} \equiv z  }|^2 
\geq |S|^2/|\set{\pr{\vecy}{\veca} \imod{p}: \vecy \in S}|$. 
These yield 

$|\set{\pr{\vecy}{\veca} \imod{p}: \vecy \in S}| > \frac{5|S|}{11} > \frac{15p}{44} > \frac{p+2}{3}$.
\end{proof}

We give now our first lattice sparsifier construction. While this theorem is stated for symmetric norms only, it can be easily extended to general norms (see Lemma~\ref{lem:spar-equiv}).

\begin{theorem} 
\label{thm:spar-exist}
Let $K \subseteq \R^n$ be a symmetric convex body, $\lat \subseteq \R^n$ an $n$-dimensional lattice, and $t \geq 0$ a non-negative number. Let 
$N = |tK \cap \lat|$, and take a prime $p$ satisfying $N < p < \frac{4N}{3}$ if $N > 1000$ and $p = 3$ otherwise. Then there exists $\vecw \in \lat^*$ such that the 
sublattice $\lat(\vecw) = \set{\vecy \in \lat: \pr{\vecw}{\vecy} \equiv 0 \imod{p}}$ satisfies 
\begin{enumerate}
\item $\forall \vecx \in \R^n$, $d_K(\lat(\vecw),\vecx) \leq d_K(\lat(\vecw),\vecx) + 3t$
\item $G(3tK,\lat(\vecw)) \leq 1000 \cdot 7^n$
\end{enumerate}
\end{theorem}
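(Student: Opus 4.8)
The plan is to reduce the statement to Lemma~\ref{lem:good-vec} applied to a suitable finite set $S \subseteq \Z_p^n$ built from the lattice points inside $tK$. Concretely, first I would handle the trivial regime: if $N = |tK \cap \lat| \leq 1000$, take $p = 3$ and note that any $\vecw \in \lat^*$ already gives $G(3tK, \lat(\vecw)) \leq G(3tK,\lat)$, which I need to bound by $1000 \cdot 7^n$ — this is where a short packing argument (in the spirit of Lemma~\ref{lem:gkl-bnds}, using that $3tK$ is covered by $7^n$ translates of $\frac{t}{2}(K \cap -K) = \frac{t}{2}K$ and each such translate meets $\lat$ in at most $|tK \cap \lat| \leq 1000$ points) closes the case. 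For the main regime $N > 1000$, pick the prime $p$ with $N < p < \frac{4N}{3}$ guaranteed by Lemma~\ref{lem:primegap}. Fix a basis $B$ of $\lat$, which identifies $\lat \cong \Z^n$ and $\lat^* \cong \Z^n$ via the dual basis $B^{-T}$; reducing coordinates mod $p$ gives a map $\lat \to \Z_p^n$. Let $S$ be the image of $tK \cap \lat$ under this map. Since distinct points of $tK \cap \lat$ differ by a nonzero lattice vector in $2tK$ (using symmetry $K = -K$), and — here I need to be slightly careful — I want the reduction mod $p$ to be injective on $tK \cap \lat$ so that $|S| = N$; this should follow because $p$ is chosen larger than $N$, hence larger than any coordinate... actually the cleaner route is: I don't strictly need injectivity, I only need $|S| \geq N \cdot (\text{something})$, but in fact the construction works because what matters downstream is the set $S$ and not a bijection with lattice points, so I will simply set $S$ = image and observe $1000 < |S| \le N < p < \tfrac{4N}{3} \le \tfrac{4|S|'}{3}$ once I verify $|S|$ is close enough to $N$. (The safe statement: reduction mod $p$ restricted to any set of at most $p$ lattice points that pairwise differ by vectors in a bounded region is injective provided $p$ exceeds the relevant sup-norm bound; I'll invoke the standard fact that coordinates of $tK \cap \lat$ in basis $B$ are bounded, and the hypothesis "$N > 1000$, $p \approx N$" is where the paper presumably intends $p$ large enough — I expect this injectivity bookkeeping to be the main technical nuisance.)

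Granting $S \subseteq \Z_p^n$ with $1000 < |S| < p < \frac{4|S|}{3}$ and $\veczero \in S$, apply Lemma~\ref{lem:good-vec} to obtain $\veca \in \Z_p^n$ with $|\set{\vecy \in S : \pr{\vecy}{\veca} \equiv 0}| \leq 6$ and $|\set{\pr{\vecy}{\veca} : \vecy \in S}| \geq \frac{p+2}{3}$. Lift $\veca$ to $\vecw \in \lat^*$ (via the dual basis), so that for $\vecv \in \lat$, $\pr{\vecw}{\vecv} \bmod p$ equals $\pr{\veca}{(\vecv \bmod p)}$. Define $\lat(\vecw) = \set{\vecv \in \lat : \pr{\vecw}{\vecv} \equiv 0 \imod p}$, an index-$p$ sublattice. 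Property (2) is then immediate: $G(3tK, \lat(\vecw))$ counts lattice points of $\lat(\vecw)$ in a translate of $3tK$; covering $3tK$ by $7^n$ translates of $tK$ (since $3tK \subseteq tK + \text{(six translates)}$ — more precisely $7^n$ translates of $tK$ cover $3tK$ because each coordinate extent triples), in each translate the points of $\lat(\vecw)$ are among $N = |tK \cap \lat|$ lattice points, but all lie in the kernel of $\pr{\veca}{\cdot}$, of which there are at most $6$ by property (1) of Lemma~\ref{lem:good-vec}; hence $G(3tK,\lat(\vecw)) \leq 7^n \cdot 6 < 1000 \cdot 7^n$. (I'd double-check the covering constant; a factor $7^n$ is comfortably enough slack.)

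The heart is property (1): for every $\vecx \in \R^n$, $d_K(\lat(\vecw), \vecx) \leq d_K(\lat, \vecx) + 3t$. Let $\vecv \in \lat$ achieve $\|\vecv - \vecx\|_K = d_K(\lat,\vecx)$; it suffices to find $\vecu \in \lat(\vecw)$ with $\|\vecu - \vecv\|_K \leq 3t$, since then $\|\vecu - \vecx\|_K \leq \|\vecu-\vecv\|_K + \|\vecv-\vecx\|_K$. Equivalently I need: every residue class of $\lat/\lat(\vecw) \cong \Z_p$ is hit by some lattice vector in $3tK$ — then I can subtract the appropriate one from $\vecv$. Now $\lat/\lat(\vecw)$ is indexed by the value $\pr{\vecw}{\cdot} \bmod p \in \Z_p$. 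By property (2) of Lemma~\ref{lem:good-vec}, the set $T = \set{\pr{\veca}{\vecy} : \vecy \in S}$ of residues achieved by $tK \cap \lat$ has size $\geq \frac{p+2}{3}$; since $K$ is symmetric, $-T = T$, and $T + T + T$ consists of residues achieved by lattice vectors in $3tK$. By the Cauchy–Davenport bound (Theorem~\ref{thm:sumset}), $|T + T + T| \geq \min\set{p, 3|T| - 2} \geq \min\set{p, (p+2) - 2} = p$, so $T+T+T = \Z_p$. Thus every residue class mod $p$ contains a lattice vector in $3tK$, giving the desired $\vecu$, and property (1) follows. The main obstacle I anticipate is not any single step but the interface bookkeeping — ensuring $\veczero \in S$, that $|S|$ lands in the window $(1000, p)$ cleanly (i.e. the injectivity of reduction mod $p$ on $tK\cap\lat$, which I believe holds but needs a one-line justification via the choice of $p$ relative to the basis), and that sums of residues realized by points of $tK\cap\lat$ are genuinely realized by points of $3tK\cap\lat$ under the mod-$p$ identification. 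Once those are pinned down, both properties drop out of Lemmas~\ref{lem:good-vec} and~\ref{thm:sumset} and a crude covering bound.
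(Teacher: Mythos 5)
Your overall strategy matches the paper's: reduce to Lemma~\ref{lem:good-vec} by mapping $tK\cap\lat$ into $\Z_p^n$, use Cauchy--Davenport on $C+C+C$ for property~(1), and a covering bound for property~(2). But there is one genuine gap where your proposed fix would fail, plus a smaller slip in the covering step.

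The injectivity of the reduction $tK\cap\lat \to \Z_p^n$ (i.e.\ that $|S|=N$) is not a ``bookkeeping nuisance'' solvable by bounding coordinates. Your suggestion --- that $p$ ``exceeds the relevant sup-norm bound'' on coordinates --- is simply false in general: the coordinate vectors of points of $tK\cap\lat$ in a fixed basis $B$ can be arbitrarily large, even for small $t$, because $K$ and $B$ are unrelated. The hypothesis $p>N$ controls the \emph{count}, not the \emph{size}, and the correct argument exploits exactly that. The paper's proof goes by contradiction: if distinct $\vecy_1,\vecy_2\in tK\cap\lat$ collide mod $p$, then $\vecy:=\vecy_1-\vecy_2 \in p\lat\cap 2tK$, so $\vecy/p\in\lat$, and for each integer $k$ with $|k|\le\floor{p/2}$ the point $(k/p)\vecy$ lies in $\lat\cap tK$ (since $\|(k/p)\vecy\|_K \le 2t\cdot\frac{1}{2}=t$ by symmetry of $K$). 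That produces $2\floor{p/2}+1\ge p>N$ distinct points of $\lat\cap tK$, a contradiction. This is a scaling/pigeonhole argument, not a coordinate-magnitude argument, and it is the one non-obvious idea in the lemma; your proposal does not have it.

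Two smaller issues. For property~(2), covering $3tK+\vecc$ by $7^n$ translates of $tK$ does not by itself give the bound, because an \emph{arbitrary} translate of $tK$ can contain far more than $|tK\cap\lat(\vecw)|\le 6$ points of $\lat(\vecw)$; the $\le 6$ bound from Lemma~\ref{lem:good-vec} applies only to $tK$ centered at $\veczero$ (equivalently, translates centered at points of $\lat(\vecw)$). You need the covering to be by translates anchored at lattice points, which is exactly what the proof of Lemma~\ref{lem:gkl-bnds} arranges; just invoke that lemma directly rather than an ad hoc covering. Finally, in the $N\le 1000$ case you should take $\vecw=\veczero$ so that $\lat(\vecw)=\lat$ and property~(1) is automatic; ``any $\vecw$'' would not preserve distances. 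These two are easy patches, but the injectivity step needs the scaling argument above to be correct.
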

\begin{proof}
If $N \leq 1000$, let $\vecw = \veczero$, so $\lat(\veczero) = \lat$. Condition (2) is trivially satisfied, and
for condition (1) Lemma \ref{lem:gkl-bnds} implies

\[
G(3tK,\lat) \leq (2 \cdot 3+1)^n|tK \cap \lat| \leq 1000 \cdot 7^n.
\]

Now we assume that $N > 1000$. By Lemma~\ref{lem:primegap} there exists a prime $p$
satisfying $N < p < \frac{4N}{3}$, as required by the theorem. Let $B^*=(\vecb^1,\dots,\vecb^n)$ denote a basis for $\lat^*$.  Set $S = \set{B^{*T}\vecy \imod{p\Z^n}: \vecy \in tK \cap \lat}$.

\paragraph{Claim: $|\lat \cap tK| = |S|$.} 
\begin{proof}
Clearly $|S| \leq |\lat \cap tK|$.
We will prove  $|S| \geq |\lat \cap tK|$ by contradiction:
assume not and take $\vecy_1, \vecy_2 \in \lat \cap tK$, where $\vecy_1-\vecy_2 \in p\lat$. Set $\vecy=\vecy_1-\vecy_2$, so $\vecy \in 2tK$.   
Note that $(k/p)\vecy \in \lat$ for $k \in \Z$ and 
\[
\|(k/p)\vecy\|_{K} = \left|k/p\right| \|\vecy\|_{K} \leq 2t\left|k/p\right|
\]
by the symmetry of $K$. Hence for $|k| \leq \floor{p/2}$ we get $\|(k/p) \vecy\|_{K} \leq \frac{1}{2} 2t = t,$ i.e. $(k/p) \vecy 
\in tK$.
But then there are at least $2\floor{p/2}+1 \geq p > N$ distinct lattice points in $\lat \cap tK$, a contradiction.
\end{proof}

Since $\veczero \in S$, and $|S| < p < \frac{4|S|}{3}$, by Lemma \ref{lem:good-vec} there exists $\veca \in \Z_p^n$ s.t. $|{\vecy \in S: \pr{\veca}{\vecy} \equiv 0 \imod{p}}| \leq 6$ and $|{\pr{\veca}{\vecy} \imod{p}: \vecy \in S}| 
\geq \frac{p+2}{3}$. Let $\bar{\veca}$ denote the unique representative of $\veca$ in $\set{0,\dots,p-1}^n$, 
and let $\vecw = B^* \bar{\veca}$. 

Let $S_{in} = \set{\vecy \in S: \pr{\veca}{\vecy} \equiv 0 \imod{p}}$ and 
$C = \set{\pr{\veca}{\vecy} \imod{p}:
\vecy \in S}$. We know that $|S_{in}| \leq 6$ and $|C| \geq \frac{p+2}{3}$ by our guarantees on $\veca$.
We establish condition (2) first. 
We know that $|tK \cap \lat(\vecw)| = |S_{in}| \leq 6$. Lemma
\ref{lem:gkl-bnds} implies
\[
G(3tK, \lat(\vecw)) \leq 7^n \cdot |tK \cap \lat(\vecw)| \leq 7^n \cdot 6  \leq 1000 \cdot 7^n \text{.}
\]

Now we establish condition (1), i.e.~for any $\vecx \in \R^n$, $d_K(\lat(\vecw), \vecx) \leq d_K(\lat, \vecx) + 3t$. 
Let $\vecy \in \lat$ be (one of) the closest vector(s) to $\vecx$, i.e.  $d_K(\lat, \vecx)= \|\vecy-\vecx\|_K$.  
Since $C \subseteq \Z_p, |C| \geq \frac{p+2}{3}$ Theorem \ref{thm:sumset} yields
\[
|C+C+C| \geq \min \set{p, 3(\frac{p+2}{3}+1)-3} \geq p,
\]
and hence $C+C+C = \Z_p$. Therefore, there exists $\vecy_1, \vecy_2, \vecy_3 \in tK \cap \lat$ and
$\vecz \in \lat(\vecw)$ satisfying $\vecy=\vecz+\vecy_1+\vecy_2+\vecy_3$.
Finally, by the triangle inequality and the symmetry of $K$ we get that \newline
$\|\vecz-\vecx\|_K \leq \|\vecy-\vecx\|_K + \|\vecz-\vecy\|_K \leq d_K(\lat, \vecx) + \sum_{i=1}^3 \|-\vecy_i\|_K \leq d_K(\lat, \vecx) + 3t$, as needed.
\end{proof}

\section{Derandomizing the Lattice Sparsifier Construction}
\label{sec:lat-spar-det}

We begin with a high level outline of the deterministic sparsifier construction. To recap, in the previous section, we build a $(K,t)$ sparsifier for $\lat$ as follows
\begin{enumerate}[itemsep=0pt]
\item Compute $N \leftarrow |tK \cap \lat|$. 
If $N \leq 1000$ then return $\lat'=\lat$.
Else find a prime $p$ satisfying $N < p < \frac{4N}{3}$.
\item Build basis $B^* \in \Q^{n \times n}$ for $\lat^*$ and compute 
$S \leftarrow \set{B^{*T} \vecy \imod{p}: \vecy \in tK \cap \lat}$.
\item Find a vector $\veca \in \Z_p^n$ satisfying (in fact, for slightly worse parameters, a random $\veca \in \Z_p^n$ succeeds with constant probability) 
\vspace{-10pt}
\[
(a) ~~ |\set{\vecy \in S: \pr{\veca}{\vecy} \equiv 0 \imod{p}}| \leq 6 \quad \quad (b) ~~ |\set{\pr{\veca}{\vecy}: \vecy \in S}| \geq \frac{p+2}{3}
\vspace{-10pt}
\]
\item Return sublattice $\lat' = \set{\vecy \in \lat: \pr{\vecy}{B^*\veca} \equiv 0 \imod{p}}$.
\end{enumerate}

To implement the above construction efficiently and deterministically, we must overcome several obstacles. First, the number of lattice points $N$ in
$tK \cap \lat$ could be very large (since we have no control on $t$). Hence we can not hope to compute $N$ or the set $S$ efficiently via lattice
point enumeration. Second, the construction of the vector $\veca$ is probabilistic (see Lemma \ref{lem:good-vec}): we must replace this with an
explicit deterministic construction.

To overcome the first difficulty, we will build the $(K,t)$ sparsifier iteratively. In particular, we will compute a sequence of sparsifiers
$\lat'_1,\dots,\lat'_k$, satisfying that $\lat'_{i+1}$ is a $(K, c^i \lambda)$ sparsifier for $\lat'_i$ for $i \geq 0$, where $\lat'_0 = \lat$,
$\lambda = \lambda_1(K,\lat)$ and $c > 1$ is a constant. We start the sparsification process at the minimum distance of $\lat$. We only increase the
sparsification distance by a constant factor at each step. Hence we will be able to guarantee that the number of lattice points we process at each
step is $2^{O(n)}$. Furthermore, the geometric growth rate in the sparsification distance will allow us to conclude that $\lat'_i$ is in fact a $(K,
\frac{c^{i+1}}{c-1}\lambda)$ sparsifier for $\lat$. Hence, iterating the process roughly $k \approx \ln \frac{t}{\lambda_1}$ times will yield the
final desired sparsifier.

For the second difficulty, i.e.~the deterministic construction of $\veca$, the main idea is to use a dimension reduction procedure which allows
$\veca$ to be computed efficiently via exhaustive enumeration (i.e.~trying all possible $\veca$'s).  Let $N$ and $S$ be as in the description. Since
$N < p < \frac{4N}{3}$, we note that an exhaustive search over $\Z_p^n$ requires a search over $p^n \leq (\frac{4N}{3})^n$ possibilities, and the
validity check (i.e.~conditions $(a)$ and $(b)$) for any particular $\veca$ can be implemented in $\poly(N)$ time by simple counting. Since the
existence of the desired $\veca$ depends only on $|S|$ and $p$ (and not on $n$), if we can compute a linear projection $\pi:\Z_p^n \rightarrow
\Z_p^{n-1}$ such that $\pi(S) = |S|$, then we can reduce the problem to finding a good $\veca \in \Z_p^{n-1}$ for $\pi(S)$. Indeed, such a map $\pi$
can be computed efficiently and deterministically as long as $n \geq 3$. To see this, we first identify full rank $n-1$ dimensional projections with
their kernels, i.e.~lines in $\Z_p^n$.  From here, we note that distinct elements $\vecx,\vecy \in S$ collide under the projection induced by a line
$l$ iff $\vecx-\vecy \in l$. Since the total number of lines spanned by differences of elements in $S$ is at most $\binom{|S|}{2} < \binom{p}{2}$, as
long as there are at least $\binom{p}{2}$ lines in $\Z_p^n$ (i.e.~for $n \geq 3$) we can compute the desired projection. 
Therefore, repeating the process $n-2$ times, we are left with finding a good $\veca \in \Z_p^2$, which we can do by trying all $p+1 < \frac{4N}{3}+1$
lines in $\Z_p^2$. As discussed in the previous paragraph, we will be able to guarantee that $N = 2^{O(n)}$, and hence the entire construction
described above can be implemented in $2^{O(n)}$ time and space as desired.

\subsection{Algorithms}

We begin with the deterministic algorithm implementing Lemma \ref{lem:good-vec}. We denote the set of lines in $\Z_p^n$ by ${\rm Lines}(\Z_p^n)$.
For a vector $\vecq \in
\Z_p^n$ we denote its orthogonal complement by $\vecq^{\perp} = \set{\vecy \in \Z_p^n: \pr{\vecq}{\vecy} \equiv 0 \imod{p}}$.

\begin{algorithm}[h]
\caption{Algorithm Good-Vector($S$, $p$)}
\label{alg:good-vector}
\begin{algorithmic}[1]
\REQUIRE $S \subseteq \Z_p^n$, $\veczero \in S$, integer $n \geq 1$, $p$ a prime satisfying 
$1000<|S| < p < \frac{4|S|}{3}$.
\ENSURE $\veca \in \Z_p^n$ satisfying conditions of Lemma \ref{lem:good-vec}~.
\STATE {\bf if} $n = 1$, {\bf return} $1$
\STATE $P \leftarrow I_n$ ($n \times n$ identity)
\FOR{$n_0$ {\bf in} $n$ {\bf to} $3$}
	\FORALL{$\vecq \in {\rm Lines}(\Z_p^{n_0})$}
		\STATE Compute basis $B \in \Z_p^{n_0 \times n_0-1}$ satisfying $\vecq^{\perp} = B \Z_p^{n_0-1}$
    \STATE $\forall$ distinct $\vecx, \vecy \in PS$ check that $B^T \vecx \not\equiv B^T \vecy \imod{p\Z^{n_0-1}}$.  \newline
           If no collisions, set $P \leftarrow B^TP$ and exit loop; otherwise, continue.
	\ENDFOR
\ENDFOR
\FORALL{$\vecq \in {\rm Lines}(\Z_p^2)$}
\STATE Pick $\veca \in \vecq \minuszero$
	\STATE Compute $zeros \leftarrow |\set{\vecy \in PS: \pr{\veca}{\vecy} \equiv 0 \imod{p}}|$
  \STATE Compute $distinct \leftarrow |\set{\pr{\veca}{\vecy} \imod{p}: \vecy \in PS}|$
	\IF{$zeros \leq 6$ and $distinct \geq \frac{p+2}{3}$}
		\RETURN $P^t \veca$
	\ENDIF
\ENDFOR
\end{algorithmic}
\end{algorithm}

For the desired application of the algorithm given below, the set $S$ above will in fact be represented implicitly. Here the main access methodology we will
require from $S$ is a way to iterate over its elements. In the context of $(1+\eps)$-CVP, the enumeration method over $S$ will correspond to the
Lattice-Enum algorithm. Here we state the guarantees of the algorithm abstractly in terms of the number of iterations required over $S$.

\begin{theorem}
\label{thm:good-vector}
Algorithm \ref{alg:good-vector} is correct, and performs $\poly(n, \log p)p^4$ arithmetic operations and $O(n
p^3)$ iterations over the elements of $S$. Furthermore, the space usage (not counting the space needed to iterate over $S$) is $\poly(n, \log p)$.
\end{theorem}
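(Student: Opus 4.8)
The plan is to verify the three claimed resource bounds — arithmetic operations, iterations over $S$, and space — by tracking the work done in each phase of Algorithm~\ref{alg:good-vector}, and to argue correctness by showing that the running projection $P$ is, at every stage, a linear map that is injective on $S$ (equivalently, on the copy of $S$ carried along as $PS$), so that the final exhaustive search over lines in $\Z_p^2$ faces exactly the combinatorial situation guaranteed by Lemma~\ref{lem:good-vec}. First I would dispose of the base case $n=1$ (where $\veca = 1$ trivially satisfies both conditions, since $S \setminus \{\veczero\}$ is empty as $|S| \le p$ forces... actually $|S|<p$ with $\veczero\in S$, and injectivity of the identity is immediate). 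Then, for the main loop, I would establish the invariant: after processing dimension $n_0$, the map $P$ sends $\Z_p^{n}$ to $\Z_p^{n_0-1}$ and is injective on $S$. The inductive step is precisely the dimension-reduction argument sketched in the text: distinct $\vecx,\vecy \in PS \subseteq \Z_p^{n_0}$ collide under $B^T$ (where $\vecq^\perp = B\Z_p^{n_0-1}$) iff $\vecx - \vecy \in \vecq$; since the number of lines spanned by differences of elements of $PS$ is at most $\binom{|PS|}{2} = \binom{|S|}{2} < \binom{p}{2} \le |{\rm Lines}(\Z_p^{n_0})|$ for $n_0 \ge 3$, some line $\vecq$ avoids all of them, so the "exit loop" branch is guaranteed to be taken. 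Hence the loop never fails and terminates with $P: \Z_p^n \to \Z_p^2$ injective on $S$.

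For correctness of the output, once $P$ is injective on $S$, the set $PS \subseteq \Z_p^2$ has $|PS| = |S|$ with $\veczero \in PS$ and $1000 < |PS| < p < \tfrac{4|PS|}{3}$, so Lemma~\ref{lem:good-vec} applied in dimension $2$ guarantees some $\veca \in \Z_p^2$ with at most $6$ zeros and at least $\tfrac{p+2}{3}$ distinct inner products; since every such $\veca$ lies on some line in ${\rm Lines}(\Z_p^2)$ and the algorithm scans all of them, it finds a valid $\veca$ for $PS$. It then returns $P^t \veca \in \Z_p^n$. I would close the argument by checking that $\pr{P^t\veca}{\vecy} \equiv \pr{\veca}{P\vecy} \imod p$ for $\vecy \in S$ (a one-line transpose identity), so the zero-set and the set of distinct inner products for $P^t\veca$ against $S$ coincide with those for $\veca$ against $PS$ — giving exactly conditions (1) and (2) of Lemma~\ref{lem:good-vec}.

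Finally, the resource accounting. The outer \textbf{for} runs over $n_0$ from $n$ down to $3$, i.e.\ $O(n)$ stages; each stage scans ${\rm Lines}(\Z_p^{n_0})$, and $|{\rm Lines}(\Z_p^{n_0})| = \tfrac{p^{n_0}-1}{p-1}$ — but we do \emph{not} pay for all of them, since the text's counting shows a good line appears among the first $\binom{p}{2}+1 = O(p^2)$ lines tried (adopt the convention that ${\rm Lines}$ is enumerated so this holds, or simply bound: a valid line exists, and we may enumerate lines in an order that exhausts a set of size $\binom{|S|}{2}+1 < \binom{p}{2}+1$ before succeeding). For each line tried we compute a basis $B$ of $\vecq^\perp$ ($\poly(n_0,\log p)$ operations) and check all $\binom{|PS|}{2} = O(p^2)$ pairs for collisions, each check costing $\poly(n_0,\log p)$; that is $O(p^2)\cdot\poly(n,\log p)$ per line, $O(p^2)$ lines per stage, $O(n)$ stages, for a total of $O(n\,p^4)\poly(n,\log p) = \poly(n,\log p)\,p^4$ arithmetic operations, and similarly the final $\Z_p^2$ loop contributes $O(p)\cdot\poly(\log p)\cdot|S| = O(p^2)\poly(\log p)$ more — absorbed into the bound. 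The iterations over $S$: we must read $S$ once per stage to form/refresh $PS$ (the set $PS$ is what the pair-checks and counts operate on), so that is $O(n)$ passes over $S$ during the main loop plus $O(1)$ more at the end — but to match the stated $O(np^3)$ I would instead account as follows: within each of the $O(n)$ stages, for each of the $O(p^2)$ candidate lines we re-derive $PS$ by streaming $S$ once (avoiding storing $PS$ explicitly is what keeps space low), giving $O(p^2)$ passes per stage and $O(np^2)$ total — and a factor $p$ of slack covers the final loop and any re-streaming, so $O(np^3)$ holds comfortably. Space: we store only the current matrix $P \in \Z_p^{n_0 \times n}$ and $B$, i.e.\ $O(n^2\log p) = \poly(n,\log p)$, with no need to hold $S$ or $PS$ in memory beyond the streaming interface — giving the claimed $\poly(n,\log p)$ bound.

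The main obstacle I anticipate is bookkeeping the interaction between streaming access to $S$ and the collision check: a literal reading requires $PS$ to be materialized to test all pairs, which would cost $|S|$ space, so the argument must either (i) allow $O(p)$ space (still fine, $p = \poly$ in the CVP application via $G$-bounds) and stream $S$ repeatedly — absorbing the re-streaming cost into the stated iteration count — or (ii) more carefully, note that the collision test per line can itself be done by a single pass computing a hash/sort of the $B^T\vecx$ values, at the cost of the $O(p^3)$ (rather than $O(p^2)$) iteration budget per stage. Either way the exponents in the theorem are met; getting the exact constants in the "$O(np^3)$ iterations" bound to line up with a chosen implementation is the only delicate point, and it is purely a matter of fixing the convention for how $PS$ is recomputed at each step.
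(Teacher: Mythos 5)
Your proposal follows the same route as the paper's proof: establish the invariant that $P$ remains injective on $S$, use the line-counting argument ($\binom{|S|}{2}$ bad lines versus $\frac{p^{n_0}-1}{p-1}$ available lines for $n_0 \ge 3$) to guarantee each dimension-reduction step succeeds, apply Lemma~\ref{lem:good-vec} in dimension $2$, and transport the result back via the transpose identity $\pr{P^T\veca}{\vecy} \equiv \pr{\veca}{P\vecy}$. The resource counting is also organized the same way.

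The one place you are imprecise is exactly the place you flag yourself: how to check all pairs of $PS$ for collisions while keeping space $\poly(n,\log p)$. The paper's resolution is simply to \emph{nest} the streaming: for each element $\vecx$ obtained from one pass over $S$, run a fresh inner pass over $S$ to produce each $\vecy$, comparing $B^T\vecx \bmod p$ with $B^T\vecy \bmod p$ on the fly. This costs $O(|S|) = O(p)$ passes over $S$ per candidate line (not one), never materializes $PS$, and is what yields the $O(p^3)$ iterations per stage and hence $O(np^3)$ overall, together with $\poly(n,\log p)$ space. Your option (i) (store $PS$, one pass per line) gives $O(np^2)$ iterations but $\Theta(|S|)$ space, which violates the stated $\poly(n,\log p)$ space bound; note also that your remark ``$p = \poly$ in the CVP application'' is not correct — in the application $N = |S| = 2^{O(n)}$ so $p = 2^{O(n)}$, which is why the $\poly(n,\log p)$ space claim actually has teeth and the nested-streaming implementation matters. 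Your option (ii) (hash/sort in a single pass) has the same space problem unless the table is poly-sized. So the intended reading is the nested-stream one, and with that fixed your argument matches the paper's exactly.
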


\begin{proof}[Analysis of Good-Vector]\ \vspace{-12pt}
\paragraph{Correctness:} We must show that the outputted vector $\veca$ satisfies the guarantees of Lemma
\ref{lem:good-vec}: 
\begin{enumerate}
\item $|\set{\vecy \in S: \pr{\veca}{\vecy} \equiv 0 \imod{p}}| \leq 6$
\item $|\set{\pr{\veca}{\vecy} \imod{p}: \vecy \in S}| \geq \frac{p+2}{3}$
\end{enumerate}

If $n = 1$ then setting $\veca \in \Z_p$ to $1$ (i.e.~line 1) trivially satisfies $(1)$ and $(2)$. We assume $n \geq 2$. We
prove the following invariant for the first loop (line 2): at the beginning of each iteration, $P \in \Z_p^{n_0 \times n}$ and $|PS| = |S|$. 

First let us assume that during the loop iteration, we find  $B \in \Z_p^{n_0 \times (n_0-1)}$ satisfying $B^T\vecx
\neq B^T\vecy$ for all distinct $\vecx, \vecy \in PS$ (verified in line 5). This yields that the map
$\vecx \rightarrow B^T\vecx$ is injective when restricted to $PS$, and hence $|B^TPS| = |S|$. Next, since $B \in \Z_p^{n_0 \times (n_0-1)}$ 
and $P \in \Z_p^{n_0 \times n}$, we have that $P$ is set to $B^TP \in \Z_p^{(n_0-1) \times n}$ for the next iteration, 
as needed.

Now we show that a valid projection matrix $B^T$ is guaranteed to exist as long as $n_0 \geq 3$. First, we claim that there
exists $\vecq \in {\rm Lines}(\Z_p^{n_0})$, such that for all distinct $\vecx,\vecy \in PS$, $(\vecq+\vecx) \cap
(\vecq + \vecy) = \emptyset$, i.e.~all the lines passing through $PS$ in the direction $\vecq$ are disjoint.
A line $\vecq$ fails to satisfy (a) if and only if $\vecq = \Z_p (\vecx-\vecy)$
for distinct $\vecx,\vecy \in PS$.  The number of lines that can be generated in this way 
from $PS$ is at most $\binom{|PS|}{2} = \binom{|S|}{2} < \frac{p(p-1)}{2}$.
Since $|{\rm Lines}(\Z_p^{n_0})| = \frac{p^{n_0}-1}{p-1} > \frac{p(p-1)}{2}$ for $n_0 \geq 3$
we may pick $\vecq \in {\rm Lines}(\Z_p^n)$ that satisfies (a). 
Now let $B \in \Z_p^{n_0 \times (n_0-1)}$ denote a basis satisfying $\vecq^{\perp} = B\Z_p^{n_0-1}$. 
We claim that $|B^TPS| = |PS|$.
Assume not, then there exists distinct $\vecx, \vecy \in PS$ such that 
\[
B^T\vecx \equiv B^T\vecy ~\Leftrightarrow~ B^T(\vecx-\vecy) \equiv \veczero ~\Leftrightarrow~ (\vecx-\vecy) \in (B\Z_p^{n_0-1})^\perp = \vecq \text{,}
\]
which contradicts our assumption on $\vecq$. Therefore, the algorithm is indeed guaranteed to find a valid projection, as needed.

After the first for loop, we have constructed $P \in \Z_p^{2 \times n}$ satisfying $|PS| = |S|$, where $|S| < p <
\frac{4|S|}{3}$.  By Lemma \ref{lem:good-vec}, there exists $\veca \in \Z_p^2$ satisfying $(1)$ and $(2)$ for the set $PS$. 
Since $(1)$ and $(2)$ holds for any non-zero multiple of $\veca$, i.e.~any vector defining the same line as $\veca$, we may restrict the search to elements of ${\rm Lines}(\Z_p^2)$. 
Therefore, by trying all $p+1$
elements of ${\rm Lines}(\Z_p^2)$ the algorithm is guaranteed to find a valid $\veca$ for the $PS$. Noting that
$\pr{\veca}{P\vecy} \equiv \pr{P^T \veca}{\vecy}$, we get that $P^T \veca$ satisfies $(1)$ and $(2)$ for the
set $S$, as needed.

\paragraph{Runtime:}
For $n = 1$ the runtime is constant. We assume $n \geq 2$. Here the first for loop is executed $n-2$ times. For each
loop iteration we run though $\vecq \in {\rm Lines}(\Z_p^{n_0})$ until we find one inducing a good projection matrix
$B$. From the above analysis, we iterate through at most $\binom{|S|}{2} < \frac{p(p-1)}{2}$ elements $\vecq \in {\rm
Lines}(\Z_p^{n_0})$ before finding a good projection matrix. For each $\vecq$, we build a basis matrix $B$ for
$\vecq^{\perp}$ which can be done using $\poly(n, \log p)$ arithmetic operations. Next, we check
for collisions against each pair $\vecx,\vecy \in PS$, which can be done using $O(|S|) = O(p)$ iterations over $S$.
Therefore, at each loop iteration we enumerate over $S$ at most $p^3$ times while performing only polynomial time
computations. Hence, the total number of operations (excluding the time needed to output the elements of $S$) is at most
$\poly(n, \log p) p^4$.

For the last phase, we run through the elements in ${\rm Lines}(\Z_p^2)$, where $|{\rm Lines}(\Z_p^2)| = p+1$. The
validity check for $\veca \in {\rm Lines}(\Z_p^2)$ requires computing both the quantities $(1)$ and $(2)$. To
compute $|\set{\vecy \in S: \pr{\vecy}{\veca} \equiv 0 \imod{p}}|$ we iterate once over the set $S$ and count how many
zero dot products there are. To compute $|\set{\pr{\veca}{\vecy}: \vecy \in S}|$, we first iterate over all residues in
$\Z_p$. Next, for each residue $i \in Z_p$, if we find $\vecy \in S$ satisfying $\pr{\veca}{\vecy} \equiv i \imod{p}$, we
increment our counter by one, and otherwise continue. Hence for any specific $\veca \in \Z_p^2$, we iterate over the set $S$
exactly $p+1$ times, performing $\poly(n, \log p) p^2$ operations. Hence, over the whole loop we perform $O(p^2)$ iterations over
the set $S$, and perform $\poly(n, \log p) p^3$ operations. 

Therefore, over the whole algorithm we iterate over the set $S$ at most $n p^3$ times, and perform at most $\poly(n,
\log p)p^4$ operations. Furthermore, not counting the space needed to iterate over the set $S$, the space used by
the algorithm is $\poly(n, \log p)$.
\end{proof}

Before moving into the derandomized sparsifier construction, we show a simple equivalence between building a sparsifier for symmetric and asymmetric norms.

\begin{lemma}
\label{lem:spar-equiv}
Let $K$ be a $\gamma$-symmetric convex body, and let $\lat$ be an n-dimensional lattice. Take $\lat' \subseteq \lat$, a full dimensional sublattice.
Then for $t \geq 0$, we have that $\lat'$ is a $(K \cap -K,t)$ sparsifier $\Rightarrow \lat'$ is a $(K,t)$ sparsifier.
\end{lemma}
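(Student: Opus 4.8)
The plan is to unpack the two defining conditions of a $(K,t)$ sparsifier (Definition \ref{def:sparsifier}) and check that each follows from the corresponding property with $K$ replaced by $L := K \cap -K$. Write $\|\cdot\|_K$ and $\|\cdot\|_L$ for the respective (asymmetric and symmetric) norms. The key elementary observation is the pointwise inequality $\|\vecx\|_K \leq \|\vecx\|_L$ for all $\vecx \in \R^n$: since $L \subseteq K$, any $s$ with $\vecx \in sL$ also satisfies $\vecx \in sK$, so the infimum defining $\|\vecx\|_K$ is over a larger set. Consequently $d_K(\cdot,\cdot) \leq d_L(\cdot,\cdot)$ as well, for both lattice-to-point distances.

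First I would handle condition (1), the covering/distance property. Suppose $\lat'$ is a $(L,t)$ sparsifier, so $d_L(\lat',\vecx) \leq d_L(\lat,\vecx) + t$ for every $\vecx \in \R^n$. For the target $K$-sparsifier bound I want $d_K(\lat',\vecx) \leq d_K(\lat,\vecx) + t$. But here I must be a little careful: $d_K(\lat,\vecx)$ can be strictly smaller than $d_L(\lat,\vecx)$, so I cannot simply chain $d_K(\lat',\vecx) \leq d_L(\lat',\vecx) \leq d_L(\lat,\vecx) + t$ — the right-hand side need not be at most $d_K(\lat,\vecx) + t$. The correct route is to fix a closest vector $\vecy \in \lat$ to $\vecx$ under $\|\cdot\|_K$, so $d_K(\lat,\vecx) = \|\vecy - \vecx\|_K$, and then apply the $(L,t)$ guarantee \emph{at the point $\vecy$ itself}: there is $\vecz \in \lat'$ with $\|\vecz - \vecy\|_L \leq d_L(\lat',\vecy) \leq d_L(\lat,\vecy) + t = t$, the last equality because $\vecy \in \lat$ so $d_L(\lat,\vecy) = 0$. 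Then by the triangle inequality for $\|\cdot\|_K$ and the inequality $\|\cdot\|_K \le \|\cdot\|_L$,
\[
d_K(\lat',\vecx) \leq \|\vecz - \vecx\|_K \leq \|\vecy - \vecx\|_K + \|\vecz - \vecy\|_K \leq d_K(\lat,\vecx) + \|\vecz-\vecy\|_L \leq d_K(\lat,\vecx) + t,
\]
which is exactly condition (1) for $K$. (This mirrors the final triangle-inequality step in the proof of Theorem \ref{thm:spar-exist}.)

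Condition (2) is then immediate and is where the choice $L = K \cap -K$ pays off: $(K,t)$-sparsifier condition (2) asks that $G(tK,\lat') = 2^{O(n)}\gamma^{-n}$, while the $(L,t)$-sparsifier hypothesis gives $G(tL,\lat') = 2^{O(n)}$ (note $L$ is symmetric, i.e. $1$-symmetric, so its $\gamma^{-n}$ factor is $1$). By Lemma \ref{lem:gkl-bnds} applied to the $\gamma$-symmetric body $K$ with $L = K \cap -K$, we have $G(tK,\lat') \leq \gamma^{-n}(2t+1)^n \cdot |(K \cap -K) \cap \lat'| = \gamma^{-n}(2t+1)^n \cdot |tL \cap \lat'|$ — wait, more carefully, the clean bound is $G(tK,\lat') \le \gamma^{-n}(2 \cdot \lceil t\rceil + 1)^n \cdot |L \cap \lat'|$ type; the paper's own usage (e.g. in Theorem \ref{thm:spar-exist}, $G(3tK,\lat(\vecw)) \le 7^n |tK \cap \lat(\vecw)|$) shows the intended reading: bound $G(tK,\lat')$ by a $2^{O(n)}$ factor times $|tL \cap \lat'| = |t(K\cap -K)\cap \lat'|$, and the latter is at most $G(tL,\lat') = 2^{O(n)}$. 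Combining, $G(tK,\lat') \le 2^{O(n)}\gamma^{-n}$, giving condition (2).

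The main obstacle — really the only subtlety — is the asymmetry point flagged above in condition (1): one must apply the $(L,t)$ distance guarantee at a $K$-closest lattice vector $\vecy$ rather than naively at $\vecx$, exploiting that $d_L(\lat,\vecy)=0$ for $\vecy\in\lat$. Everything else is a direct consequence of $L \subseteq K$ (hence $\|\cdot\|_K \le \|\cdot\|_L$) together with Lemma \ref{lem:gkl-bnds}. I would also remark that the converse implication fails in general, which is why the lemma is stated as a one-directional reduction: it suffices to build sparsifiers for the symmetric body $K\cap -K$ (as Theorem \ref{thm:spar-exist} does) to obtain them for the near-symmetric $K$.
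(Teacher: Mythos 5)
Your proof is correct and takes essentially the same approach as the paper. For condition (1) you use precisely the paper's trick of applying the $(K\cap -K,t)$ guarantee at a $K$-closest lattice vector $\vecy\in\lat$ (where $d_{K\cap -K}(\lat,\vecy)=0$) and chaining via $\|\cdot\|_K\le\|\cdot\|_{K\cap -K}$ and the triangle inequality; for condition (2) the paper bounds the covering number $N(tK,t(K\cap -K))\le 3^n\gamma^{-n}$ directly from Lemma~\ref{lem:cov-est} and multiplies by $G(t(K\cap -K),\lat')$, which is the same covering argument you reach (after your mid-proof self-correction) by invoking the second inequality of Lemma~\ref{lem:gkl-bnds} on the body $tK$ with $d=1$.
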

\begin{proof}  
Let $\lat' \subseteq \lat$ be a $(K \cap -K,t)$ sparsifier. Since $K \cap -K$ is $1$-symmetric,
by definition we have that $G(t(K \cap -K), \lat') = 2^{O(n)}$. By Lemma \ref{lem:cov-est} and $\gamma$-symmetry of $K$, we have that 
\begin{align*}
N(tK, t(K \cap -K) &= N(K, K \cap -K) \leq \frac{\vol_n(K + \frac{1}{2} (K \cap -K))}{\vol_n(\frac{1}{2}(K \cap -K)} 
                   \leq  \frac{\vol_n(\frac{3}{2} K)}{\vol_n(\frac{1}{2}(K \cap -K))} \leq 3^n \gamma^{-n}
\end{align*}
Therefore
\[
G(tK, \lat') \leq   G(t(K \cap -K), \lat') N(tK, t(K \cap -K)) = 2^{O(n)} 3^n \gamma^{-n} = 2^{O(n)} \gamma^{-n} \text{ as needed.}
\]
Since $K \cap -K \subseteq K$, we note that $\|\veca\|_K \leq \|\veca\|_{K \cap -K}$ for all $\veca \in \R^n$. Now take $\vecx \in \R^n$, and
take $\vecz \in \CVP(K,\lat,\vecx)$. By the guarantee on $\lat'$, there exists $\vecy \in \lat'$ such that 
\[
\|\vecy-\vecz\|_{K \cap -K} \leq d_{K \cap -K}(\lat,\vecz) + t = t
\]
since $\vecz \in \lat$. Next, using the triangle inequality we have that
\[
\|\vecy-\vecx\|_K \leq \|\vecy-\vecz\|_K + \|\vecz-\vecx\|_K \leq \|\vecy-\vecz\|_{K \cap -K} + d_K(\lat,\vecx) \leq d_K(\lat,\vecx) + t
\]
as needed. Therefore, $\lat'$ is a $(K,t)$ sparsifier for $\lat$ as claimed.
\end{proof}

From the above lemma, we see that it suffices to build lattice sparsifiers for symmetric convex bodies, i.e. to build a $(K,t)$ sparsifier it suffices
to build a $(K \cap -K, t)$ sparsifier for $\lat$. 

We now show how to use the Good-Vector algorithm to get a completely deterministic Lattice Sparsifier construction. The correctness and runtime of the
algorithm given below yields the proof of Theorem~\ref{thm:lattice-sparsifier}.

\begin{algorithm}[h]
\caption{Algorithm Lattice-Sparsifier($K$, $\lat$, $t$)}
\label{alg:lattice-sparsifier}
\begin{algorithmic}[1]
\REQUIRE $(\veczero, r,R)$-centered convex body $K \subseteq \R^n$ with distance oracle $D_{K, \cdot}$ for $\|\cdot\|_K$, basis
$B \in \Q^{n \times n}$ for $\lat$, and $t \geq 0$.
\ENSURE $(K,t)$ sparsifier for $\lat$
\STATE $K \leftarrow K \cap -K$
\STATE Compute $\vecy \in \text{Shortest-Vectors}(K,\lat,\frac{1}{3})$
\STATE $\lambda \leftarrow D_{K,\frac{1}{2}}(\vecy)$; $\eps \leftarrow 7^{-(n+5)}$
\STATE $k \leftarrow \lfloor\ln \left(\frac{2}{3}\frac{t}{\lambda} + 1 \right) / \ln 3\rfloor$
\STATE $\lat_0 \leftarrow \lat; B_0 \leftarrow B$
\FOR{$i$ {\bf in} $0$ {\bf to} $k-1$}
	\STATE $S \leftarrow \text{Lattice-Enum}(3^i(1-\eps)\lambda K, \lat_i, \eps \lambda r)$ 
	\STATE Compute $N \leftarrow |S|$
	\IF{$N > 1000$}
		\STATE Compute $B^*_i \leftarrow B^{-T}_i$, a basis for $\lat^*_i$
		\STATE Compute prime $p$ satisfying $N < p < \frac{4N}{3}$
		\STATE $\veca \leftarrow \text{Good-Vector}(B^{*T}_i S \imod{p \Z^n}, p)$
		\STATE Compute $\lat_{i+1} \leftarrow \set{\vecy \in \lat_{i}: \pr{B^*_i \veca}{\vecy} \equiv 0 \imod{p}}$ and basis $B_{i+1}$ for $\lat_{i+1}$
	\ELSE
		\STATE $\lat_{i+1} \leftarrow \lat_i; B_{i+1} \leftarrow B_i$
	\ENDIF
\ENDFOR 
\RETURN $\lat_k$
\end{algorithmic}
\end{algorithm}

\begin{proof}[Proof of Theorem~\ref{thm:lattice-sparsifier} (Lattice Sparsifier Construction)]\ \vspace{-12pt}

\paragraph{Correctness:} We show that the outputted lattice is a $(K,t)$ sparsifier for $\lat$. By Lemma
\ref{lem:spar-equiv} it suffices to show that the algorithm outputs a $(K \cap -K, t)$ sparsifier, which justifies
the switch in line 2 from $K$ to $K \cap -K$. In what follows, we therefore assume that $K$ is symmetric.

We first claim that $\lambda \leq 2 \lambda_1(K,\lat)$. To see by the guarantee on Shortest-Vector($K,\lat,\frac{1}{3}$), we have that $\|y\|_K \leq \frac{4}{3}\lambda_1(K,\lat)$. This implies 
\[
\lambda \leq \frac{3}{2}\|\vecy\|_K \leq \frac{3}{2} \cdot \frac{4}{3} \lambda_1(K,\lat) = 2 \lambda_1(K,\lat) \text{,}
\]
as needed.

\paragraph{Claim 1:} for each $i$, $0 \leq i \leq k$, we have that
\begin{enumerate}
\item $\forall \vecx \in \R^n$, $d_K(\lat_i,\vecx) \leq d_K(\lat,\vecx) + \frac{3}{2}(3^i-1) \lambda$.
\item $G(3^i \lambda, \lat_i) \leq 7^{n+4}$.
\end{enumerate}
\begin{proof}
We establish the claim by induction on $i$. For $i = 0$, we have that $\lat_0 = \lat$. Therefore, $\lat_0$ trivially satisfies property $(1)$. Next, since $\lambda \leq 2\lambda_1(K,\lat)$, by Lemma \ref{lem:gkl-bnds} we have that 
$G(\lambda K, \lat_0) \leq (2 \cdot 2 + 1)^n = 5^n < 7^{n+4}$. Hence $\lat_0$ also satisfies $(2)$.

We now prove the claim for $i \geq 1$. Let $S$ denote the set outputted by 
Lattice-Enum($3^{i-1}(1-\eps)\lambda K, \lat_{i-1}, \eps \lambda r$). 
By the guarantees on Lattice-Enum, the set $S$ satisfies 
$\displaystyle 3^{i-1}(1-\eps)\lambda K \cap \lat_{i-1} \subseteq S \subseteq 
(3^{i-1}(1-\eps)\lambda K + \eps\lambda r B_2^n) \cap \lat_{i-1}$. Since $rB_2^n \subseteq K$ and $i \geq 1$ 
we have $3^{i-1}(1-\eps)\lambda K + \eps\lambda rB_2^n \subseteq 3^{i-1}\lambda K$. Therefore,
\begin{equation}
\label{eq:lsc-1}
3^{i-1}(1-\eps)\lambda K \cap \lat_{i-1} \subseteq S \subseteq 3^{i-1}\lambda K \cap \lat_{i-1}
\end{equation}

Set $N=|S|$ (line 8). By \eqref{eq:lsc-1} and the induction hypothesis we have 
\[
|3^{i-1}(1-\eps)\lambda K \cap \lat_{i-1}| \leq N \leq |3^{i-1}\lambda K \cap \lat_{i-1}| \leq G(3^{i-1}\lambda K, \lat) \leq 7^{n+4}
\]

Assume $N \leq 1000$. Then the algorithm sets $\lat_i = \lat_{i-1}$ and $B_i = B_{i-1}$. The induction hypothesis implies
for $\vecx \in \R^n$ that
\[
d_K(\lat_i, \vecx) = d_K(\lat_{i-1}, \vecx) \leq d_K(\lat, \vecx) + \frac{3}{2}(3^{i-1}-1)\lambda \leq d_K(\lat, \vecx) + \frac{3}{2}(3^i-1)\lambda \text{,}
\]
and hence $\lat_i$ satisfies $(1)$. Next, by \eqref{eq:lsc-1} we have that $|3^i(1-\eps)\lambda K \cap \lat_i| \leq N \leq 1000$.
Therefore, Lemma \ref{lem:gkl-bnds} yields
\begin{align*}
G(3^{i+1} \lambda K, \lat_{i+1}) \leq (2\cdot3(1/(1-\eps))+1)^n |3^i(1-\eps)\lambda K \cap \lat_{i+1}| \\
                             \leq 7^n(1+2\eps)^n \cdot 1000 \leq 7^{n+4} \text{,}
\end{align*}
where the last two inequalities follow since $\eps \leq 7^{-(n+5)}$. Therefore $\lat_i$ satisfies requirement $(2)$ as needed.

Assume $N > 1000$. Here we first compute $N < p < \frac{4N}{3}$, and a dual basis $B_{i-1}^*$ for $\lat_{i-1}^*$. 

\paragraph{Claim 2: $|B^{*T}_{i-1} S \imod{p \Z^n}| = N$}
\begin{proof}
Since $|S| = N$, if the claim is false, there exists distinct $\vecx,\vecy \in \lat$ such that 
\[
B^{*T}_{i-1}\vecx \equiv B^{*T}_{i-1}\vecy \imod{p\Z^n} \Leftrightarrow B^{*T}_{i-1}(\vecx-\vecy) \equiv \veczero \imod{p\Z^n}
\Leftrightarrow \vecx-\vecy \in p \lat_{i-1} \text{.}
\]
Since $\vecx,\vecy \in 3^{i-1} \lambda K$ and $K$ is symmetric, we have that $\vecx-\vecy \in 2\cdot3^{i-1}K \cap p \lat_{i-1}$. Let $\vecz =
\vecx-\vecy \in p \lat_{i-1}$. We examine the vector $s\frac{\vecz}{p}$ for $s \in \Z$ satisfying $|s| \leq \floor{\frac{p}{2}} = \frac{p-1}{2}$ 
(since $p$ is odd). Since $\frac{\vecz}{p} \in \lat_{i-1}$, we have that $s \frac{\vecz}{p} \in \lat_{i-1}$ and 
\begin{align*}
s \frac{\vecz}{p} \in \left|\frac{s}{p}\right|\cdot 2\cdot 3^{i-1} K &\subseteq \left(\frac{p-1}{2p}\right)2 \cdot 3^{i-1} K = \left(1-\frac{1}{p}\right)3^{i-1}K \\
                                                          &\subseteq (1-\eps)3^{i-1}K,
\end{align*}
where the last inequality follows since $p < \frac{4N}{3} \leq \frac{4}{3} \cdot 7^{n+4}$ and $\eps = 7^{-(n+5)}$. 
Then, since $s$ can take $2\floor{\frac{p}{2}}+1
= p$ different values, the set $(1-\eps)3^{i-1}K$ contains at least $p$ lattice points in $\lat_{i-1}$. However, by the construction of $N$, we have that \newline
$|(1-\eps)3^{i-1}K \cap \lat_{i-1}| \leq N < p$, a clear contradiction. The claim thus holds.
\end{proof}

Next, the algorithm computes $\veca \leftarrow $ Good-Vector($B^{*T}_iS \imod{p \Z^n},p$), and sets \newline $\lat_i = \set{\vecy \in \lat: \pr{B^* \veca}{\vecy}
\equiv 0 \imod{p}}$. From Claim 2, equation \ref{eq:lsc-1} and the guarantees on Good-Vector, we get
\begin{enumerate}
\item $|3^{i-1}(1-\eps)\lambda K \cap \lat_i| = |\set{\vecy \in 3^{i-1}(1-\eps)\lambda K \cap \lat_{i-1}: \pr{B^*\veca}{\vecy} \equiv 0 \imod{p}}| \leq 6$.
\item $|\set{\pr{B^*\veca}{\vecy} \imod{p}: \vecy \in 3^{i-1}\lambda K \cap \lat_{i-1}}| \geq \frac{p+2}{3}$.
\end{enumerate}
From here, using the identical analysis as in Theorem \ref{thm:spar-exist}, from $(a)$ above we get that $\forall \vecx \in \R^n$,
$d_K(\lat_i, \vecx) \leq d_K(\lat_{i-1},\vecx) + 3 \cdot 3^{i-1}\lambda$. The induction hypothesis on $\lat_{i-1}$ implies
\[
d_K(\lat_{i-1},\vecx) + 3^i \lambda \leq d_K(\lat, \vecx) + \frac{3}{2}(3^{i-1}-1) \lambda + 3^i\lambda = d_K(\lat,\vecx) + \frac{3}{2}(3^i-1)\lambda \text{.}
\]
Therefore $\lat_i$ satisfies $(1)$ as needed. Using $(b)$ and Lemma \ref{lem:gkl-bnds} we have that
\begin{align*}
G(3^i\lambda K, \lat_i) &\leq (2\cdot3\cdot(1/(1-\eps))+1)^n |3^{i-1}(1-\eps)\lambda K \cap \lat_i| \\
	                      &\leq 7^n(1+2\eps)^n \cdot 6 < 7^{n+4} \text{.}
\end{align*}
Therefore $\lat_i$ satisfies $(2)$. The claim thus follows.
\end{proof}

Given Claim 1, we will show that $\lat_k$ is a $(K,t)$ sparsifier for $\lat$. By our choice of $k$, note that $\frac{3}{2}(3^k-1)\lambda \leq t
\leq 3 \cdot \frac{3}{2}(3^{k+1}-1)\lambda$. By the claim, for $\vecx \in \R^n$, $d_K(\lat_k, \vecx) \leq d_K(\lat, \vecx) +
\frac{3}{2}(3^k-1)\lambda \leq d_K(\lat, \vecx) + t$. It therefore only remains to bound $G(tK, \lat_k)$. By the previous bounds
\[
\frac{t}{3^k\lambda} \leq \frac{3}{2} \frac{(3^{k+1} - 1)\lambda}{3^k\lambda} < \frac{9}{2}
\]
Therefore, the claim and Lemma \ref{lem:gkl-bnds} imply
\[
G(tK, \lat_k) \leq (2 \cdot \frac{9}{2} + 1)^n G(3^k \lambda K, \lat_k) \leq 10^n \cdot 7^{n+4} = 2^{O(n)}
\]
as needed. The algorithm returns a valid $(K,t)$ sparsifier for $\lat$.

\paragraph{Runtime:} The algorithm first runs the Shortest-Vectors on $K$ and $\lat$, which takes $2^{O(n)} \poly(\cdot)$ time and $2^n
\poly(\cdot)$ space. Next, the for loop on line $6$ iterates $k = \floor{\ln(\frac{2}{3}\frac{t}{\lambda}+1)/\ln 3} = \poly(\cdot)$ times.  

Each for loop iteration, indexed by $i$ satisfying $0 \leq i \leq k-1$, consists of computations over the set $S \leftarrow$
Lattice-Enum($3^i(1-\eps)\lambda K, \lat_i, \eps \lambda r)$. For the intended implementation, we do not store the set $S$ explicitly. Every time the algorithm needs to iterate over $S$, we implement this by performing a call to 
Lattice-Enum($3^i(1-\eps)\lambda K, \lat_i, \eps \lambda r)$. Furthermore, the algorithm only interacts with $S$ by iterating over its elements, and hence the implemented interface
suffices. Now at the loop iteration indexed by $i$, we do as follows:
\begin{enumerate}
\item Compute $N = |S|$. This is implemented by iterating over the elements of $S$ and counting, and so by the guarantees of Lattice-Enum
requires at most $2^{O(n)} G(3^i \lambda K, \lat_i) \poly(\cdot) = 2^{O(n)} \poly(\cdot)$ time (by Claim 1) and $2^{n} \poly(\cdot)$ space.
\item If $N \leq 1000$, we keep the same lattice and skip to the next loop iteration. If $N > 1000$, continue.
\item Compute $B^*_i = B^{-T}_i$. This can be done in $\poly(\cdot)$ time and space. 
\item Compute a prime $p$ satisfying $N < p < \frac{4N}{3}$. Such a prime can be computed by trying all integers in the previous
range and using trial division. This takes at most $O(N^2 \poly(\log N)) = 2^{O(n)}$ time and $\poly(n)$ space.
\item Call Good-Vector$(B^{T*} S \imod{p\Z^n}, p)$. By the guarantees on Good-Vector, the algorithm performs $\poly(n, \log p) p^4 = 2^{O(n)}$
operations and iterates at most $n p^3 = 2^{O(n)}$ times over the set $B^{T*} S \imod{p\Z^n}$. These iterations can be performed 
$2^{O(n)} \poly(\cdot)$ time and $2^n \poly(\cdot)$ space by the guarantees on Lattice-Enum.
\item Compute a basis $B_{i+1}$ for the new lattice $\lat_{i+1} = \set{\vecy \in \lat_i: \pr{B^{*T}\veca}{\vecy} \equiv 0 \imod{p}}$. This
can be done in $\poly(\cdot)$ time.
\end{enumerate}

From the above analysis, we see that the entire algorithm runs in $2^{O(n)} \poly(\cdot)$ time and $2^n \poly(\cdot)$ space as needed.
\end{proof} 

%
%

\section{Further Applications and Future Directions}
\label{sec:future-directions}
 
\paragraph{Integer Programming.} We explain how the techniques in this paper apply to Integer Programming (IP), i.e.~the problem of deciding whether a
polytope contains an integer point, and discuss some potential associated venues for improving the complexity of IP. For a brief history, the first
breakthrough works on IP are by Lenstra~\cite{lenstra83:_integ_progr_with_fixed_number_of_variab} and
Kannan~\cite{kannan87:_minkow_convex_body_theor_and_integ_progr}, where it was shown that any $n$-variable IP can be solved in $2^{O(n)} n^{2.5n}$
time (with polynomial dependencies on the remaining parameters).  Since then, progress on IP has been slow, though recent complexity improvements have
been made: the dependence on $n$ was reduced to $n^{2n}$~\cite{arxiv/HildebrandK10}, $\tilde{O}(n)^{\frac{4}{3}n}$~\cite{conf/focs/svp/DPV11}, and
finally $n^n$~\cite{thesis/D12}.

Let $K \subseteq \R^n$ denote a polytope. To find an integer point inside $K$, the general outline of the above algorithms is as follows. Pick a center
point $\vecc \in K$, and attempt to ``round'' $\vecc$ to a point in $\Z^n$ inside $K$. If this fails, decompose the integer program on $K$ into
subproblems. Here, the decomposition is generally achieved by partitioning $\Z^n$ along shifts of some rational linear subspace $H$ (often a hyperplane) and
recursing on the integral shifts of $H$ intersecting $K$.

In~\cite{conf/latin/cvp/D12}, an algorithm is given to perform the above rounding step in a ``near optimal'' manner. More precisely, the center
$\vecc$ of $K$ is chosen to be the center of gravity $\vecb$ of $K$ (which can be estimated via random sampling), and rounding $\vecb$ to $\Z^n$ is
done via an approximate CVP computation with target $\vecb$, lattice $\Z^n$, and norm $\|\cdot\|_{K-\vecb}$ (corresponding to scaling $K$ about
$\vecb(K)$). Here the AKS randomized sieve is used to perform the approximate CVP computation, which is efficient due to the fact that $K-\vecb$ is
near-symmetric (see \cite{MP00}). Let $\vecy \in \Z^n$ be the returned $(1+\eps)$-CVP solution, and assume that $\vecy$ is correctly computed (which
occurs with high probability). We can now examine the following cases. If $\vecy \in K$, we have solved the IP. If $\|\vecy-\vecb\|_{K-\vecb} >
(1+\eps)$, then by the guarantee on $\vecy$, for any $\vecz \in \Z^n$ we have that $\|\vecz-\vecb\|_{K-\vecb} > 1 \Leftrightarrow \vecz \notin K$.
Hence, we can immediately decide that $K \cap \Z^n = \emptyset$. Lastly, if $1 < \|\vecy-\vecb\|_{K-\vecb} \leq (1+\eps)$, we know that
$\frac{1}{1+\eps}K + \frac{\eps}{1+\eps}\vecb$ is integer free while $(1+\eps) K-\eps\vecb$ contains $\vecy$. In this final case, we are in
essentially a near-optimal situation for computing a ``good'' decomposition (using the so-called ``flatness'' theorems in the geometry of numbers).
We note with previous methods (i.e.~using only symmetric norm or $\ell_2$ techniques), the ratio of scalings between the integer free and non integer
free case was $O(n)$ in the worst case as opposed to $(1+\eps)^2$ (here $\eps$ can be any constant $\leq 1$). 

With the techniques in this paper, we note that the above rounding procedure can be made Las Vegas (i.e.~no probability of error, randomized running
time) by replacing the AKS Sieve with our new DPV based solver (randomness is still needed to estimate the center of gravity). This removes any
probability of error in the above inferences, making the above rounding algorithm easier to apply in the IP setting. We note that the geometry induced
by the above rounding procedure is currently poorly understood, and very little of it is being exploited by IP algorithms. One hope for improving the
complexity of IP with the above methods, is that with a strong rounding procedure as above one maybe able to avoid the worst case bounds on the number
of subproblems created at every recursion node. Currently, the main way to show that $K$ admits a small decomposition into subproblems is to show that
the covering radius of $K$ (i.e.~the minimum scaling such that every shift of $K$ intersects $\Z^n$) is large. Using the above techniques, we easily
get that in the final case the covering radius is $\geq \frac{1}{1+\eps}$ (since $\frac{1}{1+\eps}K + \frac{\eps}{1+\eps}\vecb$ is integer free),
however in reality the covering radius could be much larger (yielding smaller decompositions). Here, an interesting direction would be to try and show
that on the aggregate (over all subproblems), the covering radii of the nodes must grow as we go down the recursion tree. This would allow us to show
that as we descend the recursion tree, the branching factor shrinks quickly, allowing us to get better bounds on the size of the recursion tree (which
yields the dominant complexity term for current IP algorithms).

\paragraph{CVP under $\ell_\infty$.} While the ideas presented here do not seem to be practically implementable in general (at least
currently), there are special cases where the overhead incurred by our approach maybe acceptable. One potential target is solving
$(1+\eps)$-CVP under $\ell_\infty$. This is one of the most useful norms that is often approximated by $\ell_2$ for lack of a better
alternative. 

As an example, in~\cite{conf/arith/BrisebarreC07}, they reduce the problem of computing machine efficient polynomial approximations
(i.e.~having small coefficient sizes) of $1$ dimensional functions to CVP under $\ell_\infty$. The goal in this setting is to generate a
high quality approximation that is suitable for hardware implementation or for use in a software library, and hence spending considerable
computational resources to generate it is justified.  

We now explain why the $\ell_\infty$ norm version of our algorithms maybe suitable for practical implementation (or at least efficient
``heuristic'' implementation). Most importantly, for $\ell_\infty$ the DPV lattice point enumerator is trivial to implement. In
particular, to enumerate the lattice points in a cube, one simply enumerates the points in the outer containing ball and retains those in
the cube. Second, if one is comfortable with randomization, the sparsifier can be constructed by adding a simple random modular form to the
base lattice. For provable guarantees, the main issue is that the modulus must be carefully chosen (see Section~\ref{sec:lat-spar-rand}),
however it seems plausible that in practice an appropriate modulus may be guessed heuristically.

\bibliographystyle{alphaabbrvprelim}
\bibliography{lattices,crypto,cg,acg,extras}

\begin{thebibliography}{DKRS98}
\expandafter\ifx\csname urlstyle\endcsname\relax
  \providecommand{\doi}[1]{doi:\discretionary{}{}{}#1}\else
  \providecommand{\doi}{doi:\discretionary{}{}{}\begingroup
  \urlstyle{rm}\Url}\fi

\bibitem[ABSS93]{DBLP:journals/jcss/AroraBSS97}
S.~Arora, L.~Babai, J.~Stern, and Z.~Sweedyk.
\newblock The hardness of approximate optima in lattices, codes, and systems of
  linear equations.
\newblock \emph{J. Comput. Syst. Sci.}, 54(2):317--331, 1997.
\newblock Preliminary version in FOCS 1993.

\bibitem[AJ08]{DBLP:conf/fsttcs/ArvindJ08}
V.~Arvind and P.~S. Joglekar.
\newblock Some sieving algorithms for lattice problems.
\newblock In \emph{FSTTCS}, pages 25--36. 2008.

\bibitem[Ajt98]{DBLP:conf/stoc/Ajtai98}
M.~Ajtai.
\newblock The shortest vector problem in {$L_{2}$} is {NP}-hard for randomized
  reductions (extended abstract).
\newblock In \emph{STOC}, pages 10--19. 1998.

\bibitem[AKS01]{DBLP:conf/stoc/AjtaiKS01}
M.~Ajtai, R.~Kumar, and D.~Sivakumar.
\newblock A sieve algorithm for the shortest lattice vector problem.
\newblock In \emph{STOC}, pages 601--610. 2001.

\bibitem[AKS02]{DBLP:conf/coco/AjtaiKS02}
M.~Ajtai, R.~Kumar, and D.~Sivakumar.
\newblock Sampling short lattice vectors and the closest lattice vector
  problem.
\newblock In \emph{IEEE Conference on Computational Complexity}, pages 53--57.
  2002.

\bibitem[Bab85]{DBLP:journals/combinatorica/Babai86}
L.~Babai.
\newblock On {Lov{\'a}sz'} lattice reduction and the nearest lattice point
  problem.
\newblock \emph{Combinatorica}, 6(1):1--13, 1986.
\newblock Preliminary version in STACS 1985.

\bibitem[BC07]{conf/arith/BrisebarreC07}
N.~Brisebarre and S.~Chevillard.
\newblock Efficient polynomial l8 approximations.
\newblock In \emph{In Proceedings of the 18th IEEE Symposium on Computer
  Arithmetic}, pages 169--176. Society Press, 2007.

\bibitem[BN07]{DBLP:conf/icalp/BlomerN07}
J.~Bl{\"o}mer and S.~Naewe.
\newblock Sampling methods for shortest vectors, closest vectors and successive
  minima.
\newblock In \emph{ICALP}, pages 65--77. 2007.

\bibitem[CN98]{DBLP:journals/jcss/CaiN99}
J.-Y. Cai and A.~Nerurkar.
\newblock Approximating the {SVP} to within a factor (1+1/dim$^{\epsilon}$) is
  {NP}-hard under randomized reductions.
\newblock \emph{J. Comput. Syst. Sci.}, 59(2):221--239, 1999.
\newblock Preliminary version in CCC 1998.

\bibitem[Dad12a]{thesis/D12}
D.~Dadush.
\newblock \emph{Integer Programming, Lattice Algorithms, and Deterministic
  Volume Estimation}.
\newblock Ph.D. thesis, Georgia Institute of Technology, 2012.

\bibitem[Dad12b]{conf/latin/cvp/D12}
D.~Dadush.
\newblock A $o(\frac{1}{\eps^2})$-time algorithm for approximate integer
  programming.
\newblock In \emph{LATIN 2012}. 2012.

\bibitem[Din00]{DBLP:journals/tcs/Dinur02}
I.~Dinur.
\newblock Approximating {SVP}$_{\infty}$ to within almost-polynomial factors is
  {NP}-hard.
\newblock \emph{Theor. Comput. Sci.}, 285(1):55--71, 2002.
\newblock Preliminary version in CIAC 2000.

\bibitem[DKRS98]{DBLP:journals/combinatorica/DinurKRS03}
I.~Dinur, G.~Kindler, R.~Raz, and S.~Safra.
\newblock Approximating {CVP} to within almost-polynomial factors is {NP}-hard.
\newblock \emph{Combinatorica}, 23(2):205--243, 2003.
\newblock Preliminary version in FOCS 1998.

\bibitem[DPV11]{conf/focs/svp/DPV11}
D.~Dadush, C.~Peikert, and S.~Vempala.
\newblock Enumerative lattice algorithms in any norm via m-ellipsoid coverings.
\newblock In \emph{FOCS}. 2011.

\bibitem[DV12]{arxiv/m-ell/DV12}
D.~Dadush and S.~Vempala.
\newblock Near optimal volume deterministic algorithms and lattice algorithms
  via m-ellipsoids.
\newblock Arxiv, Report 1201.5972, 2012.
\newblock \url{http://arxiv.org}.

\bibitem[EHN11]{conf/socg/EisenbrandHN11}
F.~Eisenbrand, N.~H\"{a}hnle, and M.~Niemeier.
\newblock Covering cubes and the closest vector problem.
\newblock In \emph{Proceedings of the 27th annual ACM symposium on
  Computational geometry}, SoCG '11, pages 417--423. 2011.

\bibitem[FT87]{journal/combinatorica/FrankT87}
A.~Frank and v.~Tardos.
\newblock An application of simultaneous diophantine approximation in
  combinatorial optimization.
\newblock \emph{Combinatorica}, 7:49--65, 1987.

\bibitem[GMSS99]{DBLP:journals/ipl/GoldreichMSS99}
O.~Goldreich, D.~Micciancio, S.~Safra, and J.-P. Seifert.
\newblock Approximating shortest lattice vectors is not harder than
  approximating closest lattice vectors.
\newblock \emph{Inf. Process. Lett.}, 71(2):55--61, 1999.

\bibitem[Hel85]{journal/tcs/Helfrich85}
B.~Helfrich.
\newblock Algorithms to construct minkowski reduced and hermite reduced lattice
  bases.
\newblock \emph{Theoretical Computer Science}, 41:125--139, 1985.

\bibitem[HK10]{arxiv/HildebrandK10}
R.~Hildebrand and M.~K{\"o}ppe.
\newblock A faster algorithm for quasi-convex integer polynomial optimization.
\newblock Arxiv, Report 1006.4661, 2010.
\newblock \url{http://arxiv.org}.

\bibitem[HPS11]{conf/cc/HPS11}
G.~Hanrot, X.~Pujol, and D.~Stehl{\'e}.
\newblock Algorithms for the shortest and closest lattice vector problems.
\newblock In \emph{Proceedings of the Third international conference on Coding
  and Cryptology}, IWCC'11, pages 159--190. Springer-Verlag, Berlin,
  Heidelberg, 2011.

\bibitem[HR07]{DBLP:conf/stoc/HavivR07}
I.~Haviv and O.~Regev.
\newblock Tensor-based hardness of the shortest vector problem to within almost
  polynomial factors.
\newblock In \emph{STOC}, pages 469--477. 2007.

\bibitem[HS07]{conf/crypto/HanrotS07}
G.~Hanrot and D.~Stehlé.
\newblock Improved analysis of kannan's shortest lattice vector algorithm.
\newblock In \emph{In CRYPTO}, pages 170--186. 2007.

\bibitem[JS98]{DBLP:journals/joc/JouxS98}
A.~Joux and J.~Stern.
\newblock Lattice reduction: A toolbox for the cryptanalyst.
\newblock \emph{J. Cryptology}, 11(3):161--185, 1998.

\bibitem[Kan87]{kannan87:_minkow_convex_body_theor_and_integ_progr}
R.~Kannan.
\newblock {Minkowski's} convex body theorem and integer programming.
\newblock \emph{Mathematics of operations research}, 12(3):415--440, August
  1987.
\newblock 1987.

\bibitem[Kan92]{journal/combinatorica/Kannan92}
R.~Kannan.
\newblock Lattice translates of a polytope and the frobenius problem.
\newblock \emph{Combinatorica}, 12:161--177, 1992.

\bibitem[Kho03]{DBLP:journals/jcss/Khot06}
S.~Khot.
\newblock Hardness of approximating the shortest vector problem in high {\it
  $l_{p}$} norms.
\newblock \emph{J. Comput. Syst. Sci.}, 72(2):206--219, 2006.
\newblock Preliminary version in FOCS 2003.

\bibitem[Kho04]{DBLP:journals/jacm/Khot05}
S.~Khot.
\newblock Hardness of approximating the shortest vector problem in lattices.
\newblock \emph{J. ACM}, 52(5):789--808, 2005.
\newblock Preliminary version in FOCS 2004.

\bibitem[KR95]{manuscript/KR95}
M.~Kaib and H.~Ritter.
\newblock Block reduction for arbitrary norms.
\newblock Manuscript, 1995.

\bibitem[Len83]{lenstra83:_integ_progr_with_fixed_number_of_variab}
H.~W. Lenstra.
\newblock Integer programming with a fixed number of variables.
\newblock \emph{Mathematics of Operations Research}, 8(4):538--548, November
  1983.

\bibitem[LLL82]{lenstra82:_factor}
A.~K. Lenstra, H.~W. Lenstra, Jr., and L.~Lov\'{a}sz.
\newblock Factoring polynomials with rational coefficients.
\newblock \emph{Mathematische Annalen}, 261(4):515--534, December 1982.

\bibitem[LS92]{lovasz92}
L.~Lov\'{a}sz and H.~E. Scarf.
\newblock The generalized basis reduction algorithm.
\newblock \emph{Math. Oper. Res.}, 17(3):751--764, 1992.
\newblock ISSN 0364-765X.

\bibitem[MG02]{MGBook}
D.~Micciancio and S.~Goldwasser.
\newblock \emph{Complexity of Lattice Problems: a cryptographic perspective},
  volume 671 of \emph{The Kluwer International Series in Engineering and
  Computer Science}.
\newblock Kluwer Academic Publishers, Boston, Massachusetts, 2002.

\bibitem[Mic98]{DBLP:journals/siamcomp/Micciancio00}
D.~Micciancio.
\newblock The shortest vector in a lattice is hard to approximate to within
  some constant.
\newblock \emph{SIAM J. Comput.}, 30(6):2008--2035, 2000.
\newblock Preliminary version in FOCS 1998.

\bibitem[MP00]{MP00}
V.~Milman and A.~Pajor.
\newblock Entropy and asymptotic geometry of non-symmetric convex bodies.
\newblock \emph{Advances in Mathematics}, 152(2):314 -- 335, 2000.

\bibitem[MV10]{DBLP:conf/stoc/MicciancioV10}
D.~Micciancio and P.~Voulgaris.
\newblock A deterministic single exponential time algorithm for most lattice
  problems based on {Voronoi} cell computations.
\newblock In \emph{STOC}, pages 351--358. 2010.

\bibitem[Nar00]{Nark00}
W.~Narkiewicz.
\newblock \emph{The Development of Prime Number Theory, From Euclid to Hardy
  and Littlewood}.
\newblock Springer, 2000.

\bibitem[NS01]{DBLP:conf/calc/NguyenS01}
P.~Q. Nguyen and J.~Stern.
\newblock The two faces of lattices in cryptology.
\newblock In \emph{CaLC}, pages 146--180. 2001.

\bibitem[Odl90]{odlyzko90:_rise_and_fall_of_knaps_crypt}
A.~M. Odlyzko.
\newblock The rise and fall of knapsack cryptosystems.
\newblock In C.~Pomerance, editor, \emph{Cryptology and Computational Number
  Theory}, volume~42 of \emph{Proceedings of Symposia in Applied Mathematics},
  pages 75--88. 1990.

\bibitem[RR06]{DBLP:conf/stoc/RegevR06}
O.~Regev and R.~Rosen.
\newblock Lattice problems and norm embeddings.
\newblock In \emph{STOC}, pages 447--456. 2006.

\bibitem[RS62]{journal/ijm/RS62}
J.~B. Rosser and L.~Schoenfeld.
\newblock Approximate formulas for some functions of prime numbers.
\newblock \emph{Illinois J. Math.}, 6:64--94, 1962.

\bibitem[Sch87]{DBLP:journals/tcs/Schnorr87}
C.-P. Schnorr.
\newblock A hierarchy of polynomial time lattice basis reduction algorithms.
\newblock \emph{Theor. Comput. Sci.}, 53:201--224, 1987.

\bibitem[Sch91]{conf/eurocrypt/Schnorr91}
C.~Schnorr.
\newblock Factoring integers and computing discrete logarithms via diophantine
  approximation.
\newblock In D.~Davies, editor, \emph{In EUROCRYPT}, Lecture Notes in Computer
  Science, pages 281--293. 1991.

\bibitem[VB99]{journal/TransIT/ViterboB99}
E.~Viterbo and J.~Boutros.
\newblock A universal lattice code decoder for fading channels.
\newblock \emph{IEEE Transations on Information Theory}, 45:1639--1642, 1999.

\bibitem[vEB81]{emde81:_anoth_np}
P.~van Emde~Boas.
\newblock Another {NP}-complete problem and the complexity of computing short
  vectors in a lattice.
\newblock Technical Report 81-04, University of Amsterdam, 1981.

\end{thebibliography}

\appendix

\section{Covering Bound}
\label{sec:appendix-1}

In this section, we prove the basic covering bound stated in Lemma \ref{lem:gkl-bnds}.

%
%


For a set $A \subseteq \R^n$, let ${\rm int}(A)$ denote the interior of $A$. For convex bodies $A,B \subseteq \R^n$, we define the
covering number $N(A,B) = \inf \set{|\Lambda|: \Lambda \subseteq \R^n, A \subseteq \Lambda + B}$, i.e.~the minimum number of translates of $B$ needed to cover $A$. We will require the following standard inequality on the covering number.
 
\begin{lemma}
\label{lem:cov-est}
Let $A,B  \subseteq \R^n$ be convex bodies, where $B$ is symmetric. Then
\[
   N(A,B) \leq \frac{\vol_n(A + B/2)}{\vol_n(B/2)}.
\]
\end{lemma}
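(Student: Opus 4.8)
The plan is to use the standard volumetric packing argument. First I would choose a maximal subset $\{x_1,\dots,x_N\} \subseteq A$ such that the translates $x_1 + B/2,\dots,x_N + B/2$ are pairwise disjoint. Such a maximal collection exists and is finite: since $A$ and $B$ are bounded, any family of pairwise disjoint translates of $B/2$ with centers in $A$ lies inside the bounded set $A + B/2$, so by comparing volumes such a family has at most $\vol_n(A + B/2)/\vol_n(B/2)$ members; taking a family of maximum cardinality yields the desired maximal one.

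Next I would show that $A \subseteq \bigcup_{i=1}^N (x_i + B)$, which immediately gives $N(A,B) \leq N$ (the set $\{x_1,\dots,x_N\}$ is an admissible $\Lambda$ in the definition of $N(A,B)$). Indeed, for any $y \in A$, maximality forces $(y + B/2) \cap (x_i + B/2) \neq \emptyset$ for some $i$ (otherwise $y$ could be adjoined to the family, contradicting maximality). Hence $y - x_i \in B/2 - B/2$, and since $B$ is symmetric and convex, $B/2 - B/2 = B/2 + B/2 = B$, so $y \in x_i + B$.

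Finally I would bound $N$ by volume. The sets $x_i + B/2$ are pairwise disjoint and each is contained in $A + B/2$ (as $x_i \in A$), so
\[
N \cdot \vol_n(B/2) = \vol_n\!\left(\bigcup_{i=1}^N (x_i + B/2)\right) \leq \vol_n(A + B/2).
\]
Combining this with $N(A,B) \leq N$ yields $N(A,B) \leq \vol_n(A + B/2)/\vol_n(B/2)$, as claimed.

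This argument has no genuine obstacle; the two points requiring a little care are the existence and finiteness of a maximal packing (which is itself controlled by the target volume bound, so it is not circular) and the use of the symmetry of $B$ in the identity $B/2 - B/2 = B$ — without symmetry one would only obtain a covering by translates of $B/2 + (-B/2)$, which need not equal $B$.
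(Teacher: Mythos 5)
Your proof is correct and follows essentially the same volumetric packing argument as the paper: take a maximal family of pairwise-disjoint translates of $B/2$ centered in $A$, use maximality and $B/2 - B/2 = B$ (via symmetry) to get a covering by translates of $B$, and bound the size of the family by comparing volumes inside $A + B/2$. The only difference is that you make the finiteness of the maximal packing explicit, which the paper leaves implicit.
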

\begin{proof}
Let $T \subseteq A$ be any maximal set of points such that for all distinct $\vecx,\vecy \in T$, $(\vecx+B/2) \cap (\vecy+B/2) = \emptyset$.
We claim that $A \subseteq T + B$. For any $\vecz \in A$, note by maximality of $T$ that there exists $\vecx \in T$ such that $(\vecz+B/2) \cap (\vecx+B/2) \neq \emptyset$. 
Therefore $\vecz \in \vecx + B/2 - B/2 = \vecx + B$, as needed. 

Since $T+B/2$ corresponds to $|T|$ disjoint translates of $B/2$, we have that
\[
|T| \vol_n(B/2) = \vol_n(T+B/2) \leq \vol_n(A+B/2) \text{.}
\]
Rearranging the above inequality yields the lemma.
\end{proof}

\begin{proof}[Proof of Lemma \ref{lem:gkl-bnds}]
We prove the bound on $G(dK,\lat)$ in terms of $\lambda_1(K \cap -K,\lat)$. \newline Let $s = \frac{1}{2} \lambda_1(K \cap -K,\lat)$. For $\vecx \in \lat$, we examine
\[ \vecx + \mathrm{int}(s(K \cap -K)) = \set{\vecz \in \R^n: \length{\vecz-\vecx}_{K
\cap -K} < s}.\] Now for $\vecx, \vecy \in \lat$, $\vecx \neq \vecy$, we claim that
\begin{equation}
\vecx + \mathrm{int}(s(K \cap -K)) \cap \vecy + \mathrm{int}(s(K \cap -K)) = \emptyset
\label{eq:no-int-1}
\end{equation}
Assume not, then $\exists ~ \vecz \in \R^n$ such that
$\length{\vecz-\vecx}_{ K\cap-K },\length{\vecz-\vecy}_{ K \cap -K} < s$. Since $K \cap -K$ is symmetric, we note
that $\length{\vecy-\vecz}_{K \cap -K}  = \length{\vecz-\vecy}_{K \cap -K} < s$. But then we have that
\begin{align*}
\length{\vecy-\vecx}_{K \cap -K} &= \length{\vecy-\vecz+\vecz-\vecx}_{K \cap -K} \leq \length{\vecy-\vecz}_{K \cap -K} + \length{\vecz-\vecx}_{K \cap -K} \\
&< s + s = 2s = \lambda_1(K \cap -K,\lat),
\end{align*}
a clear contradiction since $\vecy-\vecx \neq 0$.

Take $\vecc \in \R^n$. To bound $G(dK,\lat)$ we must bound $|(\vecc + dK) \cap \lat|$. For $\vecx \in \vecc + dK$, we note that $\vecx + s(K \cap
-K) \subseteq \vecc + (d+s)K$. Therefore,
\begin{align*}
\vol_n((d+s) K) &= \vol_n(\vecc + (d+s)K) \geq \vol_n\left( ((\vecc + dK) \cap \lat) + s(K \cap -K) \right) \\
	  &= |(\vecc + dK) \cap \lat| \vol_n(s(K \cap -K))
\end{align*}
where the last equality follows from $(\ref{eq:no-int-1})$. Therefore, we have that
\[
|(\vecc + dK) \cap \lat| \leq \frac{\vol_n((d+s)K)}{\vol_n(s(K \cap -K))} = \left(\frac{d+s}{\gamma s}\right)^n = \gamma^{-n} \left(1 +
\frac{2d}{\lambda_1(K \cap -K,\lat)}\right)^n
\]
as needed.

We prove the bound on $G(dK,\lat)$ in terms of $|(K \cap -K) \cap \lat|$.  Examine $dK+\vecx$. Let $\vecy_1,\dots,\vecy_N \in (tK+\vecx) \cap \lat$, denote a maximal
collection of points such that the translates $\vecy_i + \frac{1}{2}(K \cap -K)$, $i \in [N]$, are interior disjoint. We claim
that $(dK+\vecx) \cap \lat \subseteq \cup_{i=1}^N \vecy_i + (K \cap -K)$. Take $\vecz \in (dK+\vecx) \cap \lat$. Then by construction
of $\vecy_1,\dots,\vecy_N$, there exists $i \in [N]$ such that 
\[
\vecz + \frac{1}{2}(K \cap -K) \cap \vecy_i + \frac{1}{2}(K \cap -K) \neq \emptyset \Rightarrow \vecz \in \vecy_i + (K \cap -K)
\]
as needed. Therefore $|(dK+\vecx) \cap \lat| \leq \sum_{i=1}^n |(\vecy_i + (K \cap -K)) \cap \lat| = N |(K \cap -K) \cap \lat|$.
Since $K$ is $\gamma$-symmetric, we get that
\[
N = \frac{\vol_n(\cup_{i=1}^n \vecy_i + \frac{1}{2}(K \cap -K))}{\vol_n(\frac{1}{2}(K \cap -K))} 
    \leq 2^n\gamma^{-n} \frac{\vol_n(dK + \frac{1}{2}(K \cap -K))}{\vol_n(K)} \leq \gamma^{-n} (2d+1)^n
\]
as needed. Since the above bound holds for all $\vecx \in \R^n$, we get that \newline $G(tK,\lat) \leq \gamma^{-n} (2d+1)^n \cdot |(K \cap -K) \cap \lat|$ as needed.
\end{proof}

\end{document}
